\newcommand{\longversion}[1]{}
\newcommand{\timesn}[1]{#1_{ \infty} } 
\newcommand{\pimor}{f}
\newcommand{\logbasicb}{{\sf ITL} _{\ubox}}
\newcommand{\logexpb}{{\sf CDTL}_{\ubox}}
\newcommand{\loghomeob}{{\sf ITL}^{+}_{\ubox}}
\newcommand{\logpersb}{{\sf CDTL}^{+}_{\ubox}}
\newcommand{\logrealaxa}{{\sf RTL}_{\diam{\khence}} }
\newcommand{\logrnaxa}{{\sf ETL}_{\diam{\khence}} }
\newcommand{\logrnfsa}{{\sf ETL}^+_{\diam{\khence}}}
\newcommand{\logbasica}{{\sf ITL} _{\diam{\khence}}}
\newcommand{\logexpa}{{\sf CDTL}_{\diam{\khence}}}
\newcommand{\loghomeoa}{{\sf ITL}^{+}_{\diam{\khence}}}
\newcommand{\logpersa}{{\sf CDTL}^{+}_{\diam{\khence}}}
\newcommand{\ignore}[1]{}
\newcommand{\AxCons}[2]{{\rm CD}(#1,#2)}
\newcommand{\AxBInd}[2]{{\rm BI}(#1,#2)}
\newcommand{\AxConsm}[1]{{\rm CD}^-(#1 )}
\newcommand{\logrealax}{{\sf RTL}_{\diam\ubox} }
\newcommand{\logrnax}{{\sf ETL}_{\diam\ubox} }
\newcommand{\logrnfs}{{\sf ETL}^+_{\diam\ubox}}
\newcommand{\wlogbasic}{{\sf ITL}^0_{\diam\ubox}}
\newcommand{\wboxfix}{\rm WH}
\newcommand{\logbasic}{{\sf ITL} _{\diam\ubox}}
\newcommand{\loghomeo}{{\sf ITL}^+_{\diam\ubox}}
\newcommand{\logexp}{{\sf CDTL}_{\diam\ubox} }
\newcommand{\logpers}{{\sf CDTL}^+_{\diam\ubox}}
\newcommand{\khence}{\boxast}
\newcommand{\cl }{\mathcal }
\newcommand{\itlc}{{\sf ITL}^{\sf c}_{\diam\ubox}}
\newcommand{\itle}{{\sf ITL}^{\sf e}_{\diam\ubox}}
\newcommand{\itlp}{{\sf ITL}^{\sf p}_{\diam\ubox}}
\newcommand{\itlo}{{\sf ITL}^{\sf o}_{\diam\ubox}}
\newcommand{\itleb}{{\sf ITL}^{\sf e}_{ \ubox}}
\newcommand{\itlpb}{{\sf ITL}^{\sf p}_{ \ubox}}
\newcommand{\itlca}{{\sf ITL}^{\sf c}_{\diam{{\khence}}}}
\newcommand{\tnext}{{\ocircle}}
\newcommand{\mtnext}{\circ}
\newcommand{\ubox}{\Box}
\newcommand{\diam}{\Diamond}
\newcommand{\val}[1]{\lb #1 \rb}
\newcommand{\peq}{\preccurlyeq}
\newcommand{\seq}{\succcurlyeq}
\newcommand{\acc}{\peq}
\def\lb{\left\llbracket}
\def\rb{\right\rrbracket}
\def\<{\left (}
\def\>{\right )}
\def\({\left (}
\def\){\right )}
\def\cbra{\left \{}
\def\cket{\right \}}
\def\eqdef{\stackrel{\rm def}{=}}
\title{Exploring the Jungle of Intuitionistic Temporal Logics}
\author[Boudou et al.]{JOSEPH BOUDOU\\
					   IRIT, Toulouse University, Toulouse, France.\\ \email{joseph.boudou@matabio.fr}\\
					   \and MART\'IN DI\'EGUEZ\\
					   LERIA. University of Angers. Angers, France.\\ \email{martin.dieguezlodeiro@univ-angers.fr} \\
					   \and DAVID FERN\'ANDEZ-DUQUE\\
					   Department of Mathematics, Ghent University. Ghent, Belgium.\\\email{David.FernandezDuque@UGent.be}\\
				\and	   PHILIP KREMER\textsuperscript{\thanks{This
research was supported by the Social Sciences and Humanities Research
Council of Canada.}}\\
					   Department of Philosophy, University of Toronto, Toronto, Canada.\\ \email{philip.kremer@utoronto.ca}}
\newtheorem{theorem}{Theorem}[section]
\newtheorem{definition}[theorem]{Definition}
\newtheorem{example}[theorem]{Example}
\newtheorem{lemma}[theorem]{Lemma}
\newtheorem{corollary}[theorem]{Corollary}
\newtheorem{proposition}[theorem]{Proposition}
\newtheorem{remark}[theorem]{Remark}
\newtheorem{question}{Question}
\begin{document}

\maketitle


\begin{abstract}
The importance of intuitionistic temporal logics in Computer Science and Artificial Intelligence has become increasingly clear in the last few years. 
From the proof-theory point of view, intuitionistic temporal logics have made it possible to extend functional programming languages with new features via type theory, while from the semantics perspective
several logics for reasoning about dynamical systems and several semantics for logic programming have their roots in this framework.
We consider several axiomatic systems for intuitionistic linear temporal logic and show that each of these systems is sound for a class of structures based either on Kripke frames or on dynamic topological systems.
We provide two distinct interpretations of `henceforth', both of which are natural intuitionistic variants of the classical one.
We completely establish the order relation between the semantically-defined logics based on both interpretations of `henceforth', and, using our soundness results, show that the axiomatically-defined logics enjoy the same order relations.
Under consideration in Theory and Practice of Logic Programming (TPLP).
\end{abstract}

\section{Introduction}

Intuitionistic logic ($\sf IL$) e.g.~\cite{Heyting1930,MintsInt} enjoys a myriad of interpretations based on computation, information or topology, making it a natural framework to reason about dynamic processes in which these phenomena play a crucial role.
Thus, it should not be surprising that combinations of intuitionistic logic and linear temporal logic ($\sf LTL$)~\cite{Pnu77} have been proposed for applications within several different contexts:

\paragraph{Types for functional programming languages.} The Curry-Howard correspondence identifies intuitionistic proofs with the $\lambda$-terms of functional programming~\cite{Howard80}. Several extensions of the $\lambda$-calculus with operators from $\sf LTL$ have been proposed in order to introduce new features to functional programming languages:
\cite{Davies96,Davies-2017} has suggested adding a `next' ($\tnext$) operator to $\sf IL$ in order to define the type system $\lambda^\tnext$, which allows extending functional programming languages with  \emph{staged computation}\footnote{\emph{Staged computation} is a technique that allows dividing the computation in order to exploit the early availability of some arguments.}~\cite{Ershov77}.
\cite{Davies01} proposed the functional programming language ${\sf Mini\text{-}ML}^\Box$ which is supported by intuitionistic $\sf S4$ and allows capturing complex forms of staged computation as well as runtime code generation.
\cite{Yuse2006} later extended $\lambda^\tnext$ to $\lambda^\Box$ by incorporating the `henceforth' operator ($\ubox$), useful for modelling persistent code that can be executed at any subsequent state.  

\paragraph{Semantics for dynamical processes.} Intuitionistic temporal logics have been proposed as a tool for modelling semantically-given processes.
\cite{Maier2004Heyting} observed that an intuitionistic temporal logic with `henceforth' and `eventually'\footnote{In this paper, `eventually' should be understood as `occurring at least once, either now or in the future,' while `henceforth' should be understood as `from now on.'}  ($\diam$) could be used for reasoning about safety and liveness conditions in possibly-terminating reactive systems, and
\cite{FernandezITLc} has suggested that a logic with `eventually' can be used to provide a decidable framework in which to reason about topological dynamics.

\paragraph{Temporal answer set programming.} In the areas of {nonmonotonic reasoning,} {knowledge representation (KR),} and {artificial intelligence,} intuitionistic and intermediate logics have played an important role within the successful {answer set programming (ASP)}~\cite{Brewka11} paradigm for KR, leading to several extensions of modal ASP~\cite{CP07} that are supported by intuitionistic-based modal logics like \emph{temporal here and there}~\cite{BalbianiDieguezJelia}.
\bigskip

Despite interest in the above applications, there is a large gap to be filled regarding our understanding of the computational behaviour of intuitionistic temporal logics.
We have successfuly employed semantical methods to show the decidability of the logic $\sf ITL^e$ defined by a natural class of Kripke frames \cite{BoudouCSL} and shown that these semantics correspond to a natural calculus over the $\ubox$-free fragment \cite{DieguezCompleteness}.
However, as we will see, in the presence of $\ubox$, new validities arise which may be undesirable from the point of view of an extended Curry-Howard isomorphism.
Thus, our goal is to provide semantics for weaker axiomatically-defined intuitionistic temporal logics in order to provide tools for understanding their computational behaviour.
We demonstrate the power of our semantics by separating several natural axiomatically-given calculi, which in particular answers in the negative a conjecture of \cite{Yuse2006} that the Gentzen-style and the Hilbert-style calculi presented there prove the same set of formulas.

There have already been some notable efforts towards a semantical study of intuitionistic temporal logics.
\cite{KojimaNext} endowed Davies's logic with Kripke semantics and provided a complete deductive system.
Bounded-time versions of logics with henceforth were later studied by \cite{KamideBounded}.
Both use semantics based on Simpson's bi-relational models for intuitionistic modal logic \cite{Simpson94}.
Since then, \cite{BalbianiDieguezJelia} have shown that temporal here-and-there is decidable and enjoys a natural axiomatization.
Topological semantics for intuitionistic modal and tense logics have also been studied by \cite{Davoren2009,DavorenIntuitionistic}, and \cite{KremerIntuitionistic} suggested a topologically-defined intuitionistic variant of $\sf LTL$ with $\tnext$ and an intuitionistic variant of `henceforth', which we will denote by ${\khence}$.
Kremer's proposal was never published, and is presented from the axiomatic perspective in Section \ref{SecBasic} and its semantics in Section~\ref{SecTopre}. The decidability of the logic of the weak semantics remains open, but \cite{FernandezITLc} has shown that a similar logic with `eventually' $\diam$ instead of $\ubox$ is decidable.
%
%
\medskip

In this paper we lay the groundwork for an axiomatic treatment of intuitionistic linear temporal logics.
We will introduce a `basic' intuitionistic temporal logic, $\logbasic$, defined by adding standard axioms of $\sf LTL$ to intuitionistic modal logic (see Section \ref{SecBasic} for details).
We also consider additional Fischer Servi axioms ({\rm FS}), a `constant domain' axiom ${\rm CD}:=\ubox (p\vee q) \to \ubox p \vee \diam q$, and a `conditional excluded middle' axiom ${\rm CEM}:=( \neg \tnext p \wedge \tnext \neg \neg p ) \rightarrow ( \tnext q \vee \neg \tnext q )$.
Combining these, we obtain seven intuitionistic temporal logics.
Logics with the constant domain axiom are sound for their Kripke semantics, given by the class of dynamical systems based on a poset, also called {\em expanding posets.}
In the setting of Kripke semantics, the Fischer Servi axioms correspond to backwards-confluence of the transition function.

The constant domain axiom is not derivable from the others, and to show this, we will consider topological semantics for intuitionistic temporal logic.
This is in contrast to the setup in \cite{BalbianiToCL}, where all logics are based on expanding posets and hence validate $\rm CD$.
The crucial difference between working with expanding posets vs.~topological semantics is that the standard interpretation of `henceforth' in classical $\sf LTL$ may readily be applied to the setting of expanding posets, whereas in the topological setting, it requires some modification.

Recall that topological spaces are pairs $(X,\mathcal T)$, where $\mathcal T$ is a collection of subsets of $X$ closed under unions and finite intersections (see Section \ref{SecSemantics}).
Elements of $\mathcal T$ are called {\em open sets.}
In the topological semantics of intuitionistic logic, each proposition $\varphi$ must be interpreted as an open set $\val\varphi\subseteq X$.
In order to interpret tenses, we equip $(X,\mathcal T)$ with a continuous function $S\colon X\to X$.
The classical semantics for next and eventually yield well-defined operations in this setting: for example, we define $\val{\tnext\varphi} = S^{-1}\val\varphi$, which amounts to the standard definition where $x\in \val{\tnext\varphi}$ iff $S(x) \in \val \varphi$.
The continuity of $S$ ensures that $\val{\tnext\varphi}$ is an open set whenever $\val{ \varphi}$ is (recall that by definition, $S$ is continuous iff preimages of open sets are open). 
Similarly, setting $x\in \val{\diam\varphi}$ iff there is $n\geq 0$ such that $S^n(x)\in\val\varphi$ ensures that $\val{\diam\varphi}$ will always be open.

However, the classical definition of $\val{\ubox\varphi}$ would have that $x\in\val{\ubox\varphi}$ iff $S^{n}(x) \in \val\varphi$ for all $n\geq 0$ or, equivalently, $\val{\ubox\varphi} = \bigcap_{n\geq 0} \val\varphi$.
The problem is that open sets need not be closed under infinite intersections, so an intuitionistic interpretation for $\ubox\varphi$ must modify the classical semantics in a way that only open sets are produced.
There are at least two ways to achieve this.
We call these the `weak' and `strong' interpretations of $\ubox$.
The first was originally proposed by \cite{KremerIntuitionistic} in an unpublished note, and is treated similarly to the universal quantifier in the context of intuitionistic semantics of first order logic.
In order to distinguish it from the strong interpretation, we will denote it by $\khence$.\footnote{\cite{KremerIntuitionistic} instead uses $\ast$.}
As we will see, the operator ${\khence}$ does not satisfy some key $\sf LTL$ validities, namely ${\khence} p\to\tnext {\khence} p$, ${\khence}\tnext p\to \tnext {\khence} p$, and ${\khence} p\to {\khence}{\khence} p$. Consequently, some of the standard $\sf LTL$ axioms are not sound for this interpretation.
We thus propose a logic $\logbasica$, where the axiom ${\khence} p\to\tnext {\khence} p$ is replaced by the weaker ${\khence} p\to {\khence}  \tnext p$.

Nevertheless, $\ubox p \to \tnext\ubox p$ is arguably one of the defining axioms for henceforth, so it is convenient to have semantics that validate it.
In order to obtain semantics for $\logbasic$, we propose a new interpretation for $\ubox$.
Our approach is natural from an algebraic perspective, as we define the interpretation of $\ubox \varphi$ via a greatest fixed point in the Heyting algebra of open sets.\footnote{We will not discuss Heyting algebras in this text, but see e.g.~\cite{Heyting1930,MintsInt}.}
We will show that dynamic topological systems provide semantics for the logics without the constant domain axiom, from which we conclude the independence of the latter. Moreover, we show that the Fischer Servi axioms are valid for the class of {\em open} dynamical topological systems, and that in this setting, the semantics for ${\khence}$ and $\ubox$ coincide.
The constant domain axiom shows that the $\{\diam,\ubox\}$-logic of expanding posets is different from that of dynamic topological systems. We show via an alternative axiom that the $\{\tnext,\ubox\}$-logics are also different.
We also consider the special case where topological semantics are based on Euclidean spaces.
We show that this leads to logics strictly between that of all spaces and that of expanding posets.
In the special case of the real line, we can prove that every formula falsified on a persistent poset is falsifiable on the real line.

\medskip

\noindent{\sc Layout.} Section \ref{SecBasic} introduces the syntax and the axiomatic systems as well as its weak counterparts that we propose for intuitionistic temporal logic. 
Section \ref{SecTopre} reviews dynamic topological systems, which are used in Section \ref{SecSemantics} to provide semantics for our formal language. 
Section \ref{SecSound} shows that four of our logics and their weak companions are each sound for a class of dynamical systems, and Section \ref{secEuclid} shows that the remaining logics are sound for Euclidean spaces. In Section~\ref{secPers} we focus on $\logbasic$ interpreted on persistent posets and its connection with the real line. In Section \ref{SecInd} we show that several of the logics we consider are pairwise distinct. Finally, Section \ref{SecConc} lists some open questions.

\section{Syntax and axiomatics}\label{SecBasic}

In this section we will introduce several natural intuitionistic temporal logics. Most of the axioms we consider have appeared either in the intuitionistic logic, the temporal logic, or the intuitionistic modal logic literature. They will be based on the language of linear temporal logic, as defined next.

Fix a countably infinite set $\mathbb P$ of {\em propositional variables.} The full language $\mathcal L_{\diam\ubox{\khence}}$ of intuitionistic (linear) temporal logic $\sf ITL$ is given by the grammar in Backus-Naur form
\[ \varphi,\psi := \ \  \bot \  | \   p  \ |  \ \varphi\wedge\psi \  |  \ \varphi\vee\psi  \ |  \ \varphi\to\psi  \ |  \ \tnext\varphi \  | \  \diam\varphi \  |  \ \ubox \varphi  \  |  \ {\khence}\varphi, \]
where $p\in \mathbb P$. As usual, we use $\neg\varphi$ as a shorthand for $\varphi\to \bot$ and $\varphi \leftrightarrow \psi$ as a shorthand for $(\varphi \to \psi) \wedge (\psi \to \varphi)$. We read $\tnext$ as `next', $\diam$ as `eventually', $\ubox$ as `strong henceforth' and ${\khence}$ as `weak henceforth'.
The intuition is that formulas are evaluated at moments of discrete time. The formula $\tnext \varphi$ indicates that $\varphi$ will hold at the next moment, $\diam \varphi$ that it will hold in some subsequent moment, and $\ubox\varphi$ and ${\khence}\varphi$ both indicate that $\varphi$ will hold in every subsequent moment, including the current moment.
However, as we will see, making sense of the latter notion in intuitionistic semantics is not straightforward, thus giving rise to two natural, but distinct, interpretations.

Given any formula $\varphi$,
we denote the set of subformulas of $\varphi$ by ${\mathrm{sub}}(\varphi)$.
For $\Theta \subseteq \{\diam,\ubox,{\khence}\}$, the language $\mathcal L_\Theta$ is the sub-language of $\mathcal L_{\diam\ubox{\khence}}$ whose only tenses are $\tnext$ and those in $\Theta$; we will not consider languages without $\tnext$.
So, for example, $\mathcal L_\diam$ only has tenses $\tnext $ and $\diam$.
We will write $\mathcal L_\mtnext$ instead of $\mathcal L_\varnothing$.

The tenses ${\khence}$, $\ubox$ represent two possible intuitionistic readings of `henceforth' and thus we will rarely consider logics with both.
In order to compare logics based on ${\khence}$ with logics based on $\ubox$, we introduce the translations $t_{\khence}$, where $t_{\khence}(\varphi) \in \mathcal L_{\diam {\khence}}$ is the formula obtained by replacing every occurrence of $\ubox$ in $\varphi$ by ${\khence}$, and similarly define $t_\ubox$, which replaces every occurrence of ${\khence}$ by $\ubox$.
The semantics for ${\khence}$ first appeared in the unpublished note \cite{KremerIntuitionistic}, while those for $\ubox$ were first introduced in a preliminary version of this paper \cite{BoudouJelia}.


We begin by establishing our basic axiomatization for logics over $\mathcal L_{\diam\ubox}$.
It is obtained by adapting the standard axioms and inference rules of $\sf LTL$ \cite{temporal}, as well as their dual versions.

\begin{definition}\label{defLogbasic}
The logic $\logbasic$ is the least set of $\mathcal L_{\diam\ubox}$-formulas closed under the following axioms and rules.

\begin{multicols}{2}
\begin{enumerate}[label=({\sc \roman*})]
\item\label{ax01Taut} All intuitionistic tautologies;
\item\label{ax02Bot} $\neg \tnext \bot$;
\item\label{ax03NexWedge} $\tnext \left( \varphi \wedge \psi \right) \leftrightarrow \left(\tnext \varphi \wedge\tnext \psi\right)$;
\item\label{ax04NexVee} $\tnext \left( \varphi \vee \psi \right) \leftrightarrow \left(\tnext \varphi \vee\tnext \psi\right)$;
\item\label{ax05KNext} $\tnext\left( \varphi \rightarrow \psi \right) \rightarrow \left(\tnext\varphi \rightarrow \tnext\psi\right)$;
\item\label{ax06KBox} $\ubox \left( \varphi \rightarrow \psi \right) \rightarrow \left(\ubox \varphi \rightarrow \ubox \psi\right)$;
\item\label{ax07:K:Dual} $\ubox \left( \varphi \rightarrow \psi \right) \rightarrow \left(\diam \varphi \rightarrow \diam \psi\right)$;
\item\label{ax09BoxT} $\ubox \varphi \to \varphi  $;
\item\label{ax09BoxFix} $\ubox \varphi \to   \tnext \ubox \varphi$;
\item\label{ax10DiamT} $\varphi   \to \diam \varphi$;

\item\label{ax10DiamFix} $ \tnext \diam \varphi \to \diam \varphi$;
\item\label{ax11:ind:1} $\ubox ({ \varphi \rightarrow \tnext \varphi } )\to ({ \varphi \rightarrow \ubox \varphi })$;
\item\label{ax12:ind:2} $\ubox ({ \tnext \varphi \to \varphi})\to ({ \diam \varphi \rightarrow \varphi } )$;
\item\label{ax13MP} $\dfrac{\varphi \ \ \varphi\to \psi}\psi$;
\item\label{ax14NecCirc} $\dfrac\varphi {\tnext\varphi}$, $\dfrac\varphi {\ubox\varphi}$.
\end{enumerate}
\end{multicols}
\end{definition}
Axioms \ref{ax05KNext} and \ref{ax06KBox} hold in any normal modal logic and \ref{ax07:K:Dual} is a dual version of \ref{ax06KBox}; such dual axioms are often needed in intuitionistic modal logic, since $\diam$ and $\ubox$ are not typically inter-definable.
The axioms \ref{ax02Bot}-\ref{ax04NexVee} have to do with the passage of time being deterministic in linear temporal logic, and are related to a functional modality, i.e.~a modality that is interpreted using a function rather than a relation.
The axioms \ref{ax09BoxT} and \ref{ax10DiamT} have to do with future tenses being interpreted reflexively, i.e.~$\varphi$ is considered to hold eventually if it holds now.
The axiom \ref{ax10DiamFix} states that if something will henceforth be the case, then in the next moment, it will still henceforth be the case, and \ref{ax11:ind:1} is successor induction, as time is interpreted over the natural numbers.
Axioms \ref{ax10DiamFix} and \ref{ax12:ind:2} are their duals.
All rules are standard in any normal modal logic.

Each axiom is either included in the axiomatization of Goldblatt \cite[page 87]{Goldblatt92} or is a variant of one of them (e.g., a contrapositive); this is standard in intuitionistic modal logic, as such variants are needed to account for the independence of the basic connectives.
We do not consider `until' and `release' in this paper, but these operators have previously studied within an intuitionistic context in~\cite{BalbianiToCL}.

Next we define our base logic for weak henceforth.
It is convenient to present it as a logic over $\mathcal L_{\diam\ubox}$ and then translate to $\mathcal L_{\diam\khence}$.
The main reason for this is that we view the weak and strong semantics of henceforth as two possible interpretations of intuitionistic temporal logic, rather than two independent tenses.
From this point of view, the notation $\khence$ should be seen as an indication that weak semantics are being used.
Moreover, we are interested in comparing logics based on $\khence$ with those based on $\ubox$, and uniform notation will be helpful for this.
In particular, we will see that the weak semantics give rise to weaker logics, partially motivating the terminology.
We will then use the translation $t_{\khence}$ (which, recall, replaces $\ubox$ by $\khence$) when we wish to indicate that we are working with weak semantics.

We define $\wlogbasic$ as $\logbasic$, but replacing axiom \ref{ax09BoxFix} by
\[\wboxfix \eqdef  \ubox\varphi \to \ubox \tnext \varphi.\]
Here, $\wboxfix$ stands for `weak henceforth'.
This terminology is justified by the following.

\begin{lemma}\label{lemmaWHF}
Every instance of $\wboxfix$ is derivable in $\logbasic $.
\end{lemma}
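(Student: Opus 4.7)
The plan is to derive $\ubox\varphi \to \ubox\tnext\varphi$ in $\logbasic$ by chaining three monotonicity-style implications:
\[ \ubox\varphi \ \to\ \ubox\ubox\varphi \ \to\ \ubox\tnext\ubox\varphi \ \to\ \ubox\tnext\varphi. \]
The first link is an idempotence principle for $\ubox$; the second moves the fixed-point axiom inside $\ubox$; the third uses reflexivity of $\ubox$ under a $\tnext$ and then under a $\ubox$. I expect the whole argument to be a short syntactic derivation using only axioms \ref{ax05KNext}, \ref{ax06KBox}, \ref{ax09BoxT}, \ref{ax09BoxFix}, \ref{ax11:ind:1} and the two necessitation rules.

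For the first link I would apply the induction axiom \ref{ax11:ind:1} with $\ubox\varphi$ substituted for $\varphi$, obtaining
\[ \ubox(\ubox\varphi \to \tnext\ubox\varphi) \to (\ubox\varphi \to \ubox\ubox\varphi). \]
The antecedent is $\ubox$-necessitation applied to axiom \ref{ax09BoxFix}, so modus ponens delivers $\ubox\varphi \to \ubox\ubox\varphi$. For the second link I would again necessitate \ref{ax09BoxFix} and combine the result with the $K$-axiom \ref{ax06KBox} (instantiated at $\ubox\varphi$ and $\tnext\ubox\varphi$), which yields $\ubox\ubox\varphi \to \ubox\tnext\ubox\varphi$. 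For the third link I would start from \ref{ax09BoxT} in the form $\ubox\varphi \to \varphi$, apply $\tnext$-necessitation and \ref{ax05KNext} to get $\tnext\ubox\varphi \to \tnext\varphi$, then apply $\ubox$-necessitation and \ref{ax06KBox} once more to conclude $\ubox\tnext\ubox\varphi \to \ubox\tnext\varphi$. Concatenating the three implications by intuitionistic propositional reasoning completes the derivation.

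There is no genuine obstacle; the only step that is not pure monotonicity is the idempotence $\ubox\varphi \to \ubox\ubox\varphi$, which is the point where induction \ref{ax11:ind:1} is essential. Axiom \ref{ax09BoxFix} alone only provides one layer of $\tnext\ubox$, so without induction one cannot lift it to the $\ubox\ubox$ needed to then push $\tnext$ inside the outer $\ubox$.
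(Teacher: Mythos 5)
Your derivation is correct and follows essentially the same route as the paper's: both hinge on obtaining $\ubox\varphi\to\ubox\ubox\varphi$ from the induction axiom \ref{ax11:ind:1} applied to the necessitation of \ref{ax09BoxFix}, followed by $\ubox$-monotonicity. The only difference is cosmetic: the paper first composes $\ubox\varphi\to\tnext\ubox\varphi$ with $\tnext\ubox\varphi\to\tnext\varphi$ and then applies monotonicity once, whereas you apply monotonicity to each factor separately, yielding the intermediate formula $\ubox\tnext\ubox\varphi$.
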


\begin{proof}
From~\ref{ax09BoxT},~\ref{ax14NecCirc} and~\ref{ax06KBox} we obtain $\tnext \ubox \varphi \to \tnext \varphi$, which combined with axiom~\ref{ax09BoxFix}	allows us to conclude $\vdash \ubox \varphi \to \tnext \varphi$.
Thanks to~\ref{ax14NecCirc} and axiom~\ref{ax06KBox} we get 
\begin{equation}
\vdash \ubox \ubox \varphi \to \ubox \tnext \varphi. \label{eq:1}
\end{equation}
From axiom~\ref{ax14NecCirc}, necessitation and~\ref{ax11:ind:1} we obtain 
\begin{equation}
\vdash \ubox \left( \ubox \varphi \to \tnext \ubox \varphi \right) \to \left( \ubox \varphi \to \ubox \ubox \varphi\right). \label{eq:2}
\end{equation}
From~\eqref{eq:1} and~\eqref{eq:2} we conclude $\vdash \ubox \varphi \to \ubox \tnext \varphi$.	
\end{proof}

Thus $\wlogbasic\subseteq \logbasic$.
We will later see that the inclusion is strict, since ${\khence} \varphi \to \tnext {\khence} \varphi$ is not valid in general for our semantics, but ${\khence} \varphi \to {\khence} \tnext  \varphi$ and $\ubox \varphi \to \tnext \ubox \varphi$ are.
We can then define $\logbasica = t_{\khence} (\wlogbasic) $.

Modal intuitionistic logics often involve additional axioms, and in particular \cite{FS84} includes the schema
\[{\rm FS}_\diam (\varphi,\psi) \eqdef \left( \diam \varphi \rightarrow \ubox \psi \right) \to \ubox \left(\varphi \rightarrow \psi\right)
.\]
We also define
\[{\rm FS}_\mtnext (\varphi,\psi)  \eqdef \left(\tnext \varphi \rightarrow \tnext \psi \right) \to \tnext \left(\varphi \rightarrow \psi\right).\]
This notation is justified in view of the fact that $\tnext$ is self-dual in linear temporal logic: since time is modeled deterministically, `in some next moment' is equivalent to `in every next moment,' so $\tnext$ may be regarded as a `box' or a `diamond.'
It is further motivated by the following.

\begin{proposition}\label{propFStoFS}	
The formula $ {\rm FS}_\mtnext (\diam p,\ubox q) \to {\rm FS}_\diam (p,q) $ is derivable in $\logbasic $.
\end{proposition}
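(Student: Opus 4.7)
The plan is to use the temporal induction axiom~\ref{ax11:ind:1} to lift a one-step propagation to a full henceforth propagation, in the same style as Lemma~\ref{lemmaWHF}.

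First I would combine $\tnext\diam p\to\diam p$ (axiom~\ref{ax10DiamFix}) with $\ubox q\to\tnext\ubox q$ (axiom~\ref{ax09BoxFix}) using intuitionistic propositional reasoning to obtain $\vdash(\diam p\to\ubox q)\to(\tnext\diam p\to\tnext\ubox q)$. Composing with the hypothesis $\mathrm{FS}_\mtnext(\diam p,\ubox q)$, which is by definition $(\tnext\diam p\to\tnext\ubox q)\to\tnext(\diam p\to\ubox q)$, yields the one-step formula $(\diam p\to\ubox q)\to\tnext(\diam p\to\ubox q)$.

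Next I would necessitate this step formula via~\ref{ax14NecCirc} and apply~\ref{ax06KBox} to produce $\ubox\bigl((\diam p\to\ubox q)\to\tnext(\diam p\to\ubox q)\bigr)$, and then instantiate the induction axiom~\ref{ax11:ind:1} with $\varphi=\diam p\to\ubox q$ to conclude $(\diam p\to\ubox q)\to\ubox(\diam p\to\ubox q)$. Finally, axioms~\ref{ax10DiamT} and~\ref{ax09BoxT} give $\vdash(\diam p\to\ubox q)\to(p\to q)$; necessitating this and applying K yields $\ubox(\diam p\to\ubox q)\to\ubox(p\to q)$. Chaining the two main implications produces $(\diam p\to\ubox q)\to\ubox(p\to q)$, which is $\mathrm{FS}_\diam(p,q)$.

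The main delicate point is the necessitation of the step formula, since its derivation passes through the hypothesized instance of $\mathrm{FS}_\mtnext$. This is handled exactly as in the proof of Lemma~\ref{lemmaWHF}, where necessitation is applied to an axiom in order to enable induction: under the standard schema reading of the Fischer Servi postulate, the hypothesized instance is treated as part of the deductive apparatus, so that formulas derived with its help may themselves be necessitated. Once this move is accepted, the remaining manipulations are routine $K$-style and intuitionistic steps.
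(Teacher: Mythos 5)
Your proof is correct and follows essentially the same route as the paper's: derive the one-step formula $(\diam p\to\ubox q)\to\tnext(\diam p\to\ubox q)$ from the hypothesized ${\rm FS}_\mtnext$ instance, necessitate and apply the induction axiom~\ref{ax11:ind:1} to get $(\diam p\to\ubox q)\to\ubox(\diam p\to\ubox q)$, then finish via $\ubox(\diam p\to\ubox q)\to\ubox(p\to q)$. The delicate point you flag --- necessitating a formula whose derivation used the ${\rm FS}_\mtnext$ hypothesis --- is present in exactly the same form, and with no further justification, in the paper's own argument, so your treatment matches theirs.
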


\proof We reason within $\logbasic$. Assume \begin{enumerate*}[label=\arabic*)]
	\item\label{assumption:1} ${\rm FS}_\mtnext (\diam p,\ubox q)$ and 
	\item\label{assumption:2} $\left(\diam p \to \ubox q\right)$.
\end{enumerate*}
Notice that $\tnext \diam p \to \diam p$ and $\ubox p \to \tnext \ubox p$ are instances of axioms~\ref{ax09BoxFix} and~\ref{ax10DiamFix}. From this and assumption~\ref{assumption:2} we conclude $\tnext \diam p \to \tnext \ubox p$. Thanks to assumption~\ref{assumption:1} and Modus Ponens we conclude $\tnext \left(\diam p \rightarrow \ubox q\right)$. Therefore, $\left(\diam p \to  \ubox p\right) \to \tnext \left(\diam p \rightarrow \ubox q\right)$. By Rule~\ref{ax14NecCirc}, we obtain $ \ubox \left(\left(\diam p \to \ubox p\right) \to \tnext \left(\diam p \rightarrow \ubox q\right)\right)$. By the induction axiom~\ref{ax11:ind:1} we derive $\left(\diam p \to \ubox p\right) \to \ubox \left(\diam p \rightarrow \ubox q\right)$. From the assumption~\ref{assumption:2} and Modus Ponens it follows that $\ubox \left(\diam p \rightarrow \ubox q\right)$.
Note that $\left(\diam p \to \ubox q\right) \to \left(p \to q\right)$ is derivable in $\logbasic$. From rule~\ref{ax14NecCirc} and axiom~\ref{ax06KBox} we obtain $\ubox\left(\diam p \to \ubox q\right) \to \ubox \left(p \to q\right)$.
From this and $\ubox \left(\diam p \rightarrow \ubox q\right)$ it follows that $\ubox \left(p \to q\right)$, as required.
\endproof

Later we will show that these schemas lead to logics strictly stronger than $\logbasic$.
Next we consider additional axioms reminiscent of the constant domain axiom in first-order intuitionistic logic, namely $\forall x (\varphi (x) \vee \psi (x)) \rightarrow \exists x \varphi (x) \vee \forall x \psi(x)$; we maintain the terminology `constant domain' due to this similarity, although they do not retain this meaning in our logics.
As we will see, in the context of intuitionistic temporal logics, these axioms separate Kripke semantics from the more general topological semantics.
\begin{align*}
\AxCons \varphi\psi  &\eqdef \ubox( \varphi \vee \psi) \to \ubox \varphi \vee \diam \psi\\
\AxBInd \varphi\psi  &\eqdef \ubox ( \varphi \vee \psi) \wedge \ubox (\tnext \psi\rightarrow \psi) \rightarrow \ubox \varphi \vee \psi .
\end{align*}
Here, $\rm CD$ stands for `constant domain' and $\rm BI$ for `backward induction'.
We also define as a special case $\AxConsm \varphi = \AxCons{\neg\varphi}\varphi$.

The axiom $\rm CD$ might not be desirable from a constructive perspective, as from $\ubox ( \varphi \vee \psi)$ one cannot in general extract an upper bound for a witness for $\diam \psi$.\footnote{For example, if $\varphi$ represents the `active' states and $\psi$ the `halting' states of a program, then $\rm CD$ would require us to decide whether the program halts, which is not possible to do constructively.}
The axiom $\rm BI$ is a $\diam$-free version of $\rm CD$, as witnessed by the following.
\begin{proposition}\label{PropConstoBI}
The following formulas are derivable in $\wlogbasic$:
\begin{enumerate}

\item $\AxCons pq \rightarrow \AxBInd pq$;

\item $\AxBInd p{\diam q} \rightarrow \AxCons pq$.

\end{enumerate}
\end{proposition}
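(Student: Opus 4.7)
The proof of both implications proceeds entirely inside $\wlogbasic$ and leans on the reflexivity/induction axioms for $\diam$, namely \ref{ax10DiamT} ($\varphi\to\diam\varphi$), \ref{ax10DiamFix} ($\tnext\diam\varphi\to\diam\varphi$), and especially \ref{ax12:ind:2} ($\ubox(\tnext\varphi\to\varphi)\to(\diam\varphi\to\varphi)$), together with the standard necessitation/distribution over $\ubox$ from \ref{ax06KBox} and \ref{ax14NecCirc}. None of these uses the strong fixed point axiom \ref{ax09BoxFix}, so everything is available in $\wlogbasic$.

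For (1), the plan is to start from $\AxCons pq$ and assume the antecedent $\ubox(p\vee q)\wedge \ubox(\tnext q\to q)$ of $\AxBInd pq$. The first conjunct, through $\AxCons pq$, already gives $\ubox p \vee \diam q$. The second conjunct, through axiom \ref{ax12:ind:2}, yields $\diam q\to q$. A disjunction of cases then collapses $\ubox p\vee \diam q$ into $\ubox p \vee q$, which is the desired conclusion. Apart from the initial derivation of the two ingredients, this is pure propositional reasoning.

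For (2), the plan is to show that $\ubox(p\vee q)$ already implies both conjuncts of the antecedent of $\AxBInd p{\diam q}$. From the tautology $q\to\diam q$ (axiom \ref{ax10DiamT}) I derive $p\vee q\to p\vee \diam q$, apply \ref{ax14NecCirc} to box it, and combine with \ref{ax06KBox} to obtain $\ubox(p\vee q)\to \ubox(p\vee \diam q)$. The second conjunct $\ubox(\tnext\diam q\to \diam q)$ follows by applying necessitation to axiom \ref{ax10DiamFix}. Plugging these into $\AxBInd p{\diam q}$ yields exactly $\ubox p\vee \diam q$, which is $\AxCons pq$.

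I expect no genuine obstacle in either direction; the only subtlety is confirming that axiom \ref{ax12:ind:2} still sits in $\wlogbasic$ (it does, since the replacement of \ref{ax09BoxFix} by $\wboxfix$ does not touch the converse induction axiom), so that the collapse $\diam q\to q$ in part (1) is legitimate. The rest is a short sequence of applications of \ref{ax14NecCirc}, \ref{ax06KBox} and intuitionistic propositional reasoning, which I would not spell out in further detail.
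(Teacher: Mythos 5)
Your proposal is correct and follows essentially the same route as the paper's own proof: part (1) combines $\AxCons pq$ with axiom \ref{ax12:ind:2} to collapse $\diam q$ to $q$, and part (2) derives both conjuncts of the antecedent of $\AxBInd p{\diam q}$ from $\ubox(p\vee q)$ using $\varphi\to\diam\varphi$, $\tnext\diam\varphi\to\diam\varphi$, necessitation and $\ubox$-distribution. Your observation that \ref{ax12:ind:2} survives in $\wlogbasic$ is exactly the point that makes the argument go through.
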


\begin{proof}
We reason within $\wlogbasic$.
For the first claim, assume that
\begin{enumerate*}[label=\arabic*)]
\item\label{ItDerivOne} $\AxCons pq$,
\item\label{ItDerivTwo} $\ubox (\tnext q\rightarrow q)$, and
\item\label{ItDerivThree} $\ubox ( p\vee q)$.
\end{enumerate*}
From \ref{ItDerivOne} and \ref{ItDerivThree} we obtain $\ubox p \vee \diam q$, which together with \ref{ItDerivTwo} and axiom \ref{ax12:ind:2} gives us $\ubox p \vee q$, as needed.

For the second, assume
\begin{enumerate*}[label=\arabic*)]
\item\label{ItDerivOneb} $\AxBInd p{\diam q}$ and
\item\label{ItDerivThreeb} $\ubox ( p\vee q)$.
\end{enumerate*}
From $\ubox(p\vee q)$, axiom \ref{ax10DiamFix} and some modal reasoning, $\ubox ( p\vee \diam q)$.
Also from axiom \ref{ax10DiamFix} and rule \ref{ax14NecCirc}, $\ubox(\tnext \diam q \to \diam q)$.
From $\AxBInd p{\diam q}$ we obtain $\ubox p \vee \diam q$, as needed.
\end{proof}

Finally, we introduce the {\em conditional excluded middle} axiom
\[{\rm CEM}  (p,q) \eqdef ( \neg \tnext p \wedge \tnext \neg \neg p ) \rightarrow ( \tnext q \vee \neg \tnext q ).\] 
This axiom states that a certain instance of excluded middle holds, provided some assumptions are satisfied.
It is less familiar than others we have considered, but its role will become clear when we consider semantics based on the real line.
With this, we define a handful of logics, listed in Table \ref{tableLogics}, along with definitions of the `optional' axioms.
The inclusions between these logics are summarized in Figure \ref{fig:fig}; as we will show in this paper, these are the only inclusions that hold between these logics.\footnote{
Note that our notation for logics has been modified from that in \cite{BoudouJelia}, in order to accommodate the larger family we now consider.
Specifically, $\logbasic$ was denoted ${\sf ITL}^0$, $\loghomeo$ was denoted $\sf ITL^{FS}$, $\logexp$ was denoted $\sf ITL^{CD}$, and $ \logpers $ was denoted ${\sf ITL}^1$.
Note that $\wlogbasic$ is weaker than ${\sf ITL}^0$.}
\begin{table}[H]
\begin{tabular}{rcl}
\hline
$\wboxfix (\varphi)$&$=$&$ \ubox\varphi \to \ubox \tnext \varphi$\\
${\rm FS}_\mtnext (\varphi,\psi)  $&$=$&$ \left(\tnext \varphi \rightarrow \tnext \psi \right) \to \tnext \left(\varphi \rightarrow \psi\right)$\\
$  {\rm FS}_\diam (\varphi,\psi) $ &$=$& $ \left( \diam \varphi \rightarrow \ubox \psi \right) \to \ubox \left(\varphi \rightarrow \psi\right)$\\
$\AxCons \varphi\psi  $&$=$&$ \ubox( \varphi \vee \psi) \to \ubox \varphi \vee \diam \psi$\\
$\AxConsm \varphi   $&$=$&$ \ubox( \neg \varphi \vee \varphi) \to \ubox \neg \varphi \vee \diam \varphi$\\
$\AxBInd \varphi\psi  $&$=$&$ \ubox ( \varphi \vee \psi) \wedge \ubox (\tnext \psi\rightarrow \psi) \rightarrow \ubox \varphi \vee \psi $\\
${\rm CEM}  (\varphi,\psi) $&$=$&$ ( \neg \tnext \varphi \wedge \tnext \neg \neg \varphi ) \rightarrow ( \tnext \psi \vee \neg \tnext \psi)$\\
\hline
\end{tabular}

\begin{tabular}{rclcrcl}
  $\logbasic$ & $=$ & \text{(See Definition \ref{defLogbasic})}  \\
    $\wlogbasic$ & $=$ & $\logbasic -  \ref{ax09BoxFix} + {\rm WH}$ & \ \ \ \  & $ \logbasica$ & $=$  &$ t_{\khence} (\wlogbasic)$ \\
  $\logrnax$ &$=$& $ \logbasic + {\rm CD}^- $ & \ \ \ \  & $ \logrnaxa$ & $=$ &$ t_{\khence} (\wlogbasic + {\rm CD}^-)$ \\
    $  \logrealax$&$=$&$   \logbasic + {\rm CD}^-  + {\rm CEM}$ &&    $   \logrealaxa $&$ = $&$ t_{\khence}(\wlogbasic + {\rm CD}^-  + {\rm CEM}) $ \\
$              \logexp $  &$=$&$   \logbasic + {\rm CD}$ &&              $ \logexpa $ & $= $& $t_{\khence}(\logexp) $\\
 $       \loghomeo $&$=$&$  \logbasic + {\rm FS}_\mtnext $ &&     $    \loghomeoa$ &$ = $&$ t_{\khence}(\loghomeo) $ \\
  $  \logrnfs$&$=$& $ \logbasic + {\rm FS}_\mtnext + {\rm CD}^- $ &&  $  \logrnfsa $&$ =$ & $t_{\khence}(\logrnfs) $ \\
  $\logpers  $ &$=$& $   \logbasic + {\rm FS}_\mtnext +  {\rm CD}$&&  $ \logpersa$ &$ = $&$ t_{\khence}(\logpers)$\\
  \hline
\end{tabular}
\caption{
Axioms not listed in Definition \ref{defLogbasic} (above) and logics based on strong and weak henceforth (below).
In the right-hand column, notice that only $\logbasica$, $\logrnaxa$ and $\logrealaxa$ are based on $\wlogbasic$.}
\label{tableLogics}
\end{table}
Here, $\logrealax$ stands for `real temporal logic', $\logrnax$ for `Euclidean temporal logic' and $\logexp$ for `constant domain temporal logic'.
For a logic $\Lambda$ in the above list, $ \Lambda^0$ is defined analogously but replacing $\logbasic$ by $\wlogbasic$.
Logics over $\mathcal L_{\diam{\khence}}$ are defined in Table \ref{tableLogics}.
Note that logics with either $\rm CD$ or ${\rm FS}_\mtnext$ use the strong axiom, ${\khence} \varphi \to \tnext {\khence}\varphi$.
This has to do with our semantics and will become clear later.

\begin{figure}[H]\centering
	\begin{tikzpicture}[thick,->,auto,font=\small,node distance=1.5cm]
	\node[] (logbasic) {$\logbasic$};
	\node[] (logrnax) 	[above left of=logbasic] {$\logrnax$};	
	\node[] (logrealax) [above right of=logrnax] {$\logrealax$};	
	\node[] (logexp) [above left of=logrnax] {$\logexp$};
	\node[] (logrnfs)  [above right of=logrealax] {$\logrnfs$};		
	\node[] (loghomeo)  [below right of=logrnfs] {$\loghomeo$};	
	\node[] (logpers) [above left of=logrnfs] {$\logpers$};
	
	\path[] 
	(logbasic) edge[] node[pos=0.5,right,font=\scriptsize]{}(loghomeo)
	(logbasic) edge[] node[pos=0.5,left,font=\scriptsize]{}(logrnax)	
	(loghomeo) edge[] node[pos=0.5,right,font=\scriptsize]{}(logrnfs)		
	(logexp) edge[] node[pos=0.5,right,font=\scriptsize]{}(logpers)	
	(logrealax) edge[] node[pos=0.5,right,font=\scriptsize]{}(logrnfs)		
	(logrnax) edge[] node[right,right,pos=0.5,font=\scriptsize]{}(logexp)
	(logrnax) edge[] node[right,pos=0.5,font=\scriptsize]{}(logrealax)		
	(logrnfs) edge[] node[right,pos=0.5,font=\scriptsize]{}(logpers)	
	;								
	\end{tikzpicture}
\hfill		\begin{tikzpicture}[thick,->,auto,font=\small,node distance=1.5cm]
	\node[] (logbasic) {$\logbasica$};
	\node[] (logrnax) 	[above left of=logbasic] {$\logrnaxa$};	
	\node[] (logrealax) [above right of=logrnax] {$\logrealaxa$};	
	\node[] (logexp) [above left of=logrnax] {$\logexpa$};
	\node[] (logrnfs)  [above right of=logrealax] {$\logrnfsa$};		
	\node[] (loghomeo)  [below right of=logrnfs] {$\loghomeoa$};	
	\node[] (logpers) [above left of=logrnfs] {$\logpersa$};
	
	\path[] 
	(logbasic) edge[] node[pos=0.5,right,font=\scriptsize]{}(loghomeo)
	(logbasic) edge[] node[pos=0.5,left,font=\scriptsize]{}(logrnax)	
	(loghomeo) edge[] node[pos=0.5,right,font=\scriptsize]{}(logrnfs)		
	(logexp) edge[] node[pos=0.5,right,font=\scriptsize]{}(logpers)	
	(logrealax) edge[] node[pos=0.5,right,font=\scriptsize]{}(logrnfs)		
	(logrnax) edge[] node[right,right,pos=0.5,font=\scriptsize]{}(logexp)
	(logrnax) edge[] node[right,pos=0.5,font=\scriptsize]{}(logrealax)		
	(logrnfs) edge[] node[right,pos=0.5,font=\scriptsize]{}(logpers)	
	;								
	\end{tikzpicture}
	\caption{Inclusions between the logics based on strong or weak henceforth we have defined; an arrow $\Lambda_1\to\Lambda_2$ means that every theorem of $\Lambda_1$ is a theorem of $\Lambda_2$.
}
	\label{fig:fig}
\end{figure}
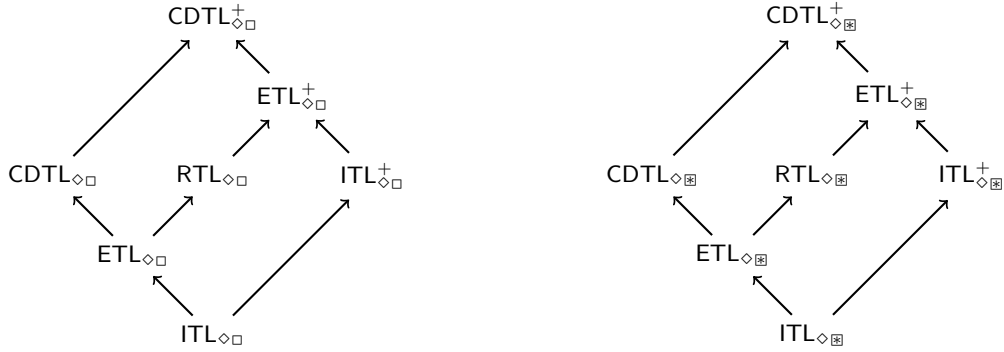	

Our list is not meant to exhaust all combinations of axioms; rather, we only consider logics that arise from natural classes of models.
Before discussing semantics, we establish the only non-trivial inclusion between these logics.

\begin{lemma}
Every instance of $\rm CEM$ is derivable in $\loghomeo$.
\end{lemma}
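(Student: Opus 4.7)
The plan is to show that in $\loghomeo$ the tense $\tnext$ fully commutes with negation, which will make the antecedent of $\rm CEM$ contradictory, so that the whole formula collapses to ex falso.

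First I would establish the key bidistribution $\tnext(\varphi\to\psi) \leftrightarrow (\tnext\varphi\to\tnext\psi)$. One direction is axiom \ref{ax05KNext}, and the other direction is precisely ${\rm FS}_\mtnext(\varphi,\psi)$, which is available in $\loghomeo$. Next, I would specialize this to $\psi=\bot$ and combine with the two facts that $\tnext\bot\to\bot$ is axiom \ref{ax02Bot} and $\bot\to\tnext\bot$ is an intuitionistic tautology, to deduce $\tnext\bot\leftrightarrow\bot$. From these I obtain the derived equivalence
\[
\tnext\neg\varphi \ \leftrightarrow\ (\tnext\varphi\to\tnext\bot)\ \leftrightarrow\ (\tnext\varphi\to\bot)\ =\ \neg\tnext\varphi
\]
for any formula $\varphi$.

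Applying this equivalence twice gives $\tnext\neg\neg\varphi\leftrightarrow\neg\tnext\neg\varphi\leftrightarrow\neg\neg\tnext\varphi$. Therefore, within $\loghomeo$, the antecedent $\neg\tnext\varphi\wedge\tnext\neg\neg\varphi$ of ${\rm CEM}(\varphi,\psi)$ is provably equivalent to $\neg\tnext\varphi\wedge\neg\neg\tnext\varphi$, which is intuitionistically inconsistent. By ex falso the entire implication ${\rm CEM}(\varphi,\psi)$ is then derivable. The chain of reasoning is routine once one spots that ${\rm FS}_\mtnext$ supplies the missing direction needed to push $\tnext$ through implication; there is no genuine obstacle, only some bookkeeping with the propositional connectives and axioms \ref{ax01Taut}--\ref{ax05KNext} together with rule \ref{ax13MP}.
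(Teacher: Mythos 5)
Your proof is correct and follows essentially the same route as the paper, which simply asserts that $\neg(\neg\tnext p\wedge\tnext\neg\neg p)$ is derivable in $\loghomeo$ and concludes $\rm CEM$ by ex falso. Your derivation of $\tnext\neg\varphi\leftrightarrow\neg\tnext\varphi$ from ${\rm FS}_\mtnext$, axiom \ref{ax05KNext} and $\tnext\bot\leftrightarrow\bot$ is precisely the bookkeeping the paper leaves to the reader.
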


\proof
It is not hard to check that $\neg ( \neg \tnext p \wedge \tnext \neg \neg p )$ is derivable in $\loghomeo$, hence so is ${\rm CEM} $.
\endproof

This immediately yields that $\logrealax   \subseteq \logrnfs  $:

\begin{proposition}\label{propInclusions}
Every formula derivable in  $\logrealax $ is derivable in $\logrnfs  $.
\end{proposition}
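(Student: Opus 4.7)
The plan is essentially to unpack the definitions of the two logics and invoke the immediately preceding lemma. By Table~\ref{tableLogics}, we have $\logrealax = \logbasic + {\rm CD}^- + {\rm CEM}$ and $\logrnfs = \logbasic + {\rm FS}_\mtnext + {\rm CD}^-$. Both logics share the base $\logbasic + {\rm CD}^-$, so to show $\logrealax \subseteq \logrnfs$ it suffices to check that every instance of ${\rm CEM}$ is derivable in $\logrnfs$.

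First I would note that $\loghomeo = \logbasic + {\rm FS}_\mtnext$ is contained in $\logrnfs$, since the latter simply adds the axiom schema ${\rm CD}^-$ to the axiomatization of the former. Therefore every theorem of $\loghomeo$ is automatically a theorem of $\logrnfs$. Next, applying the preceding lemma, every instance of ${\rm CEM}$ is derivable in $\loghomeo$, and hence in $\logrnfs$ by the monotonicity just observed.

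With ${\rm CEM}$ available as a derived schema in $\logrnfs$, a routine induction on the length of a derivation in $\logrealax$ finishes the argument: axioms from $\logbasic$ and instances of ${\rm CD}^-$ are axioms of $\logrnfs$; instances of ${\rm CEM}$ are derivable in $\logrnfs$ as just shown; and the inference rules \ref{ax13MP} and \ref{ax14NecCirc} belong to both logics. Thus every theorem of $\logrealax$ is a theorem of $\logrnfs$.

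There is no real obstacle here; the substantive content has already been done by the preceding lemma establishing ${\rm CEM}$ in $\loghomeo$. The proposition simply packages that fact together with the inclusion $\loghomeo \subseteq \logrnfs$ that is visible directly from the axiomatic definitions.
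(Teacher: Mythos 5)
Your proposal is correct and matches the paper's own argument: the paper derives ${\rm CEM}$ in $\loghomeo$ in the immediately preceding lemma and then notes that the inclusion $\logrealax \subseteq \logrnfs$ follows immediately, exactly via the observation that $\logrnfs = \loghomeo + {\rm CD}^-$ already contains all the axioms of $\logrealax$. The routine induction on derivations you spell out is precisely what the paper leaves implicit.
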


We are also interested in logics over sublanguages of $\mathcal L_{\diam\ubox}$ or $\mathcal L_{\diam\khence}$.
For any logic $\Lambda $ defined above, let $\Lambda_\ubox$ be defined by restricting similarly all rules and axioms to $\mathcal L_\ubox$, except that when $\rm CD$ is an axiom of $\Lambda$, we add the axiom $\rm BI$ to $\Lambda_\ubox$.
In these cases, $\Lambda_{\khence}$ is similarly defined using $t_{\khence}({\rm BI})$.
The logic $\logbasicb$ is similar to a Hilbert calculus for the $\wedge,\vee$-free fragment considered by Yuse and Igarashi \cite{Yuse2006}, although they do not include induction but include the axioms $\ubox \varphi \to \ubox \ubox \varphi$ and $\ubox\tnext \varphi \leftrightarrow \tnext \ubox \varphi$.
It is not difficult to check that the latter are derivable from our basic axioms, and hence their logic is contained in $\logbasicb$.

We also define $\Lambda_\diam$ to be the logic obtained by restricting all rules and axioms to $\mathcal L_\diam$, and adding the rules $\frac{\varphi \to \psi}{\diam \varphi \to \diam \psi}$ and $\frac{\tnext \varphi \to \varphi}{\diam \varphi \to \varphi}$. Note that these rules correspond to axioms \ref{ax07:K:Dual}, \ref{ax12:ind:2}, respectively, but do not involve $\ubox$.
In this paper we are mostly concerned with logics including `henceforth', but $\ubox$-free logics are studied in detail by \cite{DieguezCompleteness}.

\section{Dynamic topological systems}\label{SecTopre}

The logics defined above are pairwise distinct.
We will show this by introducing semantics for each of them.
They will be based on dynamic topological systems (or dynamical systems for short), which, as was observed in \cite{FernandezITLc}, generalize their Kripke semantics \cite{BoudouCSL}.
In this section, we review the basic notions of topological dynamics needed in the rest of the text.
Let us first recall the definition of a {\em topological space,} as in e.g.~\cite{Dugundji}:

\begin{definition}
A {\em topological space} is a pair $\< X ,\mathcal{T}\>,$ where $X$ is a set and $\mathcal T$ a family of subsets of $X$ satisfying
\begin{enumerate*}[label=\alph*)]
	\item \mbox{$\varnothing,X\in \mathcal T$;}
	\item \mbox{if $U,V\in \mathcal T$ then $U\cap V\in \mathcal T$, and}
	\item \mbox{if $\mathcal O\subseteq\mathcal T$ then $\bigcup\mathcal O\in\mathcal T$.}
\end{enumerate*}
The elements of $\mathcal T$ are called {\em open sets}.
\end{definition}

If $x \in X$, a {\em neighbourhood} of $x$ is an open set $U \subseteq X$ such that $x \in U$. Given a set $A\subseteq X$, its {\em interior}, denoted $A^\circ$, is the largest open set contained in $A$.
It is defined formally by
\begin{equation}\label{EqInterior}
A^\circ=\bigcup\cbra U\in\mathcal T:U\subseteq A\cket.
\end{equation}
Dually, we define the closure $\overline A$ as $X\setminus(X\setminus A)^\circ$; this is the smallest closed set containing $A$.

If $\<X,\mathcal T\>$ is a topological space, a function $S\colon X \to X$ is {\em continuous} if, whenever $U \subseteq X$ is open, it follows that $S^{-1}[U]$ is open.
The function $S$ is {\em open} if, whenever $V \subseteq X$ is open, then so is $S[V]$.
An open, continuous function is an {\em interior map}, and a bijective interior map is a {\em homeomorphism;} equivalently, $S$ is a homeomorphism if it is invertible and both $S$ and $S^{-1}$ are continuous.

A dynamical system is then a topological space equipped with a continuous function:

\begin{definition}
A {\em dynamical (topological) system} is a triple $\mathcal X = (X,\mathcal T,S)$ such that $(X,\mathcal T)$ is a topological space and $S\colon X \to X$ is continuous. We say that $\mathcal X$ is {\em open} if $S$ is an interior map and {\em invertible} if $S$ is a homeomorphism.
\end{definition}
%

Topological spaces generalize posets in the following way. Let $\mathcal F=\<W,{\acc}\>$ be a poset; that is, $W$ is any set and $\acc$ is a transitive, reflexive, antisymmetric relation on $W$.
To see $\mathcal F$ as a topological space, define $\mathord \uparrow w=\cbra v:w\peq v\cket.$
Then consider the topology $\mathcal T_\peq$ on $W$ given by setting $U\subseteq W$ to be open if and only if, whenever $w\in U$, we have $\mathord \uparrow w\subseteq U$. A topology of this form is an {\em up-set topology} \cite{alek}.
The interior operator on such a topological space is given by
\begin{equation}\label{EqIntPoset}
A^\circ = \{w \in W : {\uparrow} w\subseteq A\};
\end{equation}
i.e., $w$ lies on the interior of $A$ if whenever $v \seq w$, it follows that $v\in A$.

Throughout this text we will often identify partial orders with their corresponding topologies, and many times do so tacitly.
In particular, a dynamical system generated by a poset is called a \emph{dynamical} or \emph{expanding poset}, due to its relation to expanding products of modal logics \cite{GabelaiaExpanding}.
It will be useful to characterize the continuous and open functions on posets (see Figure~\ref{fig:continuous-open}):
\begin{lemma}
  Consider a poset $\< W, \mathord\peq \>$ and a function $S\colon W \to W$. Then,
  \begin{enumerate}
  
\item   the function $S$ is continuous with respect to the up-set topology if and only if, whenever $w \peq w'$, it follows that $S(w) \peq S(w')$, and
  
\item   the function $S$ is open with respect to the up-set topology if whenever $S(w) \peq v$, there is $w' \in W$ such that $w \peq w'$ and $S(w') = v$.
    
  \end{enumerate}
\end{lemma}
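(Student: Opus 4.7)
The strategy is to unpack both conditions using the explicit description of the up-set topology, where open sets are precisely the up-sets. Both directions will then follow by elementary manipulations; the only mild subtlety is keeping track of which direction of the order is involved.

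\emph{Part (1), continuity $\Leftrightarrow$ monotonicity.} For the forward direction, assume $S$ is continuous and fix $w \peq w'$. The set $\mathord\uparrow S(w)$ is open, so $S^{-1}[\mathord\uparrow S(w)]$ is open and contains $w$; since open sets are up-sets, it also contains $w'$, whence $S(w) \peq S(w')$. For the converse, assume $S$ is monotone, let $U$ be open, and pick $w \in S^{-1}[U]$. For any $v$ with $w \peq v$ we have $S(w) \peq S(v)$ by monotonicity, and since $S(w) \in U$ and $U$ is an up-set, $S(v) \in U$, i.e.\ $v \in S^{-1}[U]$. Thus $S^{-1}[U]$ is an up-set, hence open.

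\emph{Part (2), confluence $\Rightarrow$ openness.} Assume the confluence condition: whenever $S(w) \peq v$ there is $w' \seq w$ with $S(w') = v$. Let $V$ be an open (hence up-set) subset of $W$; we must show that $S[V]$ is an up-set. Let $y \in S[V]$, so $y = S(w)$ for some $w \in V$, and suppose $y \peq v$. By the confluence hypothesis, there exists $w' \seq w$ with $S(w') = v$. Since $V$ is an up-set and $w \in V$, also $w' \in V$, so $v = S(w') \in S[V]$. Hence $S[V]$ is an up-set, i.e.\ open, and $S$ is open.

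\paragraph{Main obstacle.} There is no real obstacle; the argument is bookkeeping with the up-set topology, using equation~\eqref{EqIntPoset} together with the fact that open sets in an up-set topology coincide with up-sets. The only thing to be careful about is that the continuity argument uses the preimage $S^{-1}[\mathord\uparrow S(w)]$ (which is always well-defined and open by continuity), whereas the openness argument uses the image $S[V]$ and therefore requires the confluence hypothesis to promote membership in $S[V]$ along the order.
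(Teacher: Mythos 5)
Your proof is correct. The paper actually states this lemma without any proof at all (it is treated as a routine fact about up-set topologies), and your argument is exactly the standard one that the authors leave implicit: identifying open sets with up-sets, using $S^{-1}[\mathord\uparrow S(w)]$ for the forward direction of continuity, monotonicity for the converse, and the confluence condition to show $S[V]$ is an up-set. One small remark: part (2) of the lemma is stated only as an implication (confluence implies openness), and you correctly prove just that direction rather than attempting a biconditional.
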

These are properties common in multi-modal logics and we refer to them as `confluence properties.'
A {\em persistent} function is an open, continuous map on a poset.

\begin{figure}[H]\centering
\begin{subfigure}{.2\textwidth}
\begin{tikzpicture}[thick,->,auto,font=\small,node distance=1.5cm]
\node[state,minimum size=3.5mm,text width=3mm] (n1) {$w$};
\node[state,minimum size=3.5mm,text width=3mm] (n2) [right of=n1] {\phantom{w}};
\node[state,minimum size=3.5mm,text width=3mm] (n3) [above of=n2] {\phantom{w}};
\node[state,minimum size=3.5mm,text width=3mm] (n4) [above of=n1] {$w'$};

\path[->,very thick] (n1) edge node{$S$}(n2)
					 (n1) edge node{$\peq$}(n4)	
					(n4) edge node{$S$}(n3)
					(n2) edge[dashed] node[right] {$\peq$} (n3);

\end{tikzpicture}
\caption{Continuity}
\end{subfigure}\hspace{100pt}\begin{subfigure}{.2\textwidth}
\begin{tikzpicture}[thick,->,auto,font=\small,node distance=1.5cm]
\node[state,minimum size=3.5mm,text width=3mm] (n1) {$w$};
\node[state,minimum size=3.5mm,text width=3mm] (n2) [right of=n1] {\phantom{w}};
\node[state,minimum size=3.5mm,text width=3mm] (n3) [above of=n2] {$v$};
\node[state,minimum size=3.5mm,text width=3mm] (n4) [above of=n1] {$w'$};

\path[->,very thick] (n1) edge node{$S$}(n2)
(n1) edge[dashed] node{$\peq$}(n4)	
(n4) edge[dashed] node{$S$}(n3)
(n2) edge[] node[right] {$\peq$} (n3);
\end{tikzpicture}
\caption{Openness}
\end{subfigure}
\caption{On a dynamic poset, the above diagrams can always be completed if $S$ is continuous or open, respectively. }\label{FigCO}
\label{fig:continuous-open}
%
%
%
%
%
%
%
%
%
%
%
%
%
%
%
%
%
%
%
%
%
%
%
%
%
%
%
%
%
%
%
%
%
%
%
%
%
%
%
%
%
%
%
%
%


\end{figure}
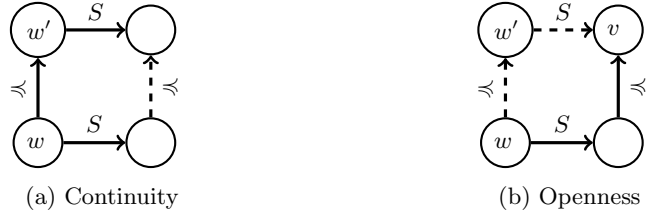

\section{Semantics}\label{SecSemantics}

In this section we will see how dynamical systems can be used to provide a natural intuitionistic semantics for the language of linear temporal logic.
Classicall $\sf LTL$ may be interpreted over structures $(X,S)$ where $X$ is a set and $S\colon X\to X$.
In this setting, $\tnext\varphi$ is true on a point $x$ if $\varphi$ is true `at the next moment,' i.e., on $S(x)$; $\diam\varphi$ is true on a point $x$ if $\varphi$ is `eventually' true, i.e.~there is $n\geq 0$ such that $\varphi$ is true on $S^n(x)$, and $\ubox \varphi$ is true on a point $x$ if $\varphi$ is `henceforth' true, i.e.~for all $n\geq 0$, $\varphi$ is true on $S^n(x)$.
Meanwhile, topological spaces provide semantics for intuitionistic logic, where each formula is assigned an open set.
Under this semantics, $\varphi\to\psi$ is true on $x$ if there is a neighborhood $U$ of $x$ (i.e., an open set $U$ with $x\in U$) such that every $y\in U$ satisfying $\varphi$ also satisfies $\psi$.
Thus it is natural to interpret intuitionistic temporal logic on dynamical systems, which are endowed with both a topology and a transition function.
In this setting, the classical definitions of $\tnext$ and $\diam$ readily adapt to the topological setting without modification.
On the other hand, the classical definition of $\ubox$ does not necessarily yield open sets, and to this end we consider two variants of henceforth, the weak variant, $\khence$, and the strong variant, which we simply denote $\ubox$.

\begin{definition}\label{DefSem}
Given a dynamical system $\mathcal X=(X,\mathcal T ,S)$, a {\em valuation on $\mathcal X$} is a function $\lb\cdot\rb\colon\mathcal L_{\diam\ubox{\khence}}\to \mathcal T$ such that
{\setlength{\columnsep}{-1cm}\begin{multicols}{2}
\begin{itemize}
\item[]$\lb\bot\rb=\varnothing$
\item[]$\lb\varphi\wedge\psi\rb=\lb\varphi\rb\cap \lb\psi\rb$
\item[]$\lb\varphi\vee\psi\rb =\lb\varphi\rb\cup \lb\psi\rb$
\item[]$\lb\varphi\to\psi\rb = \big ( (X\setminus\lb\varphi\rb)\cup \lb\psi\rb\big )^\circ$ 	
\item[]$\val{\tnext\varphi}=S^{-1} \val\varphi$
\item[]$\val{\diam\varphi}=\textstyle\bigcup\limits_{n\geq 0}S^{-n} \val\varphi$
\item[]$\val{\ubox\varphi}= \bigcup \ \Big \{U\in \mathcal T : S[U] \subseteq U\subseteq \val\varphi \Big \}$
\item[]$\val{{\khence}\varphi}= \Big (  \textstyle\bigcap\limits_{n\geq 0} S^{-n} \val \varphi \Big )^\circ $
\end{itemize}
\end{multicols}
}
A tuple $\mathcal M = (X, \mathcal T, S, \val\cdot)$ consisting of a dynamical system with a valuation is a {\em dynamic topological model,} and if $\mathcal T$ is generated by a partial order, we will say that $\mathcal M$ is a {\em dynamic poset model.}
\end{definition}

All of the semantic clauses are standard from either intuitionistic or temporal logic, with the exception of those for ${\khence}\varphi$ and $\ubox\varphi$, which we discuss in greater detail in the remainder of this section. It is not hard to check by structural induction on $\varphi$ that $\val \varphi$ is uniquely defined given any assignment of the propositional variables to open sets, and that $\val \varphi$ is always open. We define validity in the standard way, and with this introduce additional semantically-defined logics, two of which were studied in \cite{BoudouCSL}.

\begin{definition}\label{defValidForms}
If $\mathcal M = (X,\mathcal T,S,\val\cdot)$ is any dynamic topological model and $\varphi \in \mathcal L$ is any formula, we write $\mathcal M \models \varphi$ if $\val\varphi =X$. Similarly, if $\mathcal S= (X,\mathcal T,S)$ is a dynamical system, we write $\mathcal S \models \varphi$ if for any valuation $\val \cdot$ on $\mathcal X$, we have that $(\mathcal S,\val\cdot) \models \varphi$; and, if $\mathcal X = (X,\mathcal T)$ is any topological space, then $\mathcal X \models \varphi$ if, for any continuous $S\colon X\to X$, $(X,\mathcal T,S) \models\varphi$.
Finally, if $\Omega$ is a class of structures (either topological spaces, dynamical systems, or models), we write $\Omega \models \varphi$ if for every $\mathcal A \in \Omega$, $\mathcal A \models \varphi$, in which case we say that $\varphi$ is {\em valid} on $\Omega$.

If $\Omega$ is either a structure or a class of structures and $\Theta\subseteq \{\diam,\ubox,{\khence}\}$, we write ${\sf ITL}^\Omega_\Theta$ for the set of $\mathcal L _\Theta$ formulas valid on $\Omega$.
We maintain the convention that $\tnext$ is assumed to be in all languages, and we write ${\sf ITL}^\Omega_\mtnext$ when $\Theta = \varnothing$.
\end{definition}

The main classes of dynamical systems we are interested in are listed in Table \ref{tableClasses}.
For example, $\itlc$ denotes the set of $\mathcal L_{\diam\ubox}$-formulas valid over the class of {\em all} dynamical systems.

\begin{table}[H]
\begin{tabular}{cl}
\hline
$\sf c$&all dynamical systems (with a continuous function)\\
$\sf e$&expanding posets\\
$\sf p$&persistent posets\\
$\sf o $&open dynamical systems\\
$\mathbb R^n$ &systems based on $n$-dimensional Euclidean space\\
\hline
\end{tabular}
\caption{The main classes of dynamical systems appearing in the text.}
\label{tableClasses}
\end{table}

In practice, it is convenient to have a `pointwise' characterization of the semantic clauses of Definition \ref{DefSem}. For a model $\mathcal M = (X, \mathcal T, S, \val\cdot)$, $x \in X$ and $\varphi \in \mathcal L$, we write $\mathcal M,x \models \varphi$ if $x\in \val \varphi$, and $\mathcal M \models \varphi $ if $\val \varphi = X$.
Then, in view of \eqref{EqInterior}, given formulas $\varphi$ and $\psi$, we have that $\mathcal M, x \models {\varphi \to \psi}$ if and only if there is a neighbourhood $U$ of $x$ such that for all $y \in U$, if $\mathcal M, y \models \varphi$ then $\mathcal M, y \ \models \psi$; note that this is a special case of {\em neighbourhood semantics} \cite{PacuitNeighborhood}.
    The following simple observation will be useful.

\begin{lemma}\label{LemImpCrit} If $\mathcal M = (X, \mathcal T, S, \val\cdot)$ is any model and $\varphi,\psi \in \mathcal L_{\diam\ubox\khence}$, then $\mathcal M \models \varphi \to \psi$ if and only if $\lb \varphi\rb \subseteq \lb \psi \rb$.   
\end{lemma}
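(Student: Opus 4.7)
The plan is to unfold the semantic clause for implication from Definition~\ref{DefSem} and reduce the claim to a purely set-theoretic fact about interiors. By definition,
\[
\val{\varphi\to\psi} \;=\; \big((X\setminus\val\varphi)\cup\val\psi\big)^\circ,
\]
and $\mathcal M\models\varphi\to\psi$ means exactly that $\val{\varphi\to\psi}=X$. Thus the lemma reduces to showing that
\[
\big((X\setminus\val\varphi)\cup\val\psi\big)^\circ \;=\; X \quad\Longleftrightarrow\quad \val\varphi\subseteq\val\psi.
\]

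First, I would record the elementary fact that for any $A\subseteq X$, $A^\circ=X$ if and only if $A=X$. The right-to-left direction is immediate since $X\in\mathcal T$, so $X\subseteq A^\circ$ by the definition \eqref{EqInterior} of interior; the left-to-right direction follows from $A^\circ\subseteq A$.

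Next, applying this observation with $A=(X\setminus\val\varphi)\cup\val\psi$, the task becomes showing $(X\setminus\val\varphi)\cup\val\psi=X$ iff $\val\varphi\subseteq\val\psi$, which is a standard set-theoretic equivalence: the union covers $X$ exactly when every $x\in\val\varphi$ (i.e.~every $x$ missing from $X\setminus\val\varphi$) lies in $\val\psi$.

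There is no real obstacle here; the only thing worth being explicit about is that the clauses of Definition~\ref{DefSem} already guarantee $\val\varphi,\val\psi\in\mathcal T$, so the interior operation is the only non-trivial ingredient, and it is handled by the elementary observation above.
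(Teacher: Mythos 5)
Your proposal is correct and follows essentially the same route as the paper: both unfold the semantic clause for implication and reduce to the elementary facts that $A^\circ = X$ iff $A = X$ and that $(X\setminus\val\varphi)\cup\val\psi = X$ iff $\val\varphi\subseteq\val\psi$. The paper just phrases the second direction contrapositively (a witness $z\in\val\varphi\setminus\val\psi$ already fails to lie in the union, hence in its interior), which is the same content.
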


\begin{proof}
  	 If $\lb \varphi\rb \subseteq \lb \psi \rb$ then $(X\setminus\lb\varphi\rb)\cup \lb\psi\rb =  X$, so
  	$\lb\varphi\to\psi\rb =\big ( (X\setminus\lb\varphi\rb)\cup \lb\psi\rb\big )^\circ = X^\circ = X .$
  	Otherwise, there is  $z \in \lb \varphi\rb $ such that $z \notin \lb \psi\rb$, so that $ z \notin \big ( (X\setminus\lb\varphi\rb)\cup \lb\psi\rb\big )^\circ$, i.e.~$z \notin  \lb\varphi \rightarrow \psi \rb $.
\end{proof}

Using \eqref{EqIntPoset}, this can be simplified somewhat in the case that $\mathcal T$ is generated by a partial order $\peq$:

\begin{proposition}
If $(X,{\peq},S,\val\cdot)$ is a dynamic poset model, $x\in X$, and $\varphi$, $\psi$ are formulas, then $\mathcal M, x \models {\varphi \to \psi}$ if and only if from $y \seq x$ and $\mathcal M, y \models \varphi$, it follows that $\mathcal M, y \models \psi$.
\end{proposition}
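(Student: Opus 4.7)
The plan is to unfold the two definitions in sequence and observe that they match. Specifically, by Definition~\ref{DefSem} we have $\lb \varphi \to \psi \rb = \big((X \setminus \lb\varphi\rb) \cup \lb\psi\rb\big)^\circ$, so $\mathcal M, x \models \varphi \to \psi$ is by definition equivalent to $x \in \big((X \setminus \lb\varphi\rb) \cup \lb\psi\rb\big)^\circ$. The crux is then to rewrite this interior in the poset-specific form.

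Next I would invoke the characterization of the interior operator on an up-set topology provided by equation~\eqref{EqIntPoset}, namely $A^\circ = \{ w \in W : \mathord\uparrow w \subseteq A \}$. Applying this with $A = (X \setminus \lb\varphi\rb) \cup \lb\psi\rb$ turns the condition $x \in A^\circ$ into: for every $y \in X$ with $y \seq x$, we have $y \in (X \setminus \lb\varphi\rb) \cup \lb\psi\rb$. By a trivial propositional rearrangement (the disjunction is equivalent to the implication $y \in \lb\varphi\rb \Rightarrow y \in \lb\psi\rb$), this is exactly the condition that whenever $y \seq x$ and $\mathcal M, y \models \varphi$, we have $\mathcal M, y \models \psi$, which is the right-hand side of the stated equivalence.

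Since each step is a direct rewriting by definition, there is no real obstacle here; the statement is essentially a corollary of Definition~\ref{DefSem} together with equation~\eqref{EqIntPoset}. The only care needed is to note that the valuations $\lb\varphi\rb$ and $\lb\psi\rb$ are themselves open (i.e., upward closed), which is guaranteed by the structural induction remarked upon right after Definition~\ref{DefSem}; however this is not even required for the argument, since the characterization of $A^\circ$ via $\mathord\uparrow$ applies to arbitrary $A \subseteq X$ on an up-set topology.
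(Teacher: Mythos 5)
Your proof is correct and follows exactly the route the paper intends: the paper states this proposition without an explicit proof, indicating only that it follows by applying the up-set interior formula \eqref{EqIntPoset} to the semantic clause for implication, which is precisely what you do. Your closing remark is also accurate --- in an up-set (Alexandrov) topology the identity $A^\circ = \{w : \mathord\uparrow w \subseteq A\}$ holds for arbitrary $A$, so openness of $\lb\varphi\rb$ and $\lb\psi\rb$ is indeed not needed.
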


\noindent This is the standard relational interpretation of implication, and thus topological semantics are a generalization of the usual Kripke semantics.

The semantics for ${\khence}$ were originally introduced by \cite{KremerIntuitionistic} as an intuitionistic reading of `henceforth'. By analogy with $\diam$, one might try to interpret $\val{{\khence} \varphi}$ as $\bigcap_{n\geq 0} \ S^{-n} \val\varphi$. But this does not quite work since, on this interpretation, there would be no guarantee that $\val{{\khence}\varphi}$ is open. Instead, we consider interpreting $\val{{\khence}\varphi}$ as the \emph{interior} of $\bigcap_{n\geq 0} \ S^{-n} \val\varphi$.
In other words, $\mathcal M, x \models {\khence} \varphi $ if and only if there is a neighbourhood $U$ of $x$ so that for every $y\in U$ and every $n\in \mathbb N$, one has that $\mathcal M, S^n(y) \models \varphi $.

This interpretation of `henceforth' is analogous to the interpretation in \cite{RasiowaMeta}, and going back to \cite{Mostowski}, of $\forall x$ in the topological semantics for quantified intuitionistic logic. We may interpret variables as ranging over some non-empty domain $D$, and truth values as open sets in some topological space $(X,\mathcal T)$.
The semantic clauses for $\exists x$ and $\forall x$ are, essentially, the following:
\begin{eqnarray*}
\val{\exists x \varphi} & = & \textstyle\bigcup\limits_{d \in D} \val{\varphi[d/x]} \\
\val{\forall x \varphi} & = & (\textstyle\bigcap\limits_{d \in D} \val{\varphi[d/x]})^\circ
\end{eqnarray*}
Note that if $D$ is infinite then the intersection in the definition of $ \val{\forall x \varphi}$ may also be infinite and hence the application of the interior operator is necessary in order to obtain an open truth value.

The semantics for $\ubox\varphi$ are also an intuitionistic interpretation of `henceforth', but from a more algebraic perspective.
In classical temporal logic, $\lb \ubox\varphi\rb$ is the largest set contained in $\lb\varphi\rb$ which is closed under $S$. In our semantics, $\lb\ubox\varphi\rb$ is the greatest {\em open} set which is closed under $S$. 
If $\mathcal M, x \models {\ubox \varphi}$, this fact is witnessed by an open, $S$\mbox{-}invariant neighbourhood of $x$, where $U\subseteq X$ is {\em $S$-invariant} if $S[U] \subseteq U$.

\begin{proposition}
If $(X,{\mathcal T},S,\val\cdot)$ is a dynamic topological model, $x\in X$, and $\varphi$ is any formula, then $\mathcal M, x \models {\ubox \varphi}$ if and only if there is an $S$-invariant neighbourhood $U$ of $x$ such that for all $y \in U$, $\mathcal M, y \models \varphi$.
\end{proposition}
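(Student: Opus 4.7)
The statement is essentially an unpacking of the defining clause
\[ \val{\ubox\varphi} = \bigcup \bigl\{U \in \mathcal{T} : S[U] \subseteq U \subseteq \val\varphi\bigr\}, \]
so my plan is to verify each direction by directly producing the relevant witness $U$, using only the clause above plus the observation that membership in a union of open sets is the same as membership in at least one of the sets.

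For the forward direction, I assume $\mathcal{M}, x \models \ubox\varphi$, i.e.\ $x \in \val{\ubox\varphi}$. Since the right-hand side of the definition is a union, $x$ must belong to some particular $U \in \mathcal{T}$ satisfying $S[U] \subseteq U$ and $U \subseteq \val\varphi$. This $U$ is open and contains $x$, so it is a neighbourhood of $x$; it is $S$-invariant by $S[U] \subseteq U$; and for every $y \in U$ we have $y \in \val\varphi$, i.e.\ $\mathcal{M}, y \models \varphi$. Hence $U$ has all the required properties.

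For the converse direction, suppose there is an $S$-invariant neighbourhood $U$ of $x$ such that $\mathcal{M}, y \models \varphi$ for every $y \in U$. Then $U \in \mathcal{T}$, $x \in U$, $S[U] \subseteq U$, and $U \subseteq \val\varphi$, so $U$ itself appears among the sets whose union defines $\val{\ubox\varphi}$; therefore $x \in \val{\ubox\varphi}$, as required.

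There is no real obstacle here: both directions are immediate once one writes out what ``$S$-invariant neighbourhood of $x$ on which $\varphi$ holds'' means set-theoretically and compares it term by term with the three conditions $x \in U$, $S[U] \subseteq U$, and $U \subseteq \val\varphi$ appearing in the definition of $\val{\ubox\varphi}$. The only thing worth a remark is that this pointwise reformulation justifies the earlier informal description of $\ubox\varphi$ as the greatest open $S$-invariant subset of $\val\varphi$, and it is the form of the semantic clause we will actually use in later soundness arguments.
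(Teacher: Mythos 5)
Your proof is correct and is exactly the routine unpacking of the semantic clause for $\ubox$ that the paper has in mind; the paper in fact states this proposition without proof, treating it as immediate from the definition $\val{\ubox\varphi}= \bigcup \{U\in \mathcal T : S[U] \subseteq U\subseteq \val\varphi \}$. Both of your directions are sound: membership in the union yields a witnessing $S$-invariant open $U\subseteq\val\varphi$ containing $x$, and conversely any such $U$ is one of the sets in the union.
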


In fact, the open, $S$-invariant sets form a topology; that is, the family of $S$-invariant open sets is closed under finite intersections and arbitrary unions, and both the empty set and the full space are open and $S$-invariant (this  follows readily from the fact that the topology $\mathcal T $ already has these properties, as does the family of $S$-invariant sets). This topology is coarser than $\mathcal T$, in the sense that every $S$-invariant open set is (tautologically) open. Thus $\ubox$ can itself be seen as an interior operator based on a coarsening of $\mathcal T$, and $\val{\ubox\varphi}$ is always an $S$-invariant open set.

\begin{example}\label{ExBoxOnR}
As usual, the real number line is denoted by $\mathbb R$ and we assume that it is equipped with the standard topology, where $U \subseteq \mathbb R$ is open if and only if it is a union of intervals of the form $(a,b)$.
Consider a dynamical system based on $\mathbb R$ with $S \colon \mathbb R \to \mathbb R$ given by $S(x) = 2x$.
We claim that for any model $\mathcal M$ based on $(\mathbb R, S)$ and any formula $\varphi$, $\mathcal M, 0 \models {\ubox \varphi}$ if and only if $ \mathcal M \models \varphi$.

To see this, note that one implication is obvious since $\mathbb R$ is open and $S$-invariant, so if $\val\varphi = \mathbb R$ it follows that $ \mathcal M, 0 \models {\ubox\varphi}$.
For the other implication, assume that $\mathcal M, 0 \models {\ubox\varphi}$, so that there is an $S$-invariant, open $U\subseteq \val\varphi$ with $0 \in U$.
It follows from $U$ being open that for some $\varepsilon > 0$, $(-\varepsilon,\varepsilon) \subseteq U$.
Now, let $x \in \mathbb R$, and let $n$ be large enough so that $|2^{-n} x| < \varepsilon$.
Then, $2^{-n} x \in U$, and since $U$ is $S$-invariant, $x = S^n (2^{-n} x ) \in U$.
Since $x$ was arbitrary, $U = \mathbb R$, and it follows that $\mathcal M \models \varphi$.

On the other hand, suppose that $ 0 < a <x$ and $(a,\infty) \subseteq \val\varphi$.
Then, $(a,\infty)$ is open and $S$-invariant, so it follows that $x \in \val{\ubox\varphi}$.
Hence in this case we do not require that $\val\varphi = \mathbb R$. Similarly, if $x<a<0$ and $(-\infty,a) \subseteq \val \varphi$, we readily obtain $x \in \val{\ubox\varphi}$.
\end{example}
%

We will see more examples in Section \ref{SecInd}, where we show, among other things, that the two interpretations of `henceforth' are not equivalent.
In general, we only obtain one implication.

\begin{lemma}\label{lemmStrongtoWeak}
The formula $\ubox p \to {\khence} p$ is valid over the class of dynamical systems.
\end{lemma}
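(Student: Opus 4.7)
The plan is to argue pointwise, showing that any point in $\val{\ubox p}$ lies in $\val{\khence p}$, which by Lemma \ref{LemImpCrit} suffices to establish validity of $\ubox p \to \khence p$ on an arbitrary model, hence on the whole class of dynamical systems.

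First, I would fix an arbitrary dynamic topological model $\mathcal M = (X,\mathcal T,S,\val\cdot)$ and unfold the semantic clauses. Suppose $x \in \val{\ubox p}$; by Definition \ref{DefSem}, there exists $U \in \mathcal T$ with $x \in U$, $S[U] \subseteq U$, and $U \subseteq \val p$.

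Next, the key observation is that $S$-invariance iterates: by a short induction on $n$, from $S[U] \subseteq U$ one gets $S^n[U] \subseteq U$ for every $n \geq 0$. Combined with $U \subseteq \val p$, this gives $S^n[U] \subseteq \val p$, i.e.\ $U \subseteq S^{-n}\val p$, for every $n \geq 0$. Therefore $U \subseteq \bigcap_{n \geq 0} S^{-n}\val p$.

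Finally, since $U$ is open and contained in $\bigcap_{n \geq 0} S^{-n}\val p$, equation \eqref{EqInterior} (the definition of interior as the largest open subset) yields $U \subseteq \bigl(\bigcap_{n \geq 0} S^{-n}\val p\bigr)^\circ = \val{\khence p}$, so $x \in \val{\khence p}$. As $x$ was arbitrary, $\val{\ubox p} \subseteq \val{\khence p}$, and Lemma \ref{LemImpCrit} delivers $\mathcal M \models \ubox p \to \khence p$. No step here is a real obstacle; the proof is essentially bookkeeping, with the only subtlety being the use of $S$-invariance to pass to iterated preimages and the use of openness of the witness $U$ to absorb it into the interior.
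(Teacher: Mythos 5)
Your proposal is correct and follows essentially the same route as the paper's proof: take the $S$-invariant open witness $U\subseteq\val p$, iterate invariance by induction to get $U\subseteq S^{-n}\val p$ for all $n$, and use openness of $U$ to absorb it into the interior $\val{\khence p}$. The only cosmetic difference is that you explicitly cite Lemma \ref{LemImpCrit} at the end, which the paper leaves implicit.
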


\begin{proof}
Let $\mathcal M = (X,\mathcal T,S,\val\cdot)$ be any dynamical model.
Suppose that $w\in \val {\ubox p}$, and let $U$ be an $S$-invariant neighbourhood of $x$ such that $U\subseteq \val  p$.
Then, using the $S$-invariance of $U$ we see by a routine induction on $n$ that $U \subseteq S^{-n}\val p $, hence $U\subseteq \bigcap _{n\in \mathbb N}S^{-n} \val p  $.
As $U$ is open,  $U\subseteq \left ( \bigcap _{n\in \mathbb N}S^{-n} \val \varphi \right )^\circ $, and hence $w\in \val{{\khence} p }$.
Since $w$ was arbitrary, $\mathcal M \models \ubox p \to {\khence} p$.
\end{proof}

However, when restricted to `nice' dynamical systems, the two versions of `henceforth' coincide.

\begin{proposition}\label{propBoxKripke}
Let $\mathcal M = (W,{\peq},S,\val \cdot)$ be any dynamic poset model, $w\in W$ and $\varphi \in \mathcal L$. Then, the following are equivalent:

\begin{multicols}3
\begin{enumerate}[label=\alph*)]
	\item\label{ItOfficial}\mbox{$w\in \val{\ubox\varphi}$;}
	\columnbreak
	\item\label{ItInterior}\mbox{$w\in \val{{\khence}\varphi}$;}
	\columnbreak
	\item\label{ItKripke} \mbox{$\forall n \in \mathbb N \ \big ( S^n (w)  \in \val \varphi \big )$.}
\end{enumerate}
\end{multicols}

\end{proposition}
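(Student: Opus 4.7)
The plan is to establish the cycle (a) $\Rightarrow$ (b) $\Rightarrow$ (c) $\Rightarrow$ (a). The first implication is immediate from Lemma \ref{lemmStrongtoWeak}, which already shows $\val{\ubox\varphi} \subseteq \val{\khence\varphi}$ in any dynamical system, and hence in particular in every dynamic poset model.

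For (b) $\Rightarrow$ (c) I would simply unpack the definition: since $\val{\khence\varphi} = \big(\bigcap_{n\geq 0} S^{-n}\val\varphi\big)^\circ$ is by construction a subset of $\bigcap_{n\geq 0} S^{-n}\val\varphi$, the assumption $w \in \val{\khence\varphi}$ yields $w \in S^{-n}\val\varphi$, i.e.~$S^n(w) \in \val\varphi$, for every $n \in \mathbb N$. This step uses nothing specific to posets.

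The substantial implication is (c) $\Rightarrow$ (a). The key observations are that in an up-set topology the minimal open neighbourhood of any point $v$ is the upset ${\uparrow}v$, that open sets are exactly the up-closed sets, and that continuity of $S$ is equivalent to monotonicity (cf.\ the characterization stated just before Figure \ref{fig:continuous-open}); iterating monotonicity, $S^n[{\uparrow}v] \subseteq {\uparrow}S^n(v)$ for every $n$. Assuming $S^n(w) \in \val\varphi$ for every $n$, up-closedness of $\val\varphi$ gives ${\uparrow}S^n(w) \subseteq \val\varphi$ for each $n$. I would therefore offer as the witness for $w \in \val{\ubox\varphi}$ the set
\[ U \;=\; \bigcup_{n \geq 0} {\uparrow}S^n(w). \]
This set is open as a union of upsets, contains $w = S^0(w)$, and sits inside $\val\varphi$. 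For $S$-invariance, pick $x \in U$; then $x \in {\uparrow}S^n(w)$ for some $n$, and monotonicity of $S$ yields $S^{n+1}(w) \peq S(x)$, so $S(x) \in {\uparrow}S^{n+1}(w) \subseteq U$.

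I do not anticipate any serious obstacle. The one subtlety worth flagging is that the naive choice $U = {\uparrow}w$ does not suffice: although open and (when $w \in \val\varphi$) a subset of $\val\varphi$, it fails to be $S$-invariant in general, since $S[{\uparrow}w] \subseteq {\uparrow}S(w)$ need not lie inside ${\uparrow}w$. Taking the union along the entire forward orbit of $w$ precisely repairs this defect while keeping the witness inside $\val\varphi$.
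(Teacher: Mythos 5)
Your proof is correct and follows essentially the same route as the paper: the same cycle of implications, with (a)$\Rightarrow$(b) cited from Lemma \ref{lemmStrongtoWeak}, (b)$\Rightarrow$(c) by unpacking the interior, and the same witness $U=\bigcup_{n\geq 0}{\uparrow}S^n(w)$ with the same monotonicity argument for $S$-invariance in (c)$\Rightarrow$(a). The remark about why ${\uparrow}w$ alone fails is a nice addition but the substance is identical to the paper's proof.
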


\begin{proof}
By Lemma \ref{lemmStrongtoWeak}, \ref{ItOfficial} implies \ref{ItInterior}.
That \ref{ItInterior} implies \ref{ItKripke} is immediate from
$\left ( \bigcap _{n\in\mathbb N} S^{-n} \val \varphi \right )^\circ \subseteq \bigcap _{n\in\mathbb N} S^{-n} \val \varphi ,$
so it remains to show that \ref{ItKripke} implies \ref{ItOfficial}.
Suppose that for all $n\in \mathbb N$, $\mathcal M, S^n (w)  \models \varphi$, and let $U = \bigcup_{n \in \mathbb N} {\uparrow} S^n(w)$. That the set $U$ is open follows from each ${\uparrow} S^n(w)$ being open and unions of opens being open.
If $v \in U$, then $v \seq S^n(w)$ for some $n\in \mathbb N$ and hence by upwards persistence, from $\mathcal M, S^n(w) \models \varphi$ we obtain $\mathcal M, v\models \varphi$; moreover, $S(v) \seq S^{n+1}(w)$ so $S(v) \in U$. Since $v \in U$ was arbitrary, we conclude that $U$ is $S$-invariant and $U\subseteq \val \varphi$.
Thus $U$ witnesses that $\mathcal M, w\models \ubox \varphi$.
\end{proof}

As we will see later, Proposition \ref{propBoxKripke} fails over the class of general dynamical systems, but a weaker version holds over the class of open dynamical systems.

\begin{proposition}\label{propBoxHomeo}
The formula ${\khence} p \leftrightarrow \ubox p$ is valid over the class of open dynamical systems.
\end{proposition}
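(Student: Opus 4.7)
The plan is to use Lemma \ref{lemmStrongtoWeak} for the direction $\ubox p \to {\khence} p$, which is already valid over all dynamical systems, and focus the argument on the reverse implication ${\khence} p \to \ubox p$, where openness of $S$ is essential. Fix an open dynamical model $\mathcal M = (X, \mathcal T, S, \val\cdot)$ and let $x \in \val{{\khence} p}$. By definition of ${\khence}$, there is an open neighbourhood $V$ of $x$ such that $V \subseteq \bigcap_{n\geq 0} S^{-n}\val p$; that is, $S^n(y) \in \val p$ for every $y \in V$ and every $n \geq 0$.

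The key idea is to ``saturate'' $V$ under forward iterates of $S$. Define
\[ U \;=\; \bigcup_{n\geq 0} S^n[V]. \]
Because $S$ is an interior map, compositions of $S$ with itself are also open, so each $S^n[V]$ is an open set, and hence $U$ is open as a union of opens. Clearly $x \in S^0[V] = V \subseteq U$, so $U$ is a neighbourhood of $x$. Moreover,
\[ S[U] \;=\; \bigcup_{n\geq 0} S^{n+1}[V] \;\subseteq\; U, \]
so $U$ is $S$-invariant.

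It remains to check that $U \subseteq \val p$. If $y \in U$, then $y = S^n(z)$ for some $n\geq 0$ and some $z \in V$; since $V \subseteq S^{-n}\val p$ by the choice of $V$, we have $y = S^n(z) \in \val p$. Thus $U$ is an $S$-invariant open neighbourhood of $x$ contained in $\val p$, witnessing $x \in \val{\ubox p}$. As $x$ was arbitrary, $\val{{\khence} p} \subseteq \val{\ubox p}$, and combined with Lemma \ref{lemmStrongtoWeak} and Lemma \ref{LemImpCrit} this yields $\mathcal M \models {\khence} p \leftrightarrow \ubox p$.

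The only place openness of $S$ is used is in ensuring that $S^n[V]$ is open for each $n$; this is the obstacle that is absent in the non-open case and which blocks the general version of the equivalence. Otherwise the construction of $U$ is essentially the topological analogue of the up-set $\bigcup_n {\uparrow}S^n(w)$ used in the proof of Proposition \ref{propBoxKripke}.
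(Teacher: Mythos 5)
Your proof is correct. It differs from the paper's in the choice of witness set: the paper takes $U=\bigl(\bigcap_{n\geq 0}S^{-n}\val p\bigr)^\circ=\val{{\khence}p}$ itself and shows it is already $S$-invariant, by observing that $S[U]\subseteq\bigcap_{n\geq 0}S^{-n}\val p$ and that $S[U]$ is open (this is where openness of $S$ enters), hence $S[U]$ is contained in the interior, i.e.\ in $U$. You instead saturate a small neighbourhood $V$ under forward iterates, taking $U=\bigcup_{n\geq 0}S^n[V]$, which makes $S$-invariance trivial and pushes the use of openness into the openness of each $S^n[V]$; as you note, this is the direct topological analogue of the up-set construction in Proposition \ref{propBoxKripke}. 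Both arguments are sound and of comparable length. The paper's version has the small added benefit of showing that $\val{{\khence}p}$ is itself the greatest $S$-invariant open subset of $\val p$ (so the two clauses literally compute the same set), whereas yours produces a possibly smaller invariant witness per point; for the validity claim this makes no difference.
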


\proof
One implication is Lemma \ref{lemmStrongtoWeak}, so we focus on the other.
Let $(X,\mathcal T,S,\val\cdot)$ be a dynamical model.
Assume that $w\in \val{{\khence} p}$, and let $U = \left ( \bigcap _{n\in \mathbb N} S^{-n} \val p \right )^\circ$.
Clearly $U$ is open; we claim that it is $S$-invariant.
We have that
\[S[U] = S \left ( \left ( \bigcap _{n\in \mathbb N} S^{-n} \val p \right )^\circ \right ) \subseteq S  \left ( \bigcap _{n\in \mathbb N} S^{-n} \val p \right ) \subseteq   \bigcap _{n\in \mathbb N} S^{-n} \val p , \]
where the latter inclusion is obtained by distributing $S$ over the intersection.
Moreover, $S[U]$ is open, since $S$ is assumed to be an open function.
Thus
\[S[U] \subseteq \left ( \bigcap _{n\in \mathbb N} S^{-n} \val p \right )^\circ = U,\]
witnessing that $w\in \val{\ubox p}$.
\endproof

\section{Soundness}\label{SecSound}

In this section we will show that several of the logics we have considered are sound for their semantics based on different classes of dynamic topological systems. 
    First we show that our basic logics (as given in Definition \ref{defLogbasic} and Table \ref{tableLogics}) are sound for the class of {\em all} dynamical systems.
Below, recall that $\sf c$ denotes the class of all dynamical systems (see Table \ref{tableClasses}).
   
\begin{theorem}\label{ThmSoundZero}
The logics $\logbasic$ and $\logbasica$ are sound for the class of dynamical systems; that is, $\logbasic \subseteq \itlc$ and $\logbasica \subseteq \itlca$.
  \end{theorem}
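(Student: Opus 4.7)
The plan is a standard soundness argument by induction on the length of a derivation: verify that every axiom of Definition~\ref{defLogbasic} is valid on every dynamical system, and that Modus Ponens and the two necessitation rules preserve validity. Lemma~\ref{LemImpCrit} reduces each axiom of the form $\varphi\to\psi$ to the set-theoretic inclusion $\val\varphi\subseteq\val\psi$, which is how I would phrase each case.

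For the purely propositional and $\tnext$-axioms the work is routine and I would only sketch it. Axioms~\ref{ax01Taut} follow because the interpretation in a dynamic topological model restricted to the propositional fragment is exactly the standard topological semantics of intuitionistic logic. For the $\tnext$-axioms, I would observe that $S^{-1}$ commutes with union, intersection and set complement, yielding~\ref{ax02Bot}--\ref{ax04NexVee} immediately; and the continuity of $S$ gives the key inclusion $S^{-1}(U^\circ)\subseteq (S^{-1}U)^\circ$, from which~\ref{ax05KNext} falls out after unfolding $\val{\cdot\to\cdot}$. The $\diam$-axioms~\ref{ax10DiamT} and~\ref{ax10DiamFix} are immediate from $\val{\diam\varphi}=\bigcup_{n\ge 0}S^{-n}\val\varphi$ (using $n=0$ and the shift $n\mapsto n+1$ respectively).

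The central cases are the four $\ubox$-axioms~\ref{ax06KBox},~\ref{ax07:K:Dual},~\ref{ax09BoxT},~\ref{ax09BoxFix} together with the induction axioms~\ref{ax11:ind:1},~\ref{ax12:ind:2}. Here I would use throughout the characterization that $x\in\val{\ubox\varphi}$ iff there is an open $S$-invariant neighbourhood $U$ of $x$ with $U\subseteq\val\varphi$ (already stated in the text). From this, \ref{ax09BoxT} and \ref{ax09BoxFix} are essentially by definition, and the generalization rule $\varphi/\ubox\varphi$ uses $X$ itself as the witness. For \ref{ax06KBox} and \ref{ax07:K:Dual}, given witnesses $U,V$ for the hypotheses, the intersection $U\cap V$ is open and $S$-invariant, and on it the implication $\varphi\to\psi$ may be applied pointwise. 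The subtle case is the induction axiom~\ref{ax11:ind:1}: given an open $S$-invariant $U\subseteq\val{\varphi\to\tnext\varphi}$ and $y\in U\cap\val\varphi$, I would define $W:=U\cap\val\varphi$ and verify that $W$ is $S$-invariant, using that for every $z\in W$ the inclusion $z\in U\subseteq\val{\varphi\to\tnext\varphi}$ together with $z\in\val\varphi$ forces $S(z)\in\val\varphi$, while $S$-invariance of $U$ keeps $S(z)\in U$; then $W$ is an open $S$-invariant subset of $\val\varphi$ containing $y$, so $y\in\val{\ubox\varphi}$. The dual \ref{ax12:ind:2} is similar but uses a short finite induction on the first $n$ with $S^n(y)\in\val\varphi$, walking backwards along $\tnext\varphi\to\varphi$. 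I expect constructing these $S$-invariant witnesses (and checking that $S$-invariance survives intersection with $\val\varphi$) to be the main technical point, although it is not deep.

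For $\logbasica\subseteq\itlca$, I would re-run the same induction in the weak semantics, where $\val{\khence\varphi}=(\bigcap_{n\ge 0}S^{-n}\val\varphi)^\circ$. Most clauses go through verbatim with the characterization ``$x\in\val{\khence\varphi}$ iff some open $U\ni x$ satisfies $S^n[U]\subseteq\val\varphi$ for all $n$.'' The one axiom that differs from $\logbasic$ is the replacement of~\ref{ax09BoxFix} by $\wboxfix$; its validity is immediate from the trivial set inclusion $\bigcap_{n\ge 0}S^{-n}\val\varphi\subseteq\bigcap_{n\ge 0}S^{-(n+1)}\val\varphi$, which persists after taking interiors. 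The induction axiom is handled as above, with the witness $W=U\cap\val\varphi$: iterating the condition on $U$ shows $S^n(z)\in\val\varphi$ for every $z\in W$ and every $n$, so $W\subseteq\val{\khence\varphi}$. Thus the induction argument goes through and both soundness statements are established.
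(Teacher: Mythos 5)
Your proposal is correct and follows essentially the same route as the paper's proof: a case-by-case verification of the axioms via Lemma~\ref{LemImpCrit}, using the open $S$-invariant neighbourhood characterization of $\ubox$, the witness $U\cap\val\varphi$ for the induction axiom~\ref{ax11:ind:1}, the minimal-$n$ argument for~\ref{ax12:ind:2}, and the interior-preserving set inclusion for $\wboxfix$ in the weak semantics. The only differences are cosmetic (e.g.\ deriving~\ref{ax05KNext} from $S^{-1}(U^\circ)\subseteq(S^{-1}U)^\circ$ rather than pointwise), so no further changes are needed.
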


\begin{proof}
Let $\mathcal M = (X , \mathcal T, S, \val \cdot)$ be any dynamical topological model; we must check that all the axioms \ref{ax01Taut}-\ref{ax12:ind:2} are valid on $\mathcal M$ and the rules \ref{ax13MP}, \ref{ax14NecCirc} preserve validity. Note that all intuitionistic tautologies are valid due to the soundness for topological semantics \cite{MintsInt}. Many of the other axioms can be checked routinely, so we focus only on those axioms involving the continuity of $S$ or the semantics for $\ubox$ or ${\khence}$.
\medskip

\ignore{
\noindent \ref{ax02Bot}-\ref{ax04NexVee} These axioms follow from the fact that $S^{-1}[\cdot]$ is a Boolean homomorphism; for example, for \ref{ax03NexWedge},
\begin{align*}
&\lb \tnext \left( \varphi \wedge \psi \right) \rb = S^{-1}\val{\varphi \wedge \psi} = S^{-1} \big [\val{\varphi} \cap \val{\psi} \big ]\\
& = S^{-1} \val{\psi} \cap  S^{-1} \val{\psi} = \val { \tnext \varphi } \cap \val{ \tnext \psi } = \lb \tnext \varphi \wedge\tnext \psi \rb,
\end{align*}
so $\mathcal M \models \tnext (\varphi \wedge \psi) \leftrightarrow \tnext \varphi \wedge \tnext \psi$ in view of Lemma \ref{LemImpCrit}. Henceforth, we will use the latter lemma without mention.
\medskip
}

\noindent{\sc Axiom \ref{ax05KNext}: $\tnext(\varphi\to\psi) \to(\tnext\varphi\to\tnext \psi)$.} Suppose that $x\in \val{\tnext(\varphi \to \psi)}$. Then, $S(x) \in \val {\varphi \to \psi}$. Since $S$ is continuous and $\val {\varphi \to \psi}$ is open, $U = S^{-1} \val{\varphi \to \psi}$ is a neighbourhood of $x$. Then, for $y \in U$, if $y \in \val{\tnext \varphi}$, it follows that $S(y) \in \val \varphi \cap \val {\varphi \to \psi}$, so that $S(y) \in \val \psi$ and $y \in \val {\tnext \psi}$. Since $y \in U$ was arbitrary, $x \in \val{\tnext\varphi \to \tnext \psi}$, thus $\val{\tnext(\varphi \to \psi)} \subseteq \val {\tnext \varphi \to \tnext \psi}$, and by Lemma \ref{LemImpCrit} (which we will henceforth use without mention), axiom \ref{ax05KNext} is valid on $\mathcal M$.
\medskip

\noindent{\sc Axiom \ref{ax06KBox}: $\ubox(\varphi\to\psi)\to (\ubox\varphi\to\ubox\psi)$.} Observe that $\val{\ubox(\varphi \to \psi)}$ is an $S$-invariant open subset of $\val{\varphi \to \psi}$.
Similarly, $\val{\ubox\varphi}$ is an $S$-invariant open subset of $\val{\varphi}$.
Let
$U = \val{\ubox(\varphi \to \psi)} \cap \val{\ubox\varphi}$.
Since $U$ is open, it suffices to prove that $U \subseteq \val{\ubox\psi}$.
Moreover, $U$ is $S$-invariant, therefore it suffices to prove that $U \subseteq \val{\psi}$,
which is direct because $U \subseteq \val{\varphi \to \psi} \cap \val{\varphi}$ and $\val{\varphi \to \psi} \subseteq (X \setminus \val{\varphi}) \cup \val{\psi}$.
\medskip

\noindent{\sc Axiom \ref{ax07:K:Dual}: $\ubox(\varphi\to\psi)\to (\diam\varphi\to \diam \psi)$.} As before, suppose that $x\in \val{\ubox(\varphi \to \psi)}$, and let $U$ be an $S$-invariant neighbourhood of $x$ such that $U \subseteq \val{\varphi \to \psi}$. If $y \in U \cap \val{\diam \varphi}$, then $S^n(y)\in \val \varphi$ for some $n$; since $U$ is $S$-invariant, $S^n(y) \in U$, hence $S^n(y) \in \val {\psi}$ and $y \in \val{\diam \psi}$. We conclude that $x \in \val {\diam \varphi \to \diam \psi}$.
\medskip

\noindent \noindent{\sc Axioms \ref{ax09BoxT}, \ref{ax09BoxFix}: $\ubox\varphi\to \varphi\wedge\tnext\ubox\varphi$.} Suppose that $x\in \val {\ubox\varphi}$, and let $U\subseteq \val \varphi$ be an $S$-invariant neighbourhood of $x$. Then, $x \in U$, so $x \in \val \varphi$. Moreover, $U$ is also an $S$-invariant neighbourhood of $S(x)$, so $S(x) \in \val {\ubox \varphi}$ and thus $x \in \val{\tnext \ubox \varphi}$. We conclude that $x \in \val{\varphi }\cap \val{ \tnext\ubox \varphi}$.
\medskip

\ignore{
\noindent \ref{ax10DiamFix} This axiom is also standard from relational semantics. If $x\in \val {\varphi \vee \tnext \diam \varphi}$, then either $x \in \val \varphi$ and hence $S^0(x) \in \val \varphi$, yielding $x\in \val {\diam \varphi}$, or $S(x) \in \val{\diam \varphi}$, yielding $S^{n+1}(x) = S^n(S(x)) \in \val \varphi$ for some $n$, hence $x\in \val {\diam \varphi}$.
\medskip
}

\noindent{\sc Axiom \ref{ax11:ind:1}: $\ubox (\varphi \to \tnext \varphi) \to (\varphi\to \ubox \varphi)$.} Suppose that $x\in \val{\ubox (\varphi \to \tnext \varphi)}$. If $x \in \val \varphi$, then $U = \val \varphi \cap \val{\ubox (\varphi \to \tnext \varphi)}$ is open (by the intuitionistic semantics) and $S$-invariant, since if $y \in U$, from $y \in \val{\varphi \to \tnext\varphi}$ we obtain $S(y) \in \val \varphi$. It follows that $U$ is an $S$-invariant neighbourhood of $x$, so $x\in \val {\ubox \varphi}$.
\medskip

\noindent{\sc Axiom \ref{ax12:ind:2}: $\ubox (\tnext \varphi \to \varphi) \to (\diam \varphi\to  \varphi)$.} Suppose that $x\in \val{ \ubox(\tnext \varphi \to \varphi)} \cap \val{ \diam \varphi}$. Let $U\subseteq \val {\tnext \varphi \to \varphi}$ be an $S$-invariant neighbourhood of $x$.
Let $n$ be least so that $S^n(x) \in \val \varphi$; if $n > 0$, since $U$ is $S$-invariant we see that $S^{n-1}(x) \in U\subseteq  \val {\tnext\varphi \to \varphi}$, hence $S^{n-1}(x) \in \val \varphi$, contradicting the minimality of $n$. Thus $n = 0$ and $x \in \val \varphi$.
\medskip


%

\noindent{\sc Axiom $ t_{\khence}$\ref{ax06KBox}: $\khence(\varphi\to\psi) \to (\khence\varphi\to\khence \psi)$.} Assume that $x\in \val{{\khence}(\varphi \to \psi)}$. We claim that if $y \in \val{{\khence}(\varphi \to \psi)} \cap \val {{\khence} \varphi}$ then $y \in \val{{\khence} \psi}$, from which we obtain $x \in \val{{\khence} \varphi \to {\khence}\psi}$.
If $y \in \val{{\khence}(\varphi \to \psi)} \cap \val{{\khence} \varphi}$ then note that for all $z \in  \val{{\khence}(\varphi \to \psi)} \cap \val{{\khence} \varphi}$, by definition of the semantics, we have that for all $k \ge 0$, $S^k(z) \in \val{\varphi \to \psi}$ and $S^k(z) \in \val{\varphi}$, so $S^k(z) \in \val{\psi}$. 
Since $U:=\val{{\khence}(\varphi \to \psi)} \cap \val{{\khence} \varphi}$ is a neighbourhood of $y$ and $z\in U$ was arbitrary,
$y \in \val{{\khence} \psi}$.
Since $O:=\val{{\khence}(\varphi \to \psi)}$ is a neighbourhood of $x$ and $y\in O$ was arbitrary, this witnesses that $x\in \val{{\khence} \varphi \to {\khence} \psi}$.

\medskip

\longversion{
\noindent $t_{\khence}$\ref{ax07:K:Dual}  suppose that $x \in \val{{\khence}(\varphi \to \psi)}$. We claim that if $y\in  \val{{\khence}(\varphi \to \psi)} \cap \val{\diam \varphi}$ then $y \in \val{\diam \psi}$, for all $y$. If $y\in  \val{{\khence}(\varphi \to \psi)} \cap \val{\diam \varphi}$ then $S^i(y) \in \val{\varphi} \cap \val{\varphi \to \psi}$ for some $i\ge 0$. Therefore,
$S^i(y) \in \val{\psi}$ and $y\in \val{\diam \psi}$. Since $U:= \val{{\khence}(\varphi \to \psi)} \cap \val{\diam \varphi}$ is a neighbourhood of $x$ and $y$ was arbitrary then $x \in \val{\diam\varphi \to \diam \psi}$.
\medskip
}

\noindent{\sc Axiom $t_{\khence}$(\wboxfix): $\khence\varphi\to\khence\tnext\varphi $.} Suppose that $x \in \val{{\khence}\varphi }$. Let $U:= \val{{\khence} \varphi}$ be a neighbourhood of $x$ such that, for all $y\in U$ and $n\in \mathbb N$, $S^n(y) \in \val \varphi$.
Therefore, for all $y\in U$ and $n\in \mathbb N$, $S^{n+1}(y) \in \val \varphi$, i.e., $y\in \val{\tnext\varphi}$, so that $U$ witnesses that $x \in \val{{\khence}\tnext\varphi }$.

\medskip

\noindent{\sc Axiom $t_{\khence}$\ref{ax11:ind:1}: ${\khence} (\varphi \to \tnext \varphi) \to (\varphi\to\khence\varphi)$.} Suppose that $x\in \val{{\khence} (\varphi \to \tnext \varphi)}$. If $x \in \val{\varphi}$, then $U = \val{\varphi} \cap \val{{\khence} (\varphi \to \tnext \varphi)}$ is a neighbourhood of $x$. It can be proved by induction on $i$ that for all $i \ge 0$ and $y\in U$, $S^i(y) \in \val{\psi}$, so that $U$ witnesses that $x \in \val{{\khence} \psi}$.
\medskip

\longversion{
\noindent $t_{\khence}$\ref{ax12:ind:2} Suppose that $x\in \val{ {\khence}(\tnext \varphi \to \varphi)}$. Since $x \in \val{\varphi}$, take the least $i > 0$ for which $S^i(x)  \in \val{\varphi}$. If $i>0$, since $x \in \val{{\khence}(\tnext \varphi \to \varphi)}$ then, $S^{i-1}(x) \in \val{\varphi}$ contradicting the mininality of $i$. Therefore, $i=0$ and $x\in \val{\varphi}$.
}
\noindent{\sc Axioms $t_{\khence}$\ref{ax07:K:Dual}, $t_{\khence}$\ref{ax12:ind:2}:} These are treated as their analogues for $\ubox$.
\end{proof}


The additional axioms we have considered are valid over specific classes of dynamical systems. Specifically, the constant domain axiom is valid for the class of expanding posets, while the Fischer Servi axioms are valid for the class of open systems. Let us begin by discussing the former in more detail.
In the next few results, recall that the relevant definitions are summarized in Tables \ref{tableLogics} and \ref{tableClasses}.

\begin{theorem}\label{ThmSoundCD}
The logics $\logexp$ and $\logexpb$ are sound for the class of expanding posets; that is, $\logexp \subseteq \itle$ and $\logexpb \subseteq \itleb$.
\end{theorem}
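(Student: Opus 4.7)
The plan is to piggyback on Theorem \ref{ThmSoundZero} and then verify the remaining axioms directly on expanding posets, exploiting Proposition \ref{propBoxKripke}, which collapses the topological semantics of $\ubox$ into the simple orbit condition $\forall n\,S^n(w)\in\val\varphi$. Since every expanding poset is a dynamical system (its up-set topology makes the continuous $S$ into a continuous map on a topological space), Theorem \ref{ThmSoundZero} already gives $\logbasic\subseteq\itle$, so it only remains to check the extra axioms: ${\rm CD}$ for the claim about $\logexp$, and (recalling the definition of $\Lambda_\ubox$) both ${\rm BI}$ and the restrictions of the basic axioms for the claim about $\logexpb$.

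For ${\rm CD}$, I would fix an expanding poset model $\mathcal M=(W,{\peq},S,\val\cdot)$ and a point $w\in \val{\ubox(\varphi\vee\psi)}$. By Proposition \ref{propBoxKripke}, $S^n(w)\in \val\varphi\cup\val\psi$ for every $n$. Work classically in the metatheory and split cases: if some $n$ satisfies $S^n(w)\in\val\psi$, then $w\in\val{\diam\psi}$; otherwise, $S^n(w)\in\val\varphi$ for all $n$, and Proposition \ref{propBoxKripke} gives $w\in\val{\ubox\varphi}$. Either way $w\in\val{\ubox\varphi}\cup\val{\diam\psi}=\val{\ubox\varphi\vee\diam\psi}$, so by Lemma \ref{LemImpCrit}, $\mathcal M\models {\rm CD}(\varphi,\psi)$.

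For $\logexpb$, the restricted versions of the basic axioms inherit their validity from Theorem \ref{ThmSoundZero}. It remains to verify ${\rm BI}(\varphi,\psi)$. Suppose $w\in \val{\ubox(\varphi\vee\psi)}\cap\val{\ubox(\tnext\psi\to\psi)}$. By Proposition \ref{propBoxKripke} each $S^n(w)$ satisfies $\varphi\vee\psi$ and also $\tnext\psi\to\psi$. Do the same dichotomy: if no $n$ satisfies $S^n(w)\in\val\psi$, then $S^n(w)\in\val\varphi$ for all $n$ and again $w\in\val{\ubox\varphi}$. Otherwise, let $n$ be least with $S^n(w)\in\val\psi$; using the intuitionistic reading of implication on the poset (if $v\peq v'$ and $S(v')\in\val\psi$ then $v'\in\val\psi$), one descends: since $S^{n-1}(w)\in\val{\tnext\psi\to\psi}$ and $S(S^{n-1}(w))=S^n(w)\in\val\psi$, we get $S^{n-1}(w)\in\val\psi$, and iterating yields $w\in\val\psi$. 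Hence $w\in\val{\ubox\varphi\vee\psi}$ and another application of Lemma \ref{LemImpCrit} finishes the argument.

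The main subtlety — and the only point where one must be a little careful — is the final backwards-induction step: it looks at first like one has to invoke the open-set semantics of implication, but because the interpretation of $\ubox(\tnext\psi\to\psi)$ at $w$ forces every $S^k(w)$ itself (not just a neighbourhood) to belong to $\val{\tnext\psi\to\psi}$, and because on a poset $v\in\val{\tnext\psi\to\psi}$ directly entails the pointwise implication $S(v)\in\val\psi \Rightarrow v\in\val\psi$ at $v$ itself (take $v'=v$ in the order), the descent goes through cleanly.
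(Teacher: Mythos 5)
Your proof is correct, and the core of it — reducing to Theorem \ref{ThmSoundZero} and then verifying ${\rm CD}$ via Proposition \ref{propBoxKripke} with a classical case split in the metatheory — is exactly the paper's argument. Where you diverge is in the treatment of ${\rm BI}$: the paper dispatches it in one line by citing Proposition \ref{PropConstoBI}, under which $\AxBInd\varphi\psi$ is derivable from $\AxCons\varphi\psi$ in $\wlogbasic$, so validity of ${\rm CD}$ over expanding posets automatically yields validity of ${\rm BI}$ there; you instead verify ${\rm BI}$ semantically by a backwards-induction descent, using the fact that on a poset $v\in\val{\tnext\psi\to\psi}$ gives the pointwise implication at $v$ itself by reflexivity of $\peq$. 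Both are sound. The paper's reduction is shorter and reuses already-established machinery, while your direct argument has the small virtue of staying entirely inside the $\diam$-free fragment (the derivation in Proposition \ref{PropConstoBI} routes through $\diam$-axioms, which is harmless for soundness but makes your version slightly more self-contained as a proof about $\logexpb$). Your descent step is also correctly justified: whether you read it as a contradiction with the minimality of $n$ or as an iteration down to $w$, the conclusion $w\in\val\psi$ follows.
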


\begin{proof}	
Let $\mathcal M = (X,{\peq}, S, \val\cdot)$ be a dynamic poset model; in view of Theorem~\ref{ThmSoundZero}, it only remains to check that $\rm CD$ and $\rm BI$ are valid on $\mathcal M$. However, by Proposition~\ref{PropConstoBI}, $\rm BI$ is a consequence of $\rm CD$, so we only check the latter.
Suppose that $x\in \val {\ubox (\varphi \vee \psi )}$, but $x \not \in \val {\ubox \varphi }$. Then, in view of Proposition~\ref{propBoxKripke}, for some $n \geq 0$, $S^n (x) \not \in \val{\varphi}$. It follows that $S^n (x) \in \val{\psi}$, so that $x \in \val {\diam \psi}$.
\end{proof}

Note that the relational (rather than topological) semantics are used in an essential way in the above proof, since Proposition~\ref{propBoxKripke} is not available in the topological setting, and indeed we will show in Proposition~\ref{PropInd} that these axioms are not topologically valid. But before that, let's turn our attention to the Fischer Servi axioms (see Table \ref{tableLogics}).
Recall that $\sf o$ denotes the class of dynamical systems with a continuous and open map.

\begin{theorem}\label{ThmSoundFS}
$\loghomeo \subseteq {\sf ITL}_{\diam\ubox}^{\sf o}$, i.e., the logic $\loghomeo$ is sound for the class of open dynamical systems.
\end{theorem}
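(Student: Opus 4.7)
The plan is to reduce to verifying just the new axiom schema ${\rm FS}_\mtnext (\varphi, \psi) = (\tnext \varphi \to \tnext \psi) \to \tnext (\varphi \to \psi)$ on an arbitrary open dynamical model $\mathcal M = (X, \mathcal T, S, \val\cdot)$. By Theorem~\ref{ThmSoundZero}, $\logbasic \subseteq \itlc$, and since every open dynamical system is in particular a dynamical system, all axioms and rules of $\logbasic$ are already valid on $\mathcal M$. Thus, by the definition $\loghomeo = \logbasic + {\rm FS}_\mtnext$ in Table~\ref{tableLogics}, it suffices to show $\mathcal M \models {\rm FS}_\mtnext (\varphi, \psi)$ for all $\varphi, \psi$.

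For this, I would fix $x \in \val{\tnext \varphi \to \tnext \psi}$ and aim for $x \in \val{\tnext(\varphi \to \psi)}$, which (using Lemma~\ref{LemImpCrit}) will establish the required inclusion. By the definition of the semantics for implication, there is a neighbourhood $U$ of $x$ such that for every $y \in U$, if $S(y) \in \val\varphi$ then $S(y) \in \val\psi$. To witness $S(x) \in \val{\varphi \to \psi}$, I need an open neighbourhood of $S(x)$ on which $\val\varphi$ entails $\val\psi$. The natural candidate is $V := S[U]$: clearly $S(x) \in V$, and $V$ is open precisely because $S$ is an open map — this is the only place where openness of $S$ is used, and it is the crux of the argument.

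It then remains to check that $V$ has the required property. If $z \in V$ with $z \in \val\varphi$, then $z = S(y)$ for some $y \in U$, so $y \in S^{-1}\val\varphi = \val{\tnext\varphi}$; the defining property of $U$ gives $y \in \val{\tnext\psi}$, i.e., $z = S(y) \in \val\psi$. Hence $S(x) \in \val{\varphi \to \psi}$, so $x \in \val{\tnext(\varphi \to \psi)}$, as required.

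The main obstacle is really just conceptual: recognizing that openness of $S$ is exactly what is needed to promote the preimage-based neighbourhood $U$ around $x$ into a genuine (i.e., open) neighbourhood $S[U]$ around $S(x)$. Continuity alone would not suffice, since $S[U]$ need not be open in general. Beyond this observation, the verification is a direct unfolding of the neighbourhood semantics for implication together with the clause $\val{\tnext\varphi} = S^{-1}\val\varphi$, with no induction on formula complexity required.
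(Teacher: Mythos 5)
Your proposal is correct and follows essentially the same argument as the paper: both reduce to verifying ${\rm FS}_\mtnext$ via Theorem \ref{ThmSoundZero}, and both use the openness of $S$ to turn $S[U]$ into a neighbourhood of $S(x)$ witnessing $S(x) \in \val{\varphi\to\psi}$. The only cosmetic difference is that the paper takes $U = \val{\tnext\varphi\to\tnext\psi}$ itself rather than an arbitrary witnessing neighbourhood, which changes nothing.
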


\begin{proof}
Let $\mathcal M = (X,\mathcal T, S, \val \cdot)$ be a dynamical topological model where $S$ is an interior map. We check that axiom ${\rm FS}_\mtnext$ is valid on $\mathcal M$.
Suppose that $x \in \val {\tnext \varphi \to \tnext \psi}$, and let $U = \val {\tnext \varphi \to \tnext \psi}$. Since $S$ is open, $V = S [U]$ is a neighbourhood of $S(x)$. Let $y \in V \cap \val \varphi$, and choose $z\in U$ so that $y = S(z)$. Then, $z\in U\cap \val {\tnext \varphi}$, so that $S(z) = y \in \val {\tnext \psi}$. Since $y\in V$ was arbitrary, $S(x) \in \val{\varphi \to \psi}$, and $x\in \val{\tnext (\varphi \to \psi)}$.
\end{proof}

As an easy consequence, we mention the following combination of Theorems \ref{ThmSoundCD} and \ref{ThmSoundFS}. Recall that dynamic posets with an interior map are also called {\em persistent,} and the class of persistent posets is denoted $\sf p$.

\begin{corollary}\label{CorSoundOne}
The logics $\logpers$ and $\logpersb$ are sound for the class of persistent posets, that is, $\logpers \subseteq \itlp$ and $\logpersb \subseteq \itlpb$.
\end{corollary}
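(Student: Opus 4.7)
The plan is to observe that the class of persistent posets sits inside the intersection of the two classes already treated, namely expanding posets (handled by Theorem~\ref{ThmSoundCD}) and open dynamical systems (handled by Theorem~\ref{ThmSoundFS}), and then invoke those two soundness results in tandem.

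More concretely, I would first unpack the definition: a persistent poset is a dynamic poset $(W,{\peq},S)$ equipped with the up-set topology $\mathcal T_\peq$ and a map $S$ that is both continuous and open with respect to $\mathcal T_\peq$. Thus, viewed as a dynamical system $(W,\mathcal T_\peq, S)$, it is simultaneously (i)~an expanding poset in the sense of Theorem~\ref{ThmSoundCD}, and (ii)~an open dynamical system in the sense of Theorem~\ref{ThmSoundFS}. Writing $\sf p$ for the class of persistent posets, this gives the set-theoretic inclusion $\sf p \subseteq e \cap o$ at the level of validity classes, i.e.\ $\itle \cap {\sf ITL}^{\sf o}_{\diam\ubox} \subseteq \itlp$.

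Now, by definition, $\logpers = \logbasic + {\rm FS}_\mtnext + {\rm CD}$. Each axiom of $\logbasic$ belongs to $\itlp$ by Theorem~\ref{ThmSoundZero} (since persistent posets are, in particular, dynamical systems). The axiom ${\rm CD}$ belongs to $\itlp$ by Theorem~\ref{ThmSoundCD}, since persistent posets are in particular expanding posets. The axiom ${\rm FS}_\mtnext$ belongs to $\itlp$ by Theorem~\ref{ThmSoundFS}, since persistent posets are in particular open dynamical systems. As the rules \ref{ax13MP} and \ref{ax14NecCirc} preserve validity on any dynamical system (again by the proof of Theorem~\ref{ThmSoundZero}), every theorem of $\logpers$ is valid on every persistent poset, yielding $\logpers \subseteq \itlp$.

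For the $\ubox$-only companion $\logpersb$, the same argument applies: restricting all schemata to $\mathcal L_\ubox$ preserves soundness, and the ${\rm CD}$ axiom is replaced by its $\diam$-free surrogate ${\rm BI}$, which is valid on expanding posets by Theorem~\ref{ThmSoundCD} (that is exactly what is certified by $\logexpb \subseteq \itleb$). Combining this with the $\ubox$-only restriction of Theorem~\ref{ThmSoundFS} gives $\logpersb \subseteq \itlpb$. There is no genuine obstacle here; the only thing to be careful about is to make the double citation of Theorems~\ref{ThmSoundCD} and \ref{ThmSoundFS} explicit so that the reader sees that persistence is precisely the conjunction of the two frame conditions needed, and to note that in the $\ubox$-only case the swap of ${\rm CD}$ for ${\rm BI}$ is already absorbed into the definition of $\logexpb$ used in Theorem~\ref{ThmSoundCD}.
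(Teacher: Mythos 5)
Your proposal is correct and follows exactly the route the paper takes: the paper presents this corollary as an immediate ``combination of Theorems~\ref{ThmSoundCD} and~\ref{ThmSoundFS}'', relying on the observation that a persistent poset is simultaneously an expanding poset (validating ${\rm CD}$, resp.\ ${\rm BI}$) and an open dynamical system (validating ${\rm FS}_\mtnext$). Your write-up merely makes explicit the bookkeeping that the paper leaves implicit, including the correct handling of the $\mathcal L_\ubox$ fragment via ${\rm BI}$.
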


\section{Euclidean spaces}\label{secEuclid}

The celebrated McKinsey-Tarski theorem states that intuitionistic propositional logic is complete for the real line, and more generally for a wide class of metric spaces which includes every Euclidean space $\mathbb R^n$ \cite{tarski}.
Thus, it is natural to ask if a similar result holds for intuitionistic temporal logics, which could lead to applications in spatio-temporal reasoning.
As we will see in this section, the answer to this question is negative; however, we identify some principles which could lead to an axiomatization for Euclidean systems.

Let us consider the real line.
The conditional excluded middle axiom shows that even the $\tnext$-logic of the real line is different from the logic of all dynamical systems.

\begin{lemma}\label{lemmCemRealLine}
The formula
\[{\rm CEM}  (p,q) = ( \neg \tnext p \wedge \tnext \neg \neg p ) \rightarrow ( \tnext q \vee \neg \tnext q )\]
is valid on $\mathbb R$.
\end{lemma}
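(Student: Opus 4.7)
The plan is to fix a dynamic topological model $\mathcal M = (\mathbb R, \mathcal T, S, \val\cdot)$ with $S$ continuous, fix an arbitrary $x \in \val{\neg \tnext p \wedge \tnext\neg\neg p}$, and show $x \in \val{\tnext q \vee \neg \tnext q}$. Using Lemma \ref{LemImpCrit}, this suffices for validity of ${\rm CEM}(p,q)$. First I would unfold the antecedent. Recall that in topological semantics $\val{\neg\varphi} = (\mathbb R \setminus \val\varphi)^\circ$ and hence $\val{\neg\neg p} = (\overline{\val p})^\circ$. So $x \in \val{\neg\tnext p}$ yields an open interval $U \ni x$ with $S[U] \cap \val p = \varnothing$, while $x \in \val{\tnext\neg\neg p}$ yields $S(x) \in (\overline{\val p})^\circ$, i.e.\ an open interval $(S(x)-\varepsilon, S(x)+\varepsilon) \subseteq \overline{\val p}$ for some $\varepsilon > 0$.

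The key step, and where the special geometry of $\mathbb R$ enters, is to show that $S$ is constant on $U$. Suppose for contradiction that $S[U]$ contained some point $y \neq S(x)$; say WLOG $y > S(x)$. Since $U$ is connected and $S$ is continuous, $S[U]$ is connected, hence an interval, so $[S(x), y] \subseteq S[U]$. Setting $\delta = \min(\varepsilon, y - S(x)) > 0$ gives $(S(x), S(x)+\delta) \subseteq [S(x),y] \cap \overline{\val p}$. This is a nonempty open set contained in $\overline{\val p}$, and any such set must meet $\val p$ (every point is a limit point of $\val p$ and the set itself is a neighbourhood of each of its points). But $(S(x), S(x)+\delta) \subseteq S[U]$, which by assumption is disjoint from $\val p$; contradiction. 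Therefore $S[U] = \{S(x)\}$.

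With $S$ constant on $U$, the conclusion is immediate. If $S(x) \in \val q$, then $x \in \val{\tnext q}$. Otherwise, for every $y \in U$ we have $S(y) = S(x) \notin \val q$, so $U$ is a neighbourhood of $x$ disjoint from $\val{\tnext q} = S^{-1}\val q$, witnessing $x \in \val{\neg \tnext q}$. In either case $x \in \val{\tnext q \vee \neg \tnext q}$, as required.

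The main obstacle is establishing the constancy claim; this is where all the work lies and where the argument exploits both the connectedness of open intervals in $\mathbb R$ and the density of any open set in its own closure. The remaining steps are essentially bookkeeping with the semantic clauses from Definition \ref{DefSem}.
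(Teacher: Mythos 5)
Your proof is correct and follows essentially the same route as the paper's: both arguments reduce to showing that $S[U]$ must be a singleton on a suitable neighbourhood $U$ of $x$, using that a non-degenerate continuous image of an interval is an interval (the paper cites the intermediate value theorem where you use connectedness — the same fact) together with the observation that a nonempty open subset of $\overline{\val p}$ must meet $\val p$. Your write-up merely makes explicit the details the paper leaves terse.
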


\proof
Suppose that $(\mathbb R,S,\val\cdot)$ is a model based on $\mathbb R$ and that $x\in \val {\neg \tnext p \wedge \tnext \neg \neg p}$.
From $x\in \val { \tnext \neg \neg p}$ and the semantics of double negation (discussed in \cite{FernandezITLc}) we see that there is a neighbourhood $V$ of $S(x)$ such that $V\subseteq \overline{\val p}$. It follows from the intermediate value theorem that if $U$ is a neighbourhood of $x$ and $S[U]$ is not a singleton, then $S[U] \cap V$ contains an open set and hence $S [U] \cap \val p \not = \varnothing$.
Meanwhile, from $x\in \val{\neg \tnext p}$ we see that $x$ has a neighbourhood $U_\ast$ such that $S[U_\ast] \cap \val p = \varnothing$, hence for such a $U_\ast$ the set $S[U_\ast]$ is the singleton $ \{ S (x) \} $.
But then either $S(x) \in \val q$ and $x \in \val{\tnext q }$, or else $ S (x) \not \in \val q$, which means that $U_\ast \cap  \val{\tnext q } = \varnothing$ and thus $U_\ast$ witnesses that $x\in \val{\neg \tnext q}$.
In either case, $x\in \val{\tnext q \vee \neg \tnext q}$, as required.
\endproof

\begin{remark}
It is tempting to conjecture that ${\rm CEM}$ axiomatizes the $\mathcal L_\tnext$-logic of the real line, but there is a possibility that additional axioms are required.
${\rm CEM} $ is an intuitionistic variant of similar formulas in \cite{KremerMints,SlavnovCounterexamples} showing that the {\em dynamic topological logic} ($\sf DTL$) of the real line is different from the dynamic topological logic of arbitrary spaces.
We will not review $\sf DTL$ here, but it is a classical cousin of intuitionistic temporal logic; see \cite{FernandezITLc,DieguezCompleteness}.
The problem of axiomatizing $\sf DTL$ over the real line has long remained open, and \cite{Nogin} give further examples of valid formulas not derivable from the classical analogue of $\rm CEM$.
We do not know if these formulas also have intuitionistic counterparts, and leave this line of inquiry open.
\end{remark}

Since ${\rm CEM} $ is not valid for $\mathbb R^n$ in general, it cannot be used to show that our base logic is incomplete for Euclidean spaces.
However, this can be shown using ${\rm CD}^-$, which is valid on any {\em locally connected} space.
Recall that a subset $C$ of a topological space $X$ is {\em connected} if, whenever $A,B$ are disjoint open sets such that $C\subseteq A \cup B$, it follows that $C \subseteq A$ or $C \subseteq B$.
The space $X$ is {\em locally connected} if whenever $U$ is open and $x\in U$, there is a connected neighbourhood $V\subseteq U$ of $x$.
It is well-known that $\mathbb R^n$ is locally connected for all $n$.

The following properties regarding connectedness will be useful below.
Suppose that $X$ is a topological space and $S\colon X\to X$ is continuous.
Then, for every $C\subseteq X$, if $C$ is connected, then so is $S[C]$.
Moreover, any $A\subseteq X$ can be partitioned into a family of maximal connected sets called the {\em connected components} of $A$.
If $X$ is locally connected and $A$ is open, then the connected components of $A$ are also open (see e.g.~\cite{Dugundji}).

\begin{lemma}
The formulas $\AxConsm p $ and $t_{\khence} ( \AxConsm p ) $ are valid on the class of locally connected spaces.
\end{lemma}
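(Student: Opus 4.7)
The plan is to treat the strong and weak cases in parallel, using in both cases the fact that $\val{\neg p}$ and $\val{p}$ are open and disjoint (since $\val{\neg p}=(X\setminus\val p)^\circ$) together with the fact that in a locally connected space, the connected components of an open set are themselves open, and that $S$ maps connected sets to connected sets.

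Fix a dynamic topological model $\mathcal M=(X,\mathcal T,S,\val\cdot)$ with $X$ locally connected. For the strong version, suppose $x\in\val{\ubox(\neg p\vee p)}$, so that there is an $S$-invariant open neighbourhood $U$ of $x$ with $U\subseteq\val{\neg p}\cup\val p$. If $x\in\val{\diam p}$ we are done, so assume $S^n(x)\notin\val p$ for every $n\geq 0$. First I would let $C_n$ be the connected component of $U$ containing $S^n(x)$; by local connectedness each $C_n$ is open, and since $C_n$ is connected and is covered by the two disjoint opens $U\cap\val{\neg p}$ and $U\cap\val p$ but meets $\val{\neg p}$, we get $C_n\subseteq\val{\neg p}$. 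Next, since $S[C_n]$ is connected, contained in $U$ by $S$-invariance of $U$, and contains $S^{n+1}(x)$, it must lie inside $C_{n+1}$. Therefore $W:=\bigcup_{n\geq 0}C_n$ is an open, $S$-invariant neighbourhood of $x$ contained in $\val{\neg p}$, which witnesses $x\in\val{\ubox\neg p}$.

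For $t_{\khence}(\AxConsm p)$, the argument is actually simpler because we do not need to arrange $S$-invariance. Suppose $x\in\val{{\khence}(\neg p\vee p)}$ and let $U$ be an open neighbourhood of $x$ such that $S^n[U]\subseteq\val{\neg p}\cup\val p$ for every $n\geq 0$. Assuming $S^n(x)\notin\val p$ for all $n$, I would pick, by local connectedness, a connected open neighbourhood $V\subseteq U$ of $x$. Then each $S^n[V]$ is connected and covered by the disjoint opens $\val{\neg p}$ and $\val p$, and contains $S^n(x)\in\val{\neg p}$, so $S^n[V]\subseteq\val{\neg p}$; hence $V$ is an open neighbourhood of $x$ witnessing $x\in\val{{\khence}\neg p}$.

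The main conceptual obstacle is the first case: producing, from a merely $S$-invariant open $U$ split by two disjoint opens, an $S$-invariant open neighbourhood of $x$ lying entirely in $\val{\neg p}$. Local connectedness and the fact that the connected components of $U$ form an open partition which is coherent with the dynamics (continuous images of connected sets are connected) is exactly what makes the union $\bigcup_n C_n$ both open and $S$-stable; without local connectedness this construction would fail, which explains why the result needs that hypothesis on $X$.
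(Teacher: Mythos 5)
Your proof is correct and follows essentially the same route as the paper: in both cases the key is that the connected components of the invariant open set are open (by local connectedness) and are mapped into one another by $S$ (since continuous images of connected sets are connected), so their union is an open $S$-invariant witness inside $\val{\neg p}$. The only cosmetic differences are that you obtain $S[C_n]\subseteq C_{n+1}$ directly from maximality of components where the paper splits $U$ as $V_{n+1}\cup(U\setminus V_{n+1})$, and in the weak case you push a connected neighbourhood forward where the paper pulls $\val{p\vee\neg p}$ back; both variants are sound.
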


\proof
Let $(X,\mathcal T,S,\val\cdot)$ be a model based on a locally connected space, and $x\in X$.
First we show that $x\in \val{t_{\khence} ( \AxConsm p )}$.
Recall that $\AxConsm p = \ubox( p\vee\neg  p) \to  \ubox \neg p \vee \diam p $, so that $t_{\khence} ( \AxConsm p ) = {\khence} ( p\vee\neg  p ) \to  {\khence} \neg p \vee \diam p  $.
We may thus assume that $x\in \val{{\khence}(p\vee\neg p)} $, so that using the local connectedness of $X$, there is a connected neighbourhood $U$ of $x$ with $U\subseteq \bigcap_{n\in\mathbb N} S^{-n} \val {p\vee \neg p}$.
This means that, for each $n\in\mathbb N$, $U\subseteq S^{-n} \val {p\vee \neg p} = S^{-n} \val {p}\cup S^{-n} \val { \neg p}$ and, from the connectedness of $U$, $U\subseteq   S^{-n} \val {p} $ or $U\subseteq S^{-n} \val { \neg p} $, as these two sets are disjoint and open.
If $x\in \val{\diam p}$ there is nothing to prove, so we assume otherwise.
This means that for all $n$, $ S^n(x) \not \in \val p$, which implies that $U \not \subseteq   S^{-n} \val {p} $, and hence $U\subseteq S^{-n} \val { \neg p} $.
But then, $U$ witnesses that $x\in \val{{\khence} \neg p}$.

To see that $x\in \val{  \AxConsm p  }$, suppose that $x\in \val{\ubox(p\vee\neg  p)} $; we must show that $x\in \val {\ubox \neg p \vee \diam p }$.
Let $U \subseteq \val{p\vee \neg p}$ be an $S$-invariant neighbourhood of $x$.
For $n \in \mathbb N$, let $V_n$ be the connected component of $U$ containing $S^n(x)$, and set $V = \bigcup_{n\in \mathbb N} V_n$.
Since each $V_n\subseteq U \subseteq \val p \cup \val {\neg p}$ and the latter are disjoint and open, it follows that either $V_n\subseteq \val p$ or $V_n\subseteq \val {\neg p}$.
If $V_n\subseteq \val p$ for some $n$, it immediately follows that $x\in \val{\diam p}$, and we are done.
Otherwise, $V_n\subseteq \val {\neg p}$ for all $n$.
Clearly $V$ is open; we claim that it is also $S$-invariant.
To see this, note that $S[V_n] \subseteq U$.
Note that $U$ can be written as the disjoint union of two open sets as $U = V_{n+1} \cup(U\setminus V_{n+1}) $; since $V_n$ is connected, so is $S[V_n]$, hence $S[V_n] \subseteq V_{n+1}$ or $S[V_n] \subseteq U\setminus V_{n+1} $.
However, $S[V_n] \cap V_{n+1}$ is non-empty, so we must have $S[V_n] \subseteq V_{n+1}$, and since $n$ was arbitrary, $S[V] \subseteq V$, as claimed.
Hence $V$ witnesses that $x\in \val{\ubox \neg p}$, as needed.\endproof

In conclusion, we obtain the following.

\begin{theorem}\
\begin{enumerate}

\item $\logrealax$ and $\logrealaxa$ are sound for $\mathbb R$.

\item $\logrnax$ and $\logrnaxa$ are sound for $\{\mathbb R^n:n > 0\}$.

\item $\logrnfs$ and $\logrnfsa$ are sound for the class of invertible systems based on $\{\mathbb R^n:n > 0\}$.

\end{enumerate}
\end{theorem}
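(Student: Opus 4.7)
The plan is to assemble the three soundness claims from existing results: Theorem~\ref{ThmSoundZero} for the base, the lemma on locally connected spaces for $\rm CD^-$, Lemma~\ref{lemmCemRealLine} for $\rm CEM$, Theorem~\ref{ThmSoundFS} for $\rm FS_\mtnext$, and Proposition~\ref{propBoxHomeo} to handle the subtlety of the weak henceforth version in the invertible case. The only topological facts I need are that every $\mathbb R^n$ is locally connected and that any homeomorphism $S\colon\mathbb R^n\to\mathbb R^n$ is, in particular, a continuous and open map.

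For the strong-henceforth versions of (1) and (2), I would first invoke Theorem~\ref{ThmSoundZero} to get $\logbasic\subseteq\itlc$, which specializes to any system based on $\mathbb R^n$. Since $\mathbb R^n$ is locally connected, the lemma preceding the theorem statement yields that $\AxConsm p = \rm CD^-$ is valid on every dynamical system over $\mathbb R^n$, giving $\logrnax\subseteq\itlc^{\{\mathbb R^n:n>0\}}$. For $\logrealax$, we additionally need that $\rm CEM$ is valid on $\mathbb R$, which is exactly Lemma~\ref{lemmCemRealLine}. For (3), an invertible system is by definition one whose transition map is a homeomorphism; in particular the map is open, so Theorem~\ref{ThmSoundFS} gives validity of $\rm FS_\mtnext$, and combining with the previous items yields soundness of $\logrnfs$ on invertible systems over $\mathbb R^n$.

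For the weak-henceforth versions, I exploit that the table indicates $\logrealaxa$ and $\logrnaxa$ are built over $\wlogbasic$, whereas $\logrnfsa=t_{\khence}(\logrnfs)$ is built over $\logbasic$. In the first two cases, Theorem~\ref{ThmSoundZero} already gives $\logbasica = t_{\khence}(\wlogbasic) \subseteq \itlca$, so only the translated extra axioms need checking: $t_{\khence}(\rm CD^-)$ is handled by the locally connected lemma (which explicitly treats both the $\ubox$ and $\khence$ versions), and $\rm CEM$ contains no occurrence of $\ubox$ or $\khence$, so $t_{\khence}(\rm CEM)=\rm CEM$ and Lemma~\ref{lemmCemRealLine} applies directly. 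This handles the weak versions in (1) and (2).

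The main obstacle, and the only real subtlety, is the weak version in (3): because $\logrnfsa=t_{\khence}(\logrnfs)$ is formed from $\logbasic$ rather than $\wlogbasic$, the translation includes $t_{\khence}$ of axiom~\ref{ax09BoxFix}, namely $\khence p \to \tnext\khence p$, which is \emph{not} valid on arbitrary dynamical systems (and is precisely what motivated introducing $\wboxfix$ in the first place). Here I use that a homeomorphism is open, so Proposition~\ref{propBoxHomeo} applies and gives $\val{\khence\varphi}=\val{\ubox\varphi}$ pointwise in any invertible model; thus $t_{\khence}(\varphi)$ and $\varphi$ are semantically equivalent on such models, and soundness of each translated axiom and rule reduces to its untranslated counterpart, which is already established from (3) of the strong case together with Theorems~\ref{ThmSoundZero} and~\ref{ThmSoundFS}. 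The additional $t_{\khence}(\rm CD^-)$ is again covered by the locally connected lemma, and $t_{\khence}(\rm FS_\mtnext)=\rm FS_\mtnext$ is untouched by the translation. This completes the plan.
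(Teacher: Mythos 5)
Your proposal is correct and follows essentially the same route as the paper, which states this theorem as a direct corollary of Theorem~\ref{ThmSoundZero}, the locally connected lemma, Lemma~\ref{lemmCemRealLine}, Theorem~\ref{ThmSoundFS}, and Proposition~\ref{propBoxHomeo}. In fact you are more explicit than the paper on the one genuine subtlety --- that $\logrnfsa$, being $t_{\khence}(\logrnfs)$ rather than an extension of $\wlogbasic$, contains $\khence p \to \tnext\khence p$, which must be rescued via the coincidence of $\khence$ and $\ubox$ on open systems --- and your handling of that point is exactly right.
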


In the remainder of this section we show that ${\sf ITL}_{\diam\ubox{\khence}}^{\mathbb R^2} \subseteq {\sf ITL}^{\sf e}_{\diam\ubox{\khence}} \cap {\sf ITL}_{\diam\ubox{\khence}}^{\mathbb R }$. 
We show this using results from \cite{FernandezR2} originally developed for dynamic topological logic, but applicable to $\sf ITL$ as well.
We begin with the notion of dynamic morphism.

\begin{definition}
Let $\mathcal X = (X,\mathcal T_\mathcal X,S_\mathcal X)$ and $\mathcal Y = (Y ,\mathcal T_\mathcal Y,S_\mathcal Y)$ be dynamic topological systems.
Let $U\subseteq X$ be open and $S_\mathcal X$-invariant.
A {\em dynamic morphism} from $\mathcal X$ to $\mathcal Y$ is an interior map $\pimor \colon U\to Y$ such that for all $x\in X$, $\pimor S_\mathcal X(x) = S_\mathcal Y \pimor(x)$.
\end{definition}

\begin{proposition}\label{propPMorph}
Let $\mathcal X = (X,\mathcal T_\mathcal X,S_\mathcal X)$ and $\mathcal Y = (Y ,\mathcal T_\mathcal Y,S_\mathcal Y)$ be dynamic topological systems.
Let $U\subset X$ be open and $S_\mathcal X$-invariant and $\pimor \colon U \to Y$ be a dynamic morphism, and let $\val\cdot_\mathcal Y$ be any valuation on $\mathcal Y$.
Then, there is a valuation $\val\cdot_\mathcal X  $ on $\mathcal X$ such that for every formula $\varphi \in \mathcal L_{\diam\ubox{\khence}}$, $\val \varphi_\mathcal X \cap U = \pimor^{-1} \val \varphi_\mathcal Y$.
\end{proposition}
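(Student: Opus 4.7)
The plan is a structural induction on $\varphi$, starting from the base valuation defined by $\val p_\mathcal X \eqdef \pimor^{-1} \val p_\mathcal Y$ for each propositional variable $p$. Since $\pimor$ is continuous from $U$ to $Y$ and $\val p_\mathcal Y$ is open in $Y$, the set $\pimor^{-1} \val p_\mathcal Y$ is open in $U$; and since $U$ is open in $X$, it is open in $X$ as well, so the assignment is a legitimate valuation. Note moreover that $\val p_\mathcal X \subseteq U$, so $\val p_\mathcal X \cap U = \val p_\mathcal X = \pimor^{-1} \val p_\mathcal Y$ is the base case.

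The Boolean cases ($\bot$, $\wedge$, $\vee$) are immediate from the fact that $\pimor^{-1}$ commutes with finite unions and intersections and distributes over intersection with $U$. For $\tnext \varphi$, I would combine the commutation $\pimor \circ S_\mathcal X = S_\mathcal Y \circ \pimor$ with the $S_\mathcal X$-invariance of $U$: given $x \in U$, one has $S_\mathcal X(x) \in U$, so by the inductive hypothesis $S_\mathcal X(x) \in \val \varphi_\mathcal X$ iff $S_\mathcal Y \pimor(x) = \pimor S_\mathcal X(x) \in \val \varphi_\mathcal Y$. The $\diam$ case iterates this reasoning across all $n \geq 0$, again using that $U$ is $S_\mathcal X$-invariant so each $S_\mathcal X^n(x)$ remains in the domain of $\pimor$.

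The substantive work lies in the clauses for $\to$, $\ubox$ and $\khence$, all of which critically use that $\pimor$ is an interior map rather than merely continuous. For $\ubox \varphi$, the forward inclusion is shown by taking a witnessing $S_\mathcal X$-invariant open $V \subseteq \val \varphi_\mathcal X$ with $x \in V$; then $\pimor[V \cap U]$ is open by the openness of $\pimor$, is $S_\mathcal Y$-invariant by the commutation, and lies in $\val \varphi_\mathcal Y$ by the inductive hypothesis, so it witnesses $\pimor(x) \in \val{\ubox \varphi}_\mathcal Y$. Conversely, a witnessing $S_\mathcal Y$-invariant open $W$ around $\pimor(x)$ pulls back to $\pimor^{-1}[W]$, which is an open subset of $U$, $S_\mathcal X$-invariant (again using the commutation and the fact that $S_\mathcal X$ maps $U$ into $U$), and contained in $\val \varphi_\mathcal X$ by the inductive hypothesis. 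The clause for $\varphi \to \psi$ runs along the same push-forward/pull-back pattern — pushing a witnessing neighborhood of $x$ forward by the open map $\pimor|_U$ and pulling back a witnessing neighborhood of $\pimor(x)$ by the continuous map $\pimor$ — and the clause for $\khence \varphi$ combines these two ideas, handling the countable intersection $\bigcap_{n\geq 0} S^{-n} \val \varphi$ exactly as in the $\ubox$ case and the outer interior operator exactly as in the $\to$ case.

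I expect the only real obstacle to be the careful bookkeeping around the $S_\mathcal X$-invariance of $U$: every appeal to the inductive hypothesis requires the relevant point to lie in $U$, so at each application of an $S$-preimage one must verify that orbits do not escape $U$. This is where the assumption $S_\mathcal X[U] \subseteq U$ is used repeatedly and silently. Beyond this, the delicate point is the symmetric interplay between openness and continuity of $\pimor$ in the $\to$, $\ubox$, and $\khence$ clauses; once the $\ubox$ case is in hand, the $\khence$ case requires no genuinely new idea.
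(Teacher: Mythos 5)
Your proposal is correct and follows essentially the same route as the paper: define $\val p_\mathcal X = \pimor^{-1}\val p_\mathcal Y$ and induct on $\varphi$, with the $\ubox$ and $\khence$ cases handled by pushing an (intersected-with-$U$) invariant witness forward via openness and pulling a witness back via continuity, using the commutation $\pimor S_\mathcal X = S_\mathcal Y\pimor$ and the $S_\mathcal X$-invariance of $U$ exactly as the paper does.
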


\begin{proof}
Let $\val \cdot_\mathcal X$ be the unique valuation such that for any propositional variable $p$, $\val p_\mathcal X = \pimor^{-1} \val p_\mathcal Y$.
We prove by induction on $\varphi$ that $\val \varphi_\mathcal X \cap U = \pimor^{-1} \val \varphi_\mathcal Y$.
Most cases are standard, save the cases for $\varphi = {\khence} \psi$ and $\varphi = \ubox \psi$, so we focus on those.
First assume that $x\in U \cap \val {{\khence} \psi}_\mathcal X$.
Let $V\subseteq U$ be a neighbourhood of $x$ such that $V\subseteq \bigcap_{n\geq 0} S_\mathcal X^{-n} \val \psi_\mathcal X$.
Then $\pimor[V]$ is open, and since $\pimor S_\mathcal X = S_\mathcal Y \pimor$,
\[\pimor[V] \subseteq  \pimor \Big [ \bigcap_{n\geq 0} S_\mathcal X^{-n} \val \psi_\mathcal X \Big ] \subseteq \bigcap_{n\geq 0} S_\mathcal X ^{-n} \pimor \val \psi_\mathcal X \stackrel{\text{\sc ih}}{\subseteq} \bigcap_{n\geq 0} S_\mathcal Y ^{-n}  \val \psi_\mathcal Y .\]
So, $\pimor[V]$ witnesses that $\pimor(x) \in \val \psi_\mathcal Y$.

If $\pimor(x) \in \val {{\khence} \psi}_\mathcal Y$, we instead let $V'$ be a neighbourhood of $\pimor(x)$ so that $V'\subseteq \bigcap _{n\geq 0} S_\mathcal Y^{-n} \val \psi_\mathcal Y$.
Then,
\[ \pimor^{-1} [ V' ] \subseteq \pimor^{-1} \Big [ \bigcap _{n\geq 0} S_\mathcal Y ^{-n} \val \psi_\mathcal Y \Big ] = \bigcap _{n\geq 0} S_\mathcal Y^{-n} \pimor ^{-1} \val \psi_\mathcal Y \stackrel{\text{\sc ih}}{\subseteq}\bigcap _{n\geq 0} S_\mathcal X^{-n}  \val \psi_\mathcal X .\]
Since $\pimor^{-1} [ V' ]$ is open, this witnesses that $x\in \val {{\khence} \psi}$.

The arguments for $\ubox \psi$ are similar.
As above, if $x\in \val {\ubox \psi}_\mathcal X$, we let $V\subseteq \val \psi_\mathcal X$ be an open, $S_\mathcal X$-invariant neighbourhood of $x$.
Since $U$ is open and $S_\mathcal X$-invariant, $V\cap U$ is also open and $S_\mathcal X$-invariant, so we may assume that $V\subseteq U$.
Then, $\pimor[V]$ is open and $\pimor[V] \subseteq \val \psi_\mathcal Y$ by the induction hypothesis.
It remains to check that $\pimor[V]$ is $S_\mathcal Y$-invariant.
But if $y\in \pimor[V]$ then $y = \pimor(z)$ for some $z\in V$, hence $S_\mathcal Y(y) = S_\mathcal Y\pimor(z) =\pimor S_\mathcal X (z)  $ and $S_\mathcal X (z) \in V$ by $S_\mathcal X$-invariance, so $S_\mathcal Y(y) \in \pimor[V]$, as needed.

Similarly, if $V'$ witnesses that $\pimor(x) \in \val {\ubox \psi}_\mathcal Y$, then $\pimor^{-1}[V']$ witnesses that $x\in \val{\ubox \psi}_\mathcal X$.
\end{proof}

From here, we easily obtain that the logics based on $\mathbb R^2$ are contained in those based on $\mathbb R$:

\begin{theorem}\label{theoR2R}
Every formula of $\mathcal L_{\diam\ubox\khence}$ valid on $\mathbb R^2$ is valid on $\mathbb R$; that is, ${\sf ITL}^{\mathbb R^2}_{\diam\ubox{\khence}} \subseteq {\sf ITL}^{\mathbb R }_{\diam\ubox{\khence}} $.
\end{theorem}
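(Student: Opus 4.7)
My plan is to prove the contrapositive: given any formula $\varphi \in \mathcal L_{\diam\ubox\khence}$ that is not valid on $\mathbb R$, I will produce a continuous map $S'\colon \mathbb R^2 \to \mathbb R^2$ together with a valuation on $\mathbb R^2$ witnessing that $\varphi$ is not valid on $\mathbb R^2$. The transfer tool is already available: Proposition \ref{propPMorph} says that a dynamic morphism pulls back valuations in a way that preserves truth of every $\mathcal L_{\diam\ubox{\khence}}$-formula. Thus the entire problem reduces to exhibiting a suitable dynamic morphism from some $\mathbb R^2$-based dynamical system onto the given one on $\mathbb R$.

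The construction I would use is the obvious first-coordinate projection. Suppose $(\mathbb R, S, \val\cdot_\mathbb R)$ is a dynamic topological model and $x \in \mathbb R$ satisfies $x \notin \val\varphi_\mathbb R$. Define $S'\colon \mathbb R^2 \to \mathbb R^2$ by
\[ S'(a,b) = (S(a),\, b),\]
and let $\pimor\colon \mathbb R^2 \to \mathbb R$ be $\pimor(a,b) = a$. I would then verify the three ingredients required by the definition of dynamic morphism: (i) $S'$ is continuous as a product of continuous maps; (ii) $\pimor$ is continuous and open, using the classical fact that coordinate projections $\mathbb R^2 \to \mathbb R$ are interior maps; and (iii) the commutation $\pimor \circ S' = S \circ \pimor$ holds by construction. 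Taking $U = \mathbb R^2$, which is trivially open and $S'$-invariant, I obtain a dynamic morphism $\pimor\colon U \to \mathbb R$ in the sense of the excerpt.

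Applying Proposition \ref{propPMorph} then yields a valuation $\val\cdot_{\mathbb R^2}$ on $(\mathbb R^2, S')$ such that for every $\mathcal L_{\diam\ubox{\khence}}$-formula $\psi$,
\[ \val\psi_{\mathbb R^2} = \pimor^{-1}\val\psi_\mathbb R. \]
Picking any $b \in \mathbb R$, we find that $(x,b) \in \val\varphi_{\mathbb R^2}$ if and only if $x \in \val\varphi_\mathbb R$, which fails by choice of $x$. Hence $\val\varphi_{\mathbb R^2} \neq \mathbb R^2$ and $\varphi$ is not valid on $\mathbb R^2$, completing the contrapositive.

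I expect the proof to be conceptually straightforward rather than technical: the substantive work has already been absorbed into Proposition \ref{propPMorph}, and the product construction is standard. The only mildly delicate point is making sure that $\pimor$ really qualifies as an interior map in the sense used here, but this reduces to two well-known topological facts (openness and continuity of projections in product spaces), so it should not be an obstacle. Any difficulty would arise only if one needed a stronger statement, such as surjectivity onto a specific invariant subsystem of $\mathbb R^2$; for our theorem, however, taking $U = \mathbb R^2$ suffices.
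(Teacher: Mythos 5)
Your proof is correct and is essentially identical to the paper's: the paper also argues by contraposition, takes $S'(x,y)=(S(x),y)$ with the first-coordinate projection as a (surjective) dynamic morphism, and pulls the falsifying valuation back via Proposition \ref{propPMorph}. The only cosmetic difference is that you explicitly note $U=\mathbb R^2$ and spell out the verification that the projection is an interior map, which the paper leaves as ``not hard to see.''
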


\begin{proof}
Suppose that $\varphi $ is not valid on $\mathbb R$.
Let $S \colon\mathbb R\to \mathbb R$ and $\val\cdot$ be such that $(\mathbb R,S,\val\cdot)\not\models\varphi$.
Define $\pimor \colon \mathbb R^2 \to \mathbb R^2$ to be given by $\pimor(x,y) = x$ and $ S' (x,y) = (S(x),y) $.
Then, it is not hard to see that $\pimor$ is a surjective dynamic morphism from $(\mathbb R ^2,S') $ onto $(\mathbb R, S)$.
It follows that $(\mathbb R^2,S',\pimor^{-1}\val\cdot) \not \models \varphi$, and hence $\varphi \not \in {\sf ITL}^{\mathbb R^2}_{\diam\ubox{\khence}}$.
\end{proof}

In order to show that $ {\sf ITL}_{\diam\ubox\khence}^{\mathbb R^2} \subseteq {\sf ITL}^{\sf e}_{\diam\ubox\khence}$, it would suffice to construct a dynamic morphism from $\mathbb R^2$ onto a given dynamic poset $(W,{\peq},S)$.
We may assume that $W$ is finite in view of the following result, which is established by \cite{BalbianiToCL}.

\begin{theorem}\label{theoFMP}
Any formula satisfiable (falsifiable) on a dynamic poset is satisfiable (falsifiable) on a finite dynamic poset.
\end{theorem}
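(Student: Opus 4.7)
The plan is to prove this via a filtration argument adapted to the dynamic intuitionistic setting; I focus on satisfiability, as falsifiability is symmetric. Suppose $\mathcal M = (W, {\peq}, S, \val\cdot)$ is a dynamic poset model and $w_0 \in W$ satisfies $\varphi$. First I would take $\Sigma$ to be the Fischer--Ladner-style closure of $\text{sub}(\varphi)$: $\Sigma$ contains every subformula, and whenever $\ubox\psi$, $\khence\psi$, or $\diam\psi$ belongs to $\Sigma$, so do $\tnext\ubox\psi$, $\tnext\khence\psi$, and $\tnext\diam\psi$ respectively. This closure is still finite (each operation adds only boundedly many formulas per $\Sigma$-formula).

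Next I would define the type map $t(v) = \{\psi \in \Sigma : \mathcal M, v \models \psi\}$ and set $v \sim v'$ iff $t(v) = t(v')$. Since there are at most $2^{|\Sigma|}$ types, the quotient $W^\ast := W/{\sim}$ is finite. I would define the quotient order by $[v] \peq^\ast [v']$ iff every $\Sigma$-formula true at $v$ is true at $v'$, which gives an antisymmetric order on types. For the dynamics, one sets $S^\ast[v] = [S(v)]$; well-definedness follows from the inclusion of $\tnext$-variants in $\Sigma$, because if $t(v) = t(v')$ then $\mathcal M, v \models \tnext \psi \Leftrightarrow \mathcal M, v' \models \tnext\psi$ for $\psi \in \Sigma$, forcing $t(S(v))$ and $t(S(v'))$ to agree on $\Sigma$. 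Monotonicity of $S^\ast$ with respect to $\peq^\ast$ follows similarly, so the quotient is a dynamic poset. Finally I would define a valuation on $W^\ast$ by $\val p^\ast = \{[v] : \mathcal M, v \models p\}$ for propositional variables $p$ appearing in $\Sigma$.

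The core work is the truth lemma: $\mathcal M, v \models \psi \iff \mathcal M^\ast, [v] \models \psi$ for all $\psi \in \Sigma$. The atomic, Boolean, and $\tnext$ cases are routine by construction. For implication, one uses that the quotient order encodes exactly the preservation of $\Sigma$-truth. The interesting cases are $\ubox$, $\khence$, and $\diam$, which appeal to Proposition~\ref{propBoxKripke}: on a poset, $v \models \ubox \psi$ (and $v \models \khence\psi$) is equivalent to $S^n(v) \models \psi$ for all $n \geq 0$, and $v \models \diam\psi$ means $S^n(v) \models \psi$ for some $n \geq 0$. These are first-order conditions on the $S$-orbit, and the Fischer--Ladner closure together with the finiteness of $W^\ast$ lets one transfer them across $\sim$: since $S^\ast$-orbits in the finite quotient are eventually periodic, the fixed-point unfolding $\ubox\psi \leftrightarrow \psi \wedge \tnext\ubox\psi$ and $\diam\psi \leftrightarrow \psi \vee \tnext\diam\psi$ ensure that the quotient agrees with $\mathcal M$ on these formulas.

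The main obstacle is handling $\ubox$ and $\khence$ simultaneously with the dynamics: even though Proposition~\ref{propBoxKripke} collapses both to the pointwise condition $\forall n \, S^n(v) \models \psi$, one must verify that the quotient $S^\ast$ does not spuriously identify an orbit that escapes $\val\psi$ with one that stays inside it. This is precisely where the Fischer--Ladner closure earns its keep: by adding $\tnext\ubox\psi$ to $\Sigma$, two $\sim$-equivalent points must agree on whether $\ubox\psi$ persists one step later, and an induction on the position in the (eventually periodic) $S^\ast$-orbit then gives the desired equivalence. Applying the truth lemma at $[w_0]$ yields $\mathcal M^\ast, [w_0] \models \varphi$, completing the proof.
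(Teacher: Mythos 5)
There is a genuine gap at the very first step of your construction: the quotient map $S^\ast$ is not well defined. Writing $t(v)$ for the $\Sigma$-type of $v$, well-definedness of $S^\ast[v]=[S(v)]$ requires that $t(v)=t(v')$ imply $t(S(v))=t(S(v'))$, i.e.\ that $S(v)$ and $S(v')$ agree on \emph{every} formula of $\Sigma$. Your closure only adds $\tnext\ubox\psi$, $\tnext\khence\psi$ and $\tnext\diam\psi$, so equality of $t(v)$ and $t(v')$ only controls the images $S(v)$, $S(v')$ on those particular formulas; it says nothing about, say, a propositional variable $p\in\Sigma$ with $\tnext p\notin\Sigma$, and two points of the same $\Sigma$-type can perfectly well have $S$-images of different types. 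The only way to force the needed implication is to close $\Sigma$ under $\psi\mapsto\tnext\psi$, which generates $\tnext^n\psi$ for every $n$ and destroys finiteness. The same obstruction defeats the monotonicity of $S^\ast$ with respect to $\peq^\ast$ (i.e.\ continuity of the quotient dynamics), which you also need for the quotient to be a dynamic poset. This is the classical reason why the finite model property for $\sf LTL$-like logics is not obtained by filtration of a given model: the functional modality $\tnext$ is incompatible with quotienting by types, and one must instead build a fresh finite structure out of types, select an ultimately periodic run, and verify fulfillment of eventualities explicitly.

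For what it is worth, the paper does not prove Theorem \ref{theoFMP} at all: it is imported from \cite{BalbianiToCL}, where it is established by a substantially more involved quasimodel construction rather than by quotienting the given model. The delicacy of the result is also visible in Proposition \ref{propNoFMP}, which shows that the finite model property \emph{fails} once one restricts to persistent posets, so no construction that uniformly preserves all the structure of the original system can succeed. To be fair to the rest of your argument: if $S^\ast$ were well defined and satisfied $S^\ast[v]=[S(v)]$, then the orbit of $[v]$ would be exactly the image of the orbit of $v$, and your treatment of $\diam$, $\ubox$ and $\khence$ via Proposition \ref{propBoxKripke} would go through essentially as you describe (the eventual periodicity and fixed-point unfoldings are then not even needed). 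The entire difficulty is concentrated in the step you dispatch in one sentence.
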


We may then use the following result, essentially proven in \cite{FernandezR2}.

\begin{theorem}\label{theoIsPMorph}
If $(W,{\peq},S)$ is a finite dynamic poset and $w_0 \in W$, there exist continuous $ T\colon \mathbb R^2 \to \mathbb R^2$, an open, $T$-invariant set $U\subseteq \mathbb R^2$, and a dynamic morphism $\pimor\colon \mathbb R^2 \to W$, such that $0\in U$ and $\pimor(0) = w_0$.
\end{theorem}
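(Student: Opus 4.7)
The plan is to construct the required triple by combining a McKinsey--Tarski style realization of the finite poset $(W,\peq)$ as an interior image of $\mathbb R^2$ with a dynamical extension encoding $S$ as a continuous planar self-map. Since the notions of interior map and dynamic morphism used in this paper coincide with those of the classical $\sf S4C$ framework underlying \cite{FernandezR2}, I expect the construction there to transfer with essentially only notational changes, so my main task is to identify the right pieces of it and to verify the base point condition at $0$.

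First I would produce a locally finite partition of an open region $U \subseteq \mathbb R^2$ into cells $\{C_w\}_{w\in W}$ and define $\pimor$ by $\pimor(x)=w$ iff $x\in C_w$, arranged so that (a) the preimage $\pimor^{-1}(\mathord\uparrow w)$ is open for every $w$, giving continuity into the up-set topology; (b) $\pimor$ is an open map, so that every open subset of $U$ is sent to an up-set in $W$; and (c) cells for maximal elements of $W$ are two-dimensional open regions, while cells for non-maximal $w$ appear as lower-dimensional pieces in the boundaries, accumulating on $C_{w'}$ for every $w'$ with $w\peq w'$. This is the standard McKinsey--Tarski realization of a finite poset as an interior image of $\mathbb R^2$, and the construction can be performed locally around any prescribed point, so I would further demand $0 \in C_{w_0}$.

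Next I would define $T\colon U\to U$ cell by cell so that $T[C_w]\subseteq C_{S(w)}$, which immediately yields $\pimor\circ T = S\circ \pimor$. Here the two dimensions are essential: one direction carries the combinatorial shift $w\mapsto S(w)$, while the transverse direction is used to fuse several cells continuously into one, absorbing any failure of $S$ to be open on $W$. Continuity of $T$ across cell boundaries is guaranteed by monotonicity of $S$, i.e.\ by the fact that $w\peq w'$ implies $S(w)\peq S(w')$, which matches the boundary incidences in the partition. Finally, $T$ can be extended continuously to all of $\mathbb R^2$ (for instance by collapsing everything outside $U$ onto a single orbit), and $U$ itself serves as the required open, $T$-invariant set.

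The main obstacle is exactly the geometric patching in the dynamical step: one must simultaneously preserve continuity of $T$ at every boundary between cells and the interior-map property of $\pimor$ after composition with $T$. This is the substantive content of \cite{FernandezR2}, where a fractal-like decomposition of the plane is designed precisely to accommodate an arbitrary continuous self-map of a finite poset, with enough symmetry that the cells for $w$ and for $S(w)$ can be carried onto one another by a continuous planar transformation. Because the intuitionistic dynamic morphisms of the present paper are characterized by the very same open/continuous conditions, invoking that construction---together with the local freedom to place the origin in the cell of type $w_0$---yields the theorem.
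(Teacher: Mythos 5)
Your proposal and the paper's proof both ultimately rest on \cite{FernandezR2}, but they lean on it in different ways, and the paper's way is considerably safer. The paper does not attempt to describe the planar construction at all: it observes that \cite{FernandezR2} proves the statement for any \emph{dynamic preorder with limits that commute with $S$}, and then simply verifies that a finite dynamic poset satisfies this hypothesis (every monotone sequence is eventually constant, so the limit assignment is trivial and automatically commutes with $S$). You instead re-sketch the internal construction --- the McKinsey--Tarski cell decomposition plus a cell-wise definition of $T$ --- and in doing so you assert two things that would not survive as a self-contained argument: first, that ``continuity of $T$ across cell boundaries is guaranteed by monotonicity of $S$'' (it is not; arranging simultaneous continuity of $T$ and openness of $\pimor$ is exactly the hard, fractal part of \cite{FernandezR2}, as you yourself concede a sentence later), and second, that $T$ extends to all of $\mathbb R^2$ ``by collapsing everything outside $U$ onto a single orbit,'' which would generically destroy continuity on the boundary of $U$. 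Since you explicitly defer the substantive geometry to the cited paper, these are not fatal, but note that the one thing the paper's proof actually has to \emph{check} --- the limit hypothesis of the theorem in \cite{FernandezR2} --- does not appear in your write-up, whereas the things you spend most of your space on are precisely the parts that are being imported wholesale. Your handling of the base point (placing $0$ in the cell of $w_0$) matches what the cited result provides.
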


\begin{proof}[Proof sketch.]
\cite{FernandezR2} shows the result for any {\em dynamic preorder with limits that commute with $S$.}
Recall that $\peq$ is a preorder if it is transitive and reflexive (but not necessarily antisymmetric).
We say that $(W,{\peq},S)$ is a dynamic preorder if $S \colon W\to W$ is $\peq$-monotone, so that dynamic posets are a special case of dynamic preorders.
We say that $(W,{\peq},S)$ is a {\em dynamic preorder with limits} if every {\em monotone} sequence $(w_i)_{i\in \mathbb N}$ (i.e., a sequence such that $w_i\peq w_{i+1}$ for all $i$) is assigned a {\em limit} $\lim_{i\to \infty} w_i \in W$ with the property that $w_n \peq \lim_{i\to \infty} w_i $ for all $n$ and $\lim_{i\to \infty} w_i \peq w_n$ for $n$ large enough.
The limits {\em commute with $S$  if $S\left ( \lim_{i\to \infty} w_i \right) = \lim_{i\to \infty} S( w_i )$.}

However, if $W$ is finite and $\peq$ is a partial order then every monotone sequence can trivially be assigned a limit, as in this case we have that $w_i$ is constant for $i$ large enough, and we can define $\lim_{i\to \infty} w_i$ to be this unique constant.
It is easy to see that this is a limit assignment that commutes with $S$, hence the desired $U$, $T$ and $\pimor$ exist by \cite{FernandezR2}.
\end{proof}

\begin{theorem}\label{theoR2vsITLe}
Every formula of $\mathcal L_{\diam\ubox\khence}$ valid on $\mathbb R^2$ is valid on the class of expanding posets; that is, ${\sf ITL}_{\diam\ubox{\khence}}^{\mathbb R^2} \subseteq {\sf ITL}^{\sf e}_{\diam\ubox{\khence}}$.
\end{theorem}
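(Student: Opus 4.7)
The plan is to argue by contrapositive, exactly paralleling the strategy used to establish Theorem \ref{theoR2R}, but pulling back a countermodel from a finite dynamic poset rather than from $\mathbb R$. Suppose $\varphi\in\mathcal L_{\diam\ubox\khence}$ is not valid on the class of expanding posets. Then there exist a dynamic poset $(W,{\peq},S)$, a valuation $\val\cdot_W$ on it, and a point $w_0\in W$ with $w_0\notin\val\varphi_W$. Since $W$ is at this stage arbitrary, the first step is to invoke the finite model property (Theorem \ref{theoFMP}) to replace the countermodel with a finite one; so henceforth I may assume $W$ is finite.

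Next I would apply Theorem \ref{theoIsPMorph} to this finite dynamic poset and the distinguished point $w_0$. This yields a continuous map $T\colon\mathbb R^2\to\mathbb R^2$, an open, $T$-invariant set $U\subseteq\mathbb R^2$ containing $0$, and a dynamic morphism $\pimor\colon\mathbb R^2\to W$ (with domain $U$) such that $\pimor(0)=w_0$. With this dynamic morphism in hand, Proposition \ref{propPMorph} supplies a valuation $\val\cdot_{\mathbb R^2}$ on $(\mathbb R^2,T)$ satisfying, for every formula $\psi$,
\[
\val\psi_{\mathbb R^2}\cap U=\pimor^{-1}\val\psi_W.
\]

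Finally, I would specialize this identity to $\psi=\varphi$. Since $\pimor(0)=w_0\notin\val\varphi_W$, we get $0\notin\pimor^{-1}\val\varphi_W$; because $0\in U$, the identity above gives $0\notin\val\varphi_{\mathbb R^2}$. Hence $(\mathbb R^2,T,\val\cdot_{\mathbb R^2})\not\models\varphi$, so $\varphi\notin{\sf ITL}^{\mathbb R^2}_{\diam\ubox\khence}$, completing the contrapositive.

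Conceptually there is no real obstacle here: all the technical work has been shouldered by Theorems \ref{theoFMP} and \ref{theoIsPMorph} and by the truth-preservation lemma for dynamic morphisms (Proposition \ref{propPMorph}). The only step requiring care is making sure that Theorem \ref{theoIsPMorph}, as stated, applies to arbitrary finite dynamic posets and delivers $0$ inside the domain $U$ mapping to the chosen $w_0$; this is precisely how it is formulated, so the argument goes through directly.
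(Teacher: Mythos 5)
Your proposal is correct and follows essentially the same route as the paper's own proof: contrapositive, reduction to a finite dynamic poset via Theorem~\ref{theoFMP}, construction of the dynamic morphism from an open $T$-invariant $U\subseteq\mathbb R^2$ via Theorem~\ref{theoIsPMorph}, and pullback of the valuation via Proposition~\ref{propPMorph} to falsify $\varphi$ at $0$. No gaps.
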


\proof
We prove the claim by contrapositive.
If $\varphi\not \in {\sf ITL}^{\sf e}_{\diam\ubox{\khence}}$, then by Theorem \ref{theoFMP} there are a finite dynamic poset model $\mathcal M = (W,{\peq},T,\val\cdot_W)$ and $w_0 \in W \setminus \val \varphi_W$.
By Theorem \ref{theoIsPMorph}, there exist $ T\colon \mathbb R^2 \to \mathbb R^2$, an open, $T$-invariant set $U\subseteq \mathbb R^2$, and a dynamic morphism $\pimor\colon U \to W$, such that $0\in U$ and $\pimor(0) = w_0$.
By Proposition \ref{propPMorph}, there is a valuation $\val\cdot _{\mathbb R^2}$ on $\mathbb R^2$ such that $\val \varphi _{\mathbb R^2} \cap U = \pimor^{-1} \val \varphi_W$; in particular, since $w_0 \not \in \val \varphi_W$, we have that $0\not \in \val \varphi _{\mathbb R^2}$, so that $\mathbb R^2\not \models \varphi$ and hence $\varphi \not \in {\sf ITL}_{\diam\ubox{\khence}}^{\mathbb R^2} $.
\endproof

\section{Persistent posets and the real line}\label{secPers}

We have seen that ${\sf ITL}_{\diam\ubox\khence}^{\mathbb R^2} \subseteq {\sf ITL}_{\diam\ubox\khence}^{\sf e}$.
The question naturally arises whether a similar result holds when restricting to open systems: is every formula falsifiable on a persistent poset (i.e., a poset equipped with a continuous, open map) falsifiable on some Euclidean space, also equipped with a continuous, open map?
Surprisingly, not only is the answer affirmative, but in this case, any falsifiable formula is falsifiable on the real line.
In this section, we will prove this fact, along with some properties of $\itlp$ which may be interesting on their own right.
We remark that in this context $\ubox$ and $\khence$ coincide, so we restrict our attention to languages with the former.

One challenge is that we do not have the finite model property in this setting.
This is already proven in \cite{BalbianiToCL} for $\mathcal L_{\diam\ubox}$, but in fact, the finite model property already fails over $\mathcal L_\ubox$.

\begin{proposition}\label{propNoFMP}
The formula $\varphi = \ubox \neg\neg p \to \neg \neg \ubox p$ is valid over the class of all finite persistent posets, but not over the class of all persistent posets.
\end{proposition}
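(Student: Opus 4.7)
My plan is to establish the two assertions of the proposition separately.

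For the first (validity on finite persistent posets), I leverage Proposition \ref{propBoxKripke}, which on a dynamic poset reduces $\val{\ubox\varphi}$ to the pointwise condition that $S^n(w)\in\val\varphi$ for all $n\in\mathbb N$. Fix a finite persistent model $\mathcal{M}=(W,\peq,S,\val\cdot)$ and assume $w\in\val{\ubox\neg\neg p}$; to show $w\in\val{\neg\neg\ubox p}$, pick any $u\succcurlyeq w$ and build an increasing chain $u^{(0)}\peq u^{(1)}\peq\cdots$ satisfying the invariant $S^k(u^{(n)})\in\val p$ for all $k\leq n$. At the base, apply $u\in\val{\neg\neg p}$ to obtain $u^{(0)}\succcurlyeq u$ with $u^{(0)}\in\val p$. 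For the inductive step, observe that $S^{n+1}(u^{(n)})\in\val{\neg\neg p}$ by monotonicity of $S$ and upward closure of $\val{\neg\neg p}$, giving some $z\succcurlyeq S^{n+1}(u^{(n)})$ with $z\in\val p$; then use iterated openness of $S$ (obtained from single-step back-confluence by induction on the number of iterations) to produce $u^{(n+1)}\succcurlyeq u^{(n)}$ with $S^{n+1}(u^{(n+1)})=z$. Monotonicity then ensures $S^k(u^{(n+1)})\succcurlyeq S^k(u^{(n)})\in\val p$ for $k\leq n$, preserving the invariant. Since $W$ is finite, the weakly increasing chain must stabilize at some $u':=u^{(N)}$ for which $S^k(u')\in\val p$ for every $k\in\mathbb N$, so Proposition \ref{propBoxKripke} delivers $u'\in\val{\ubox p}$; because $u\succcurlyeq w$ was arbitrary, $w\in\val{\neg\neg\ubox p}$.

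For the second assertion (failure in general), I produce an explicit infinite countermodel. Let $W=\mathbb Z\times\mathbb N$ ordered by $(a,b)\peq(a',b')$ iff $a=a'$ and $b\leq b'$, with $S(a,b)=(a+1,b)$. Monotonicity is immediate, and back-confluence holds because any $(a+1,b')\succcurlyeq S(a,b)$ is the image under $S$ of $(a,b')\succcurlyeq(a,b)$; thus $(W,\peq,S)$ is a persistent poset. Set $\val p=\{(a,b):b\geq|a|\}$, which is upward closed in each column and hence open. Then $\val{\neg p}=\varnothing$, because an upward-closed set disjoint from $\val p$ would have to avoid arbitrarily large second coordinates in some column; consequently $\val{\neg\neg p}=W$ and $\val{\ubox\neg\neg p}=W$. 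On the other hand, any $S$-invariant open $U\subseteq\val p$ containing $(a,b)$ must contain every $S^m(a,b)=(a+m,b)$, forcing $b\geq|a+m|$ for all $m\geq 0$, which is absurd; hence $\val{\ubox p}=\varnothing$, so $\val{\neg\neg\ubox p}=\varnothing$, and for instance $(0,0)$ falsifies $\ubox\neg\neg p\to\neg\neg\ubox p$.

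The main obstacle is maintaining the invariant in the inductive construction of the chain $(u^{(n)})$: the new witness $u^{(n+1)}$ must simultaneously dominate $u^{(n)}$, push the $\val p$-condition out to the $(n+1)$-th iterate of $S$, and preserve it at all earlier iterates. The key device is iterated openness of $S$, which lifts a chosen $z\succcurlyeq S^{n+1}(u^{(n)})$ back through $n+1$ applications of $S$; once this lifting is in place, upward closure of $\val p$ automatically recovers the earlier iterates and finiteness of $W$ terminates the construction. The countermodel above shows why finiteness is essential: the chain $u^{(0)}\peq u^{(1)}\peq\cdots$ can ascend forever with no stable limit lying in $\val{\ubox p}$.
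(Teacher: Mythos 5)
Your proof is correct, but both halves take a genuinely different route from the paper's. For the finiteness half, the paper builds no chain: it observes that in a finite poset every point has a maximal point $v$ above it, that persistent maps send maximal points to maximal points (openness forces any $u\seq S(v)$ to be $S(v')$ for some $v'\seq v$, and maximality gives $v'=v$), and that a maximal point satisfying $\neg\neg p$ must satisfy $p$ since its up-set is a singleton; hence $S^n(v)\in\val p$ for all $n$ and $v\in\val{\ubox p}$ directly by Proposition \ref{propBoxKripke}. Your construction --- an increasing chain $u^{(0)}\peq u^{(1)}\peq\cdots$ obtained by lifting witnesses for $\neg\neg p$ back through $S^{n+1}$ via iterated openness, terminated by stabilization --- reaches the same conclusion with more machinery; in effect your stabilized $u'$ plays the role of the paper's maximal point. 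The paper's route is shorter and, as its footnote notes, applies to any persistent poset in which every point has a maximal point above it, whereas your argument instead needs ascending chains to be eventually constant; both conditions hold trivially in the finite case. For the failure half, the paper uses the simpler one-dimensional countermodel $W=\mathbb Z$ with the usual order, $S(x)=x-1$ and $\val p=\mathbb N$: every up-set $[x,\infty)$ meets $\mathbb N$, so $\val{\neg\neg p}=\mathbb Z$, while $S^{x+1}(x)=-1\notin\val p$ makes $\val{\ubox p}=\varnothing$. Your $\mathbb Z\times\mathbb N$ model is heavier but equally valid, and your verifications of persistence, of $\val{\neg p}=\varnothing$, and of $\val{\ubox p}=\varnothing$ all check out; nothing is missing.
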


\begin{proof}
First we show that $\varphi$ is indeed valid over any finite persistent poset.
Let $ \mathcal M = (W,\peq,S,\val\cdot)$ be any model based on a finite persistent poset, and let $w\in \val{\ubox\neg\neg p}$.
We show that $w\in \val{\neg\neg\ubox p}$.
It suffices to show that if $v\seq w$ is maximal, then $v \in \val{\ubox p}$.\footnote{In fact, the only property we use of $\mathcal M$ is that for every $w$ there is a maximal $v\seq w$, so $\varphi$ is valid over any persistent model with this property.}
So, let $n\in \mathbb N$.
Then, $S^n(w) \in \val {\neg\neg p}$, and $S^n(v)$ is maximal (as order-preserving persistent functions preserve maximality), so $S^n(v) \in \val p$.
It follows that $v\in \val{\ubox p}$.

To see that $\varphi$ is not valid over the class of persistent posets, let $W =  \mathbb Z$, where $\peq$ is the usual order and $S(x) = x-1$.
Let $\val p =\mathbb N$.
Then, every point satisfies $\neg\neg p$ (as every large-enough point satisfies $p$), so that in particular $ 0 \in \val {\ubox \neg \neg p} $.
However, no point satisfies $\ubox p$, since $S^{x+1}(x) \not \in \val p$.
Hence, in particular, $0 \not \in \val \varphi$.
\end{proof}

Thus, we cannot avoid working with infinite models.
However, we can still work with models that have some `nice' properties.
For starters, $\itlp$ is complete for the class of product models \cite{KuruczMany}.
The products we consider will have a rather particular form.
We will need the following general definition, which applies to arbitrary topological spaces.

\begin{definition}\label{defInfty}
Let $\mathcal X = (X, \mathcal T )$ be any topological space.
We define a new dynamical system $ \timesn{ \mathcal X } = (\timesn X, \timesn{\mathcal T} ,\timesn S)$, where
\begin{itemize}

\item $\timesn X = X  \times\mathbb N$,

\item  $U\subseteq  \timesn X$ is open if and only if for every $n\in \mathbb N$, $\{x\in X: (x,n) \in U\}$ is open, and

\item  $\timesn S(x,n) = (x,n+1)$.

\end{itemize}
We say that a dynamical system $\mathcal Y$ is a {\em product system} if $\mathcal Y =\timesn{\mathcal X }$ for some space $\mathcal X$.
If $\mathcal X$ was a poset, then $\mathcal Y$ is a {\em product poset.}
\end{definition}

Products have the property that they `lift' interior maps to dynamic morphisms, in the following sense:

\begin{lemma}\label{lemmProduct}
Let $\mathcal X = ( X,\mathcal T _\mathcal X )$ be a topological space, $\mathcal Y = ( Y,\mathcal T_\mathcal Y,S_\mathcal Y)$ be an open dynamical system, and suppose that $\pimor\colon X \to Y$ is an interior map.
Then, there exists a dynamic morphism $\timesn\pimor  \colon \timesn X \to Y$ whose range is $\bigcup_{n\in\mathbb N} S^n_\mathcal Y \pimor[X]$.
\end{lemma}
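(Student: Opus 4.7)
The plan is to give an explicit formula for $\timesn\pimor$ and then verify that it satisfies the conditions spelled out in the definition of dynamic morphism, namely being an interior map on an open, $\timesn S$-invariant subset of $\timesn X$ and commuting with the dynamics. Concretely, I would define
\[
\timesn\pimor(x,n) \eqdef S_\mathcal Y^{\,n}\bigl(\pimor(x)\bigr),
\]
and take the domain of $\timesn\pimor$ to be all of $\timesn X$, which is trivially open and $\timesn S$-invariant. The identification of the range as $\bigcup_{n\in\mathbb N} S_\mathcal Y^{\,n}[\pimor[X]]$ is then immediate from the definition.

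The commutation with the dynamics is equally direct: $\timesn\pimor(\timesn S(x,n)) = \timesn\pimor(x,n+1) = S_\mathcal Y^{\,n+1}(\pimor(x)) = S_\mathcal Y(\timesn\pimor(x,n))$. The two remaining properties to verify are continuity and openness. For continuity, given $V\in\mathcal T_\mathcal Y$, I would write
\[
\bigl\{x\in X : (x,n) \in (\timesn\pimor)^{-1}[V]\bigr\} = \pimor^{-1}\bigl[S_\mathcal Y^{\,-n}[V]\bigr],
\]
which is open in $\mathcal X$ because $S_\mathcal Y$ is continuous (so $S_\mathcal Y^{-n}[V]$ is open) and $\pimor$ is continuous; by the definition of $\timesn{\mathcal T}$ this makes $(\timesn\pimor)^{-1}[V]$ open in $\timesn X$.

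For openness, I would take an arbitrary open $U\subseteq \timesn X$ and write $U_n \eqdef \{x\in X : (x,n)\in U\}$, which is open in $\mathcal X$ by Definition~\ref{defInfty}. Then
\[
\timesn\pimor[U] = \bigcup_{n\in\mathbb N} S_\mathcal Y^{\,n}\bigl[\pimor[U_n]\bigr],
\]
and each $\pimor[U_n]$ is open in $\mathcal Y$ because $\pimor$ is an interior map, while $S_\mathcal Y^{\,n}$ preserves openness because $S_\mathcal Y$ is open and finite compositions of open maps are open. Hence $\timesn\pimor[U]$ is a union of open sets, therefore open. The main subtlety, and the only place where the openness hypothesis on $\mathcal Y$ is used, is precisely this last verification; everywhere else the continuity alone of $S_\mathcal Y$ and $\pimor$ suffices.
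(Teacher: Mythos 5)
Your proposal is correct and follows essentially the same route as the paper: the same explicit formula $\timesn\pimor(x,n)=S^n_{\mathcal Y}\pimor(x)$, the same commutation computation, and the same identification of the range. The only cosmetic difference is that you verify continuity and openness via a global slice-wise decomposition of preimages and images, whereas the paper argues pointwise with neighbourhoods; both arguments use exactly the same facts (continuity of $S_\mathcal Y$ and $\pimor$ for preimages, openness of $S_\mathcal Y$ and $\pimor$ for images) together with the defining property of the product topology.
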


\begin{proof}
Define $\timesn \pimor  \colon X\times \mathbb N \to Y$ by $\timesn \pimor (x,n) = S^n_\mathcal Y \pimor (x)$.
First we must check that $\timesn \pimor$ is continuous and open.
If $ (x,n) \in X_\infty $ and $U$ is a neighbourhood of $\timesn \pimor (x,n)$, then since both $\pimor$ and $S^{n}_\mathcal Y$ are continuous, $\pimor^{-1} S^{-n}_\mathcal Y [U] \times \{n\}$ is a neighbourhood of $(x,n)$ contained in $\timesn{\pimor^{-1}}[U]$.
Since $(x,n)$ was arbitrary, $ \timesn\pimor$ is continuous.
Similarly, if $O$ is a neighbourhood of $(x,n)$, then $S^n_\mathcal Y \pimor [O\cap(X\times \{n\})]$ is a neighbourhood of $\timesn\pimor(x,n)$ contained in $\timesn \pimor[O]$, which since $(x,n)$ was arbitrary shows that $\timesn \pimor[O]$ is open.
Hence, $\timesn\pimor$ is an open map.

Next we note $S_\mathcal Y\timesn \pimor(x,n) = S_\mathcal Y \circ S_\mathcal Y^{n} \pimor(x) =S_\mathcal Y^{n+1} \pimor(x) = \timesn\pimor (x,n+1) = \timesn\pimor \timesn S(x,n)$.
Finally, the range of $\timesn \pimor$ is
\[\timesn \pimor[X\times \mathbb N] = \bigcup_{n\in\mathbb N}\timesn \pimor [X\times \{n\}] = \bigcup_{n\in\mathbb N} S^n_\mathcal Y \pimor[X].\]
\end{proof}

As a corollary, we immediately obtain that any formula satisfiable (falsifiable) on a persistent poset is satisfiable (falsifiable) on a product poset, since we can take $\mathcal X = \mathcal Y $ and $\pimor$ to be the identity.
With this, we can easily check that we can restrict our attention to the class of countable models.
Below, if $\mathcal M = (W,\peq,S,\val\cdot)$ is a dynamical poset model and $Z\subseteq W$, then $\mathcal M \upharpoonright Z = (Z,\peq\upharpoonright Z,S \upharpoonright Z ,\val\cdot \upharpoonright Z)$ is the sub-structure with domain $W$ and such that each of $\peq$, $S$ and $\val\cdot$ are restricted to $Z$.
If $Z$ is $S$-invariant but not necessarily open, then $\mathcal M \upharpoonright Z$ is also based on a dynamic poset, although $\val\cdot \upharpoonright Z$ may not be a valuation.
However, this will indeed be the case if we choose $Z$ appropriately.

\begin{lemma}\label{lemLowen}
If $\varphi \in \mathcal L_{\diam\ubox}$ is falsifiable on a persistent poset, it is falsifiable on a countable model.
\end{lemma}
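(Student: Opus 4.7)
The plan is a Löwenheim-Skolem style construction. By the observation following Lemma \ref{lemmProduct} (taking $\pimor$ to be the identity), any formula falsifiable on a persistent poset is falsifiable on a product poset; I may therefore assume the counter-model has the form $\mathcal M = (X \times \mathbb N, \peq, S, \val\cdot)$ with $(x,n) \peq (y,m)$ iff $x \peq y$ and $n = m$, and $S(x,n) = (x, n+1)$, and fix $(x_0, n_0) \notin \val\varphi_\mathcal M$.

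Next, I would build a countable sub-poset $X_0 \subseteq X$ by closing $\{x_0\}$ under Skolem witnesses for failing implications. Set $X_0^{(0)} = \{x_0\}$; given $X_0^{(i)}$, for each triple $(x, n, \psi \to \chi)$ with $x \in X_0^{(i)}$, $n \in \mathbb N$, $\psi \to \chi \in {\mathrm{sub}}(\varphi)$, and $(x,n) \notin \val{\psi \to \chi}_\mathcal M$, I use the poset-style semantics of implication to pick some $y \in X$ with $x \peq y$ and $(y, n) \in \val\psi_\mathcal M \setminus \val\chi_\mathcal M$; let $X_0^{(i+1)}$ consist of $X_0^{(i)}$ together with all such witnesses, and put $X_0 = \bigcup_{i \in \mathbb N} X_0^{(i)}$. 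Since ${\mathrm{sub}}(\varphi)$ is finite and $\mathbb N$ countable, an easy induction shows each $X_0^{(i)}$ is countable, hence so is $X_0$. Set $Z = X_0 \times \mathbb N$ and $\mathcal M' = \mathcal M \upharpoonright Z$.

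Finally, I would verify that $\mathcal M'$ is a persistent poset model and that $\val \psi_{\mathcal M'} = \val\psi_\mathcal M \cap Z$ for every $\psi \in {\mathrm{sub}}(\varphi)$. The set $Z$ is $S$-invariant; the restriction of $S$ to $Z$ stays open, because if $S(x,n) \peq (y, n+1)$ with $(y, n+1) \in Z$ then $y \in X_0$, so $(x, n) \peq (y, n) \in Z$ and $S(y, n) = (y, n+1)$; and the subspace topology on a sub-poset coincides with the up-set topology of the restricted order, so $\val p_\mathcal M \cap Z$ is open. The induction on $\psi$ is routine in the boolean, $\tnext$, and $\diam$ cases (using $S$-invariance of $Z$), and the implication case is precisely what the Skolem closure was designed to handle. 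The hardest case is $\ubox$, where the naive worry is that restriction could destroy an $S$-invariant open witness; this is defused by Proposition \ref{propBoxKripke}, which on any dynamic poset model reduces $(x, n) \in \val{\ubox\chi}$ to the purely pointwise condition that $(x, n+k) \in \val\chi$ for all $k \in \mathbb N$, and this condition is preserved on passing to $Z$ because the orbit of each point of $Z$ stays in $Z$.
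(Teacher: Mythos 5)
Your proposal is correct and follows essentially the same route as the paper's proof: reduce to a product model via Lemma \ref{lemmProduct}, close the base point under Skolem witnesses for failing implications while keeping the entire fiber $\{x\}\times\mathbb N$ of every added point (which is exactly what preserves $S$-invariance and openness of the restricted map), and verify the truth lemma. The only differences are cosmetic --- you close under witnesses only for implications in $\mathrm{sub}(\varphi)$ rather than all formulas, and you spell out the $\ubox$ case via Proposition \ref{propBoxKripke}, which the paper leaves implicit.
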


\begin{proof}
We proceed as in a standard proof of the downward L\"owenheim-Skolem theorem.
Let $\mathcal M = (W,\peq,S,\val\cdot)$ be a model based on a persistent poset such that $\mathcal M \not\models \varphi$.
In view of Lemma \ref{lemmProduct}, we may assume that $\mathcal M$ is a product model, so that $W = \timesn U $ for some $U$, and there is $w_0 \in U$ such that $(w_0,0)\not\in \val\varphi$.
We define a sequence $V_0\subseteq V_1\subseteq \ldots \subseteq U$ of countable sets, so that for $V = \bigcup_{n\in \mathbb N} V_n$ we have that $\mathcal M\upharpoonright V$ falsifies $\varphi$.
The construction is straightforward: $V_0 = \{w_0\}\times\mathbb N$, and if we are given $V_n$, define
\[V_{n+1} := V_n \cup \{(v _{w,k}^{\psi\to\theta},m) :\text{ $ (w,k) \in V_n$, $ \psi, \theta \in \mathcal L_{\diam\ubox} $, and $ m\in\mathbb N$}\},\]
where $v _{w,k}^{\psi\to\theta} = w$ if $ (w,k) \in \val {\psi\to \theta}$, and otherwise $v = v _{w,k}^{\psi\to\theta} $ is chosen to satisfy $ v  \seq w$, $ (v,k) \in \val \psi $ and $(v,k) \not \in \val \theta$.
Note that at each stage we add $(v _{w,k}^{\psi\to\theta} ,m)$ for {\em all} $m$, which ensures that the resulting set is $S$-invariant and that $S$ is open on each $V_n$, hence on $V$.
One can then easily check that $w_0$ satisfies the same formulas on $\mathcal M \upharpoonright V$ as it did on $\mathcal M$.
\end{proof}

Given the lack of the finite model property for persistent posets we need to consider infinite posets, but in view of Lemma \ref{lemLowen}, it suffices to work with countable posets.
Fortunately, we may work with a single, `universal' countable poset.

\begin{definition}
Define $2^{<\mathbb N}$ to be the set of (possibly empty) binary strings, where $a \sqsubseteq b$ if and only if $a$ is an initial segment of $b$.
We denote the empty string by $\epsilon$.
\end{definition}

The following is proven in \cite{GoldblattDiodorean} for finite structures, and as mentioned in \cite{KremerStrong}, readily extends to countable structures.

\begin{theorem}\label{theoStringstoW}
Given a countable poset $(W,\peq)$ and $w_0 \in W$, there is an interior map $\pimor \colon 2^{<\mathbb N} \to W$ such that $w_0 = \pimor(\epsilon)$.
\end{theorem}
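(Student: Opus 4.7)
The plan is to construct $\pimor$ by a stage-by-stage bookkeeping argument, exploiting the tree structure of $2^{<\mathbb N}$. The key structural idea is to decompose every string $a \in 2^{<\mathbb N}$ uniquely as $a = b \cdot 0^k$ with $k \ge 0$ and $b$ either empty or ending in $1$; call such $b$ \emph{special}. Once $\pimor$ is defined on all special strings consistently, I extend it to all of $2^{<\mathbb N}$ by setting $\pimor(a) := \pimor(b)$ whenever $a = b \cdot 0^k$. This automatically yields continuity along every ``$0$-tail'' segment and reduces the problem to specifying $\pimor$ on the countable set of special strings.

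Enumerate the demands $(a, v) \in 2^{<\mathbb N} \times W$ as $d_0, d_1, \ldots$ in which each pair appears infinitely often, and enumerate the special strings by length as $b_0 = \epsilon, b_1, b_2, \ldots$. Set $\pimor(\epsilon) := w_0$. At stage $n \ge 1$, first, if $b_n$ has not yet been assigned, set $\pimor(b_n) := \pimor(b_n^\ast)$, where $b_n^\ast$ denotes $b_n$'s special predecessor (which is shorter, hence already assigned by the enumeration order). Second, consider $d_n = (a, v)$: if $a^\ast$ (the special prefix of $a$) has already been assigned and $v \seq \pimor(a^\ast)$, choose $j$ large enough that the special string $a \cdot 0^j \cdot 1$ has not yet been assigned, and set $\pimor(a \cdot 0^j \cdot 1) := v$. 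Such $j$ always exists since only finitely many special strings have received values by stage $n$.

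After all stages, $\pimor$ is defined on every special string (via step one at the appropriate stage), hence on all of $2^{<\mathbb N}$. Continuity holds because every special-string assignment $\pimor(b)$ satisfies $\pimor(b) \seq \pimor(b^\ast)$: either trivially by step one, or, in the case of step two with $b = a \cdot 0^j \cdot 1$ and special predecessor $a^\ast$, because we have $v \seq \pimor(a^\ast) = \pimor(b^\ast)$. Openness holds because, given any $a$ and $v$ with $v \seq \pimor(a)$, the demand $(a, v)$ recurs infinitely often in the enumeration, so once $a^\ast$ is assigned and the stage index is large enough, the demand is processed and the descendant $a \cdot 0^j \cdot 1 \sqsupseteq a$ receives value $v$. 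The main obstacle is coordinating the two steps to avoid conflicts across stages; this is resolved by the freshness requirement on $a \cdot 0^j \cdot 1$ at each demand-processing step and by the fact that $\pimor(a)$ is always compared against the stable value $\pimor(a^\ast)$, which is fixed once assigned and never revised.
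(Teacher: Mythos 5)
Your construction is correct. Note that the paper does not actually prove this theorem: it is imported from the literature (Goldblatt's result for finite posets, extended to countable ones as remarked by Kremer), so you have supplied a self-contained argument where the paper offers only a citation. Your route is essentially the standard ``universality of the binary tree'' back-and-forth: continuity for maps between posets with up-set topologies reduces to monotonicity, openness to the condition that every $v\seq\pimor(a)$ is realized above $a$, and both are secured by a fair enumeration of demands $(a,v)$. The one genuinely nice device you add is the decomposition $a=b\cdot 0^k$ with $b$ special: collapsing each $0$-tail onto its special prefix gives you, above every node $a$, the infinite antichain $\{a\cdot 0^j\cdot 1 : j\in\mathbb N\}$ of fresh special successors, which is exactly what is needed to satisfy countably many demands at $a$ without ever revising an earlier assignment; this is what makes the argument work for arbitrary countable $W$ (where a point may have infinitely many $\peq$-successors) even though each node of $2^{<\mathbb N}$ has only two immediate successors. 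All the verifications go through: assignments are never overwritten, every special string is eventually assigned (so $\pimor$ is total), each assignment satisfies $\pimor(b)\seq\pimor(b^\ast)$ (giving monotonicity along the chain of special prefixes), and the infinite recurrence of each demand together with the stability of $\pimor(a^\ast)$ gives openness. The map has image ${\uparrow}w_0$ rather than all of $W$, but surjectivity is not required by the statement nor by its use in Corollary~\ref{corStringstoW}.
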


In view of Lemma \ref{lemmProduct}, one then immediately obtains the following.

\begin{corollary}\label{corStringstoW}
Given a countable persistent poset $(W,\peq,S)$ and $w_0 \in W$, there is a dynamic morphism $\pimor  \colon \timesn{ 2^{<\mathbb N} } \to W$ such that $\pimor(\epsilon) = w_0$.
\end{corollary}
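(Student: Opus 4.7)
The plan is to combine Theorem \ref{theoStringstoW} with Lemma \ref{lemmProduct} in a straightforward way. First I would apply Theorem \ref{theoStringstoW} to the countable poset $(W,{\peq})$ and the point $w_0$, obtaining an interior map $\pimor_0 \colon 2^{<\mathbb N} \to W$ with $\pimor_0(\epsilon) = w_0$. Note that we are only using the purely topological (poset) content of $(W,{\peq},S)$ at this stage; the dynamics of $S$ have not yet been invoked.

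Next I would observe that $(W,{\peq},S)$, being a persistent poset, is an open dynamical system, so Lemma \ref{lemmProduct} applies with $\mathcal X = 2^{<\mathbb N}$, $\mathcal Y = (W,{\peq},S)$, and $\pimor = \pimor_0$. This yields a dynamic morphism $\timesn{\pimor_0} \colon \timesn{2^{<\mathbb N}} \to W$. Setting $\pimor := \timesn{\pimor_0}$, the commutation with $S$ and the continuity/openness are guaranteed by the lemma.

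Finally, the basepoint condition follows from the explicit formula supplied in the proof of Lemma \ref{lemmProduct}: $\timesn{\pimor_0}(a,n) = S^n \pimor_0(a)$. In particular, identifying $\epsilon$ with $(\epsilon,0) \in 2^{<\mathbb N} \times \mathbb N$, we get
\[\pimor(\epsilon) = \timesn{\pimor_0}(\epsilon,0) = S^0 \pimor_0(\epsilon) = \pimor_0(\epsilon) = w_0,\]
as required. There is no real obstacle here: the corollary is essentially a matter of assembling the two preceding results, with the only subtlety being a notational identification of the root $\epsilon$ of $2^{<\mathbb N}$ with its image in the product $\timesn{2^{<\mathbb N}}$ at time zero.
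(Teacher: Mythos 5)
Your proposal is correct and matches the paper's own argument: the corollary is obtained exactly by applying Theorem \ref{theoStringstoW} to get an interior map $2^{<\mathbb N}\to W$ rooted at $w_0$ and then lifting it to a dynamic morphism via Lemma \ref{lemmProduct}, with the basepoint condition following from the formula $\timesn{\pimor_0}(a,n)=S^n\pimor_0(a)$. The identification of $\epsilon$ with $(\epsilon,0)$ is the same tacit convention the paper uses.
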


We extend the notation $\sqsubseteq $ to elements of $\timesn{ 2^{<\mathbb N} } = 2^{<\mathbb N} \times \mathbb N$ by letting $(a,n)\sqsubseteq (b,m)$ if $a\sqsubseteq b$ and $n=m$.
It is not hard to check that the order $\sqsubseteq$ generates the topology on $\timesn{ 2^{<\mathbb N} }$.
It would remain to show that there is a dynamic morphism $\pimor \colon U \to \timesn{2^{<\mathbb N} }$ for some suitable $U\subseteq \mathbb R$ and some suitable $S\colon \mathbb R\to \mathbb R$.
Actually, as already observed in \cite{KremerStrong}, this will not be possible.
So, we must replace $  \timesn{2^{<\mathbb N} }$ by a slightly different space.

\begin{definition}

Let $2^{\leq \mathbb N}$ be the set of all finite or infinite strings, and extend the notation $\sqsubseteq$ to $2^{\leq \mathbb N}$ by also setting $a\sqsubseteq b$ if $a$ is an initial segment of $b$.
For finite $b$, set ${\uparrow} b = \{a\in 2^{\leq \mathbb N} : b\sqsubseteq a\}$.
Say that $U\subseteq 2^{\leq \mathbb N}$ is {\em open} if whenever $a \in U$, there is a finite $b\sqsubseteq a$ such that ${\uparrow} b \subseteq U$.
\end{definition}

The space $2^{\leq \mathbb N}$ is thus endowed with a topology.
This space is more convenient since, indeed, there is a surjective interior map from $(0,1)$ to this space.
The following is shown in \cite{KremerStrong}.

\begin{theorem}\label{theoPInt}
There exists a surjective interior map $\pimor\colon (0,1) \to 2^{\leq \mathbb N}$.
\end{theorem}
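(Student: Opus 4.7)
The plan is to construct $\pimor$ via a recursive fractal decomposition of $(0,1)$. First, I would choose a closed, nowhere dense subset $C \subset (0,1)$ whose complement decomposes as a disjoint union $\bigsqcup_{n \in \mathbb N} J_n$ of open intervals whose endpoints accumulate densely in $(0,1)$ (a middle-thirds style construction works). Every point of $C$ will be sent to $\epsilon$, and every $J_n$ will be assigned a label $\ell_n \in \{0,1\}$ in a way that ensures that, inside every open subinterval of $(0,1)$, infinitely many $J_n$ of each label occur.

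On each interval $J_n$, I would recursively define $\pimor\upharpoonright J_n\colon J_n \to {\uparrow}\ell_n$ as an interior surjection onto the cone ${\uparrow}\ell_n\subseteq 2^{\leq\mathbb N}$. Since ${\uparrow}0$ and ${\uparrow}1$, with the subspace topology inherited from $2^{\leq\mathbb N}$, are both homeomorphic to $2^{\leq\mathbb N}$ itself (by stripping the first letter), and since each $J_n$ is homeomorphic to $(0,1)$, the recursion reduces to the same problem at a smaller scale. One way to carry it out is to build an increasing sequence of partial maps defined up to depth $k$, where at depth $k$ the image of each point has been committed to a finite prefix of length at most $k$; the limit $\pimor$ is then well-defined because the diameters of the relevant intervals shrink to zero.

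To verify that $\pimor$ is an interior map, continuity follows straightforwardly: if $a = \pimor(x)$ lies in a basic open set ${\uparrow}b$ of $2^{\leq\mathbb N}$ with $b$ finite, then by construction $x$ lies in an open subinterval mapped entirely into ${\uparrow}b$. The delicate part is openness: given an open $V \subseteq (0,1)$ and $a \in \pimor(V)$, I must exhibit a finite $b \sqsubseteq a$ with ${\uparrow}b \subseteq \pimor(V)$. For $a \neq \epsilon$, one selects an interval $J_n$ of the recursion sitting inside $V$ and below $a$, and takes $b$ to be the finite prefix reached at that stage. For $a = \epsilon$, the density-of-labels condition imposed in the first step guarantees that ${\uparrow}b \subseteq \pimor(V)$ for every finite $b$, since $V$ meets infinitely many $J_n$ of each label and the recursion within each such $J_n$ realizes every finite extension. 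Surjectivity onto finite strings follows from this same density, and onto infinite strings by extracting a nested sequence $J_{n_1} \supseteq J_{n_2} \supseteq \cdots$ of shrinking diameter realizing any prescribed infinite label sequence.

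The main obstacle is combinatorially organizing the labelling and nesting so that openness holds simultaneously at every point: one must ensure both that inside each $J_n$ the induced map is itself an interior surjection onto ${\uparrow}\ell_n$, and that at points of $C$ the required density of labels holds at every scale. A careful bookkeeping of the recursion depth, combined with an enumeration of the finite strings to be realized, should suffice to discharge this.
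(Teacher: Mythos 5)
The paper does not actually prove this theorem: it is imported from Kremer's unpublished note \cite{KremerStrong}, which establishes the stronger statement that every dense-in-itself complete metric space admits a surjective interior map onto $2^{\leq\mathbb N}$ (and then drops completeness via algebraic semantics). Your direct Cantor-set construction on $(0,1)$ is therefore a genuinely different route --- more elementary and self-contained, at the cost of generality --- and in outline it is correct: the labelled-gap decomposition, the recursive invariant that each restriction $\pimor\upharpoonright J_n$ is an interior surjection onto ${\uparrow}\ell_n$, and the identification of openness at preimages of finite strings as the delicate point are exactly the right ingredients. Three small repairs are needed. First, your density requirement cannot hold for \emph{every} open subinterval (an interval lying inside a single $J_n$ meets only that one); what you need, and what a Cantor-type $C$ delivers, is that every neighbourhood of every point of $C$ contains infinitely many \emph{whole} gaps of each label --- achievable, e.g., by indexing the gaps by the nodes of the binary tree underlying the middle-thirds construction and labelling each gap by the last letter of its address, so that both labels occur among the descendants of every node. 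Second, for a finite nonempty string $a$ the only admissible witness in the openness condition is $b=a$ itself, since any proper prefix $b\sqsubseteq a$ gives ${\uparrow}b\supsetneq{\uparrow}a$, which a small neighbourhood of $x$ cannot cover; so these points require the same density-of-labels argument as $\epsilon$, one level down, which is discharged by your recursive invariant rather than by ``selecting an interval $J_n$ inside $V$.'' Third, shrinking diameters are needed not for well-definedness (the label sequence alone determines the value) but for openness at, and surjectivity onto, the infinite strings: for infinite $a=\pimor(x)$ one takes $b=a\upharpoonright k$ once the depth-$k$ component around $x$ lies inside $V$, and a nested sequence of compactly included gaps with diameters tending to $0$ has nonempty intersection, yielding a preimage of any prescribed infinite string. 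With these adjustments the argument goes through and gives a concrete alternative to the citation.
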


This immediately gives us a dynamic morphism from an open subset of $\mathbb R$ to the dynamical system $\timesn{2^{\leq \mathbb N} } $ defined as above.
As we did for $\timesn{ 2^{<\mathbb N} }  $, we further extend the notation $\sqsubseteq $ to elements of $\timesn{ 2^{\leq \mathbb N} }  $ by letting $(a,n)\sqsubseteq (b,m)$ if $n=m$ and $a\sqsubseteq b$.

\begin{lemma}\label{lemmRtoStrings}
There exist a function $S\colon \mathbb R\to \mathbb R$, an open, $S$-invariant $U\subseteq \mathbb R$, and a surjective dynamic morphism $\pimor \colon U \to \timesn{ 2^{\leq \mathbb N} }$.
\end{lemma}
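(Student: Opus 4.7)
The plan is to combine the interior map $\pimor_0\colon (0,1) \to 2^{\leq \mathbb{N}}$ given by Theorem~\ref{theoPInt} with a simple shift dynamics on $\mathbb{R}$, using the product construction from Lemma~\ref{lemmProduct} to lift $\pimor_0$ to a dynamic morphism. Concretely, I would define $S\colon \mathbb{R}\to\mathbb{R}$ by $S(x)=x+1$ (a homeomorphism, hence in particular continuous) and take
\[
U \;=\; \bigcup_{n\in\mathbb{N}} (n, n+1),
\]
which is open and satisfies $S[U]\subseteq U$, so $U$ is $S$-invariant.

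Next I would identify $U$ with the product space $\timesn{(0,1)} = (0,1)\times\mathbb{N}$ via the obvious piecewise homeomorphism $\iota\colon U \to (0,1)\times\mathbb{N}$ given by $\iota(x) = (x-n,\,n)$ for $x\in(n,n+1)$. A direct check shows that $\iota$ is a homeomorphism intertwining $S\upharpoonright U$ with the product shift $\timesn{S}$ on $(0,1)\times\mathbb{N}$. Considering $\pimor_0$ as an interior map into $\timesn{2^{\leq\mathbb{N}}}$ via the identification $2^{\leq\mathbb{N}} \cong 2^{\leq\mathbb{N}}\times\{0\}$, Lemma~\ref{lemmProduct} then yields a dynamic morphism $\timesn{\pimor_0}\colon\timesn{(0,1)} \to \timesn{2^{\leq\mathbb{N}}}$, explicitly $\timesn{\pimor_0}(y,n) = (\pimor_0(y), n)$.

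Composing, I would define $\pimor\colon U \to \timesn{2^{\leq \mathbb{N}}}$ by $\pimor(x) = (\pimor_0(x-n),\,n)$ whenever $x \in (n,n+1)$. Since $\pimor_0$ is a surjective interior map, the restriction of $\pimor$ to each interval $(n,n+1)$ is a surjective interior map onto $2^{\leq\mathbb{N}}\times\{n\}$, and these sets are disjoint clopen pieces of $\timesn{2^{\leq\mathbb{N}}}$; hence $\pimor$ is itself an interior map and is surjective onto $\timesn{2^{\leq\mathbb{N}}}$. The intertwining property follows from the calculation $\pimor(S(x)) = \pimor(x+1) = (\pimor_0(x-n),\,n+1) = \timesn{S}(\pimor(x))$ for $x\in(n,n+1)$, so $\pimor$ is a dynamic morphism.

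There is no serious obstacle here: the heavy lifting has already been done by Theorem~\ref{theoPInt} and Lemma~\ref{lemmProduct}. The only mild subtlety is choosing $U$ and $S$ so that $(U, S\upharpoonright U)$ is isomorphic as a dynamical system to the product $\timesn{(0,1)}$; the translation $x\mapsto x+1$ acting on the gaps between consecutive integers is the most natural choice. Once this identification is in place, the statement reduces to an application of the product construction.
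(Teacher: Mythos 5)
Your proposal is correct and follows essentially the same route as the paper: the paper also takes $S(x)=x+1$, $U=(0,\infty)\setminus\mathbb N$ (which is your $\bigcup_{n}(n,n+1)$), lifts the interior map from Theorem~\ref{theoPInt} to a dynamic morphism $\timesn{(0,1)}\to\timesn{2^{\leq\mathbb N}}$ via Lemma~\ref{lemmProduct}, and transports it to $U$ through the piecewise-translation isomorphism $(x,n)\mapsto x+n$. The only difference is the direction in which you write that isomorphism and the extra explicit verifications, which are harmless.
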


\begin{proof}
Let $S$ denote the map on $\timesn {2^{\leq \mathbb N}} $ as given in Definition \ref{defInfty}.
By Theorem \ref{theoPInt}, there is a surjective interior map $\pimor'\colon (0,1)\to 2^{\leq \mathbb N}$, which can be viewed as an interior map $\pimor''\colon (0,1)\to 2^{\leq \mathbb N}\times \mathbb N$ with range $2^{\leq \mathbb N} \times \{0\} $.
Hence Lemma \ref{lemmProduct} tells us that there is a dynamic morphism $\pimor\colon \timesn{(0,1)} \to \timesn{2^{\leq \mathbb N}}$ with range $\bigcup_{n\in\mathbb N} S^n[ 2^{\leq \mathbb N} \times \{0\} ] = 2^{\leq \mathbb N} \times \mathbb N$.
Define $S\colon \mathbb R \to \mathbb R$ by $S(x) = x+1$, and let $U =(0,\infty)\setminus \mathbb N$ (i.e., the set of positive reals that are not integers). 
Then it is easily seen that $ \iota\colon \timesn{ (0,1) } \to U$ given by $(x,n) \mapsto x+n$ is an isomorphism.
Thus $\pimor\iota^{-1} \colon U \to \timesn{2^{\leq \mathbb N} } $ is also a surjective dynamic morphism.
\end{proof}

\begin{remark}
In fact \cite{KremerStrong} proves Theorem \ref{theoPInt} for any complete metric space without isolated points, and shows how the completeness assumption can be dropped using algebraic semantics.
Using Kremer's result, we could generalize Lemma \ref{lemmRtoStrings} to many more spaces, including the rational numbers and the Cantor space.
We require only the existence of suitable $U$ and $S$, so as to be able to adapt the proof.
\end{remark}

Even though the dynamical systems $\timesn{2^{< \mathbb N} }$ and $\timesn{ 2^{\leq \mathbb N} } $ are not isomorphic (the former is countable and the latter is not), they have essentially the same open sets.
This observation will be key in proving the following lemma.

\begin{lemma}\label{lemmPreserve}
A formula of $ \mathcal L_{\diam\ubox}$ is satisfiable (falsifiable) on $\timesn{2^{< \mathbb N} }$ if and only if it is satisfiable (falsifiable) on $\timesn{ 2^{\leq \mathbb N} } $.
\end{lemma}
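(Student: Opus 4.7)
The plan is to exhibit a frame isomorphism between the open sets of $\timesn{2^{<\mathbb N}}$ and those of $\timesn{2^{\leq\mathbb N}}$ that commutes with the shift, and then transfer valuations along it.

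First, I would define $\Phi(U) := \{a \in \timesn{2^{\leq\mathbb N}} : \exists b \in U \text{ with } b \sqsubseteq a\}$ for each open $U \subseteq \timesn{2^{<\mathbb N}}$, together with its putative inverse $\Psi(V) := V \cap \timesn{2^{<\mathbb N}}$. The verification that $\Psi\Phi$ and $\Phi\Psi$ are the identity is direct from the definitions, using that each open in $2^{\leq\mathbb N}$ is a union of sets of the form ${\uparrow} b$ for finite $b$, while an open in $2^{<\mathbb N}$ is an $\sqsubseteq$-upset of finite strings in each layer.

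The second step is to check that $\Phi$ is a frame isomorphism that is $\timesn{S}$-equivariant. Preservation of arbitrary unions is immediate; preservation of binary intersections uses that any two finite prefixes of a common string are $\sqsubseteq$-comparable, so given $b_1 \in U_1$ and $b_2 \in U_2$ both below a common $a$, the longer of the two lies in $U_1 \cap U_2$. Heyting implication and interior of arbitrary intersections are then preserved by general frame-theoretic considerations. Moreover, because $\timesn{S}$ acts only on the index component, $\Phi$ commutes with $\timesn{S}^{-1}$, and $V$ is $\timesn{S}$-invariant if and only if $\Psi(V)$ is.

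Finally, given a valuation $\val{\cdot}_{<}$ on $\timesn{2^{<\mathbb N}}$, I would define $\val{\cdot}_{\leq}$ on $\timesn{2^{\leq\mathbb N}}$ by $\val{p}_{\leq} := \Phi(\val{p}_{<})$ for each propositional variable $p$, and by induction on $\varphi \in \mathcal L_{\diam\ubox}$ establish $\val{\varphi}_{\leq} = \Phi(\val{\varphi}_{<})$; the reverse transfer via $\Psi$ is analogous. The main technical point I expect is the $\ubox\varphi$ clause, where one must exploit that $\Phi$ restricts to a bijection between $\timesn{S}$-invariant opens on the two sides, so that the greatest such open contained in $\val{\varphi}_{<}$ maps to the corresponding greatest such open on the $\leq$-side. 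Since $\Phi(\varnothing) = \varnothing$ and $\Phi(\timesn{2^{<\mathbb N}}) = \timesn{2^{\leq\mathbb N}}$ (because every string has the empty string as a finite prefix), $\val{\varphi}$ is empty on one system iff it is empty on the other, and similarly for being the whole space, yielding the claimed equivalence of satisfiability and falsifiability.
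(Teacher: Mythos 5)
Your proposal is correct and follows essentially the same route as the paper: the maps $\Phi$ and $\Psi$ are exactly the paper's $f$ (upward closure into $2^{\leq\mathbb N}$) and $g$ (restriction to finite strings), and the argument proceeds identically by showing they form a mutually inverse, order-preserving pair commuting with $S^{-1}$ and preserving invariant opens, then transferring valuations with $\ubox$ as the key case. The only cosmetic difference is that you verify intersection-preservation directly via comparability of prefixes, where the paper gets all Heyting-algebra structure for free from the order isomorphism.
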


\begin{proof}
We follow an algebraic approach.
Let $\timesn{2^{< \mathbb N} } = (2^{< \mathbb N} \times\mathbb N,\mathcal T_<,S_<)$ and $\timesn{2^{\leq  \mathbb N} } = (2^{ \leq \mathbb N} \times\mathbb N ,\mathcal T_\leq ,S_\leq )$.
We show that $\mathcal T_<$ and $\mathcal T_\leq$ are isomorophic as Heyting algebras by providing order-preserving maps $ f\colon \mathcal T_< \to \mathcal T_\leq$ and $ g\colon \mathcal T_\leq \to \mathcal T_<$ so that $g$ is the inverse of $f$.
Moreover, we show that $f,g$ commute with preimages under $S_\cdot$ and preserve $S_\cdot$-invariant open sets, where $\cdot \in \{<,\leq\}$.
From this it will readily follow that if $\val\cdot_<$ is a valuation on $\timesn{2^{< \mathbb N} }$ then $f\circ \val\cdot_<$ is a valuation on $\timesn{2^{\leq \mathbb N} }$, and similarly if $\val\cdot_\leq $ is a valuation on $\timesn{2^{\leq \mathbb N} }$ then $g\circ \val\cdot_\leq $ is a valuation on $\timesn{2^{<\mathbb N} }$, and hence the two structures have the same valid formulas.

So, for $U \in \mathcal T_<$ define $f(U) = \{a\in X_\leq: \exists b\sqsubseteq a \ b\in U\}$, and for $V \in \mathcal T_\leq$ define $g(V)$ to be the set of finite elements of $V$.
We need the following properties of $f,g$.
\medskip

\noindent{\sc $g\circ f$ is the identity on $\mathcal T_<$:} If $a\in U \in \mathcal T_<$ then from $a\sqsubseteq a$ we obtain $a\in f(U)$, and since $a$ is finite, $a\in g\circ f(U)$.
Conversely, if $a\in g\circ f(U)$ then $a$ is finite and $a\in f(U)$, hence $\exists b\sqsubseteq a$ such that $b\in U$, but $U$ is upward-closed, so $a\in U$.
\medskip

\noindent{\sc $f\circ g$ is the identity on $\mathcal T_\leq$:} if $a\in V  \in \mathcal T_\leq$ then since $V$ is open there is finite $b\sqsubseteq a$ such that $b\in V$, so that $b\in g(V)$; but $b\sqsubseteq a$ implies that $a\in f\circ g(V)$.
Conversely, if $a\in f\circ g(V)$ then there is $b\sqsubseteq a$ which is finite and so that $b\in g(V)$, which implies $b\in V$ and, since $V$ is upwards-closed, $a\in V$.\medskip

\noindent{\sc $f$ is order-preserving:} If $U\subseteq U'\in \mathcal T_<$ and $a\in f(U)$ there is finite $b\in U$ such that $b\sqsubseteq a$, but then $b\in U'$ and $a\in f(U')$.
\medskip

\noindent{\sc $g$ is order-preserving:} If $V\subseteq V' \in \mathcal T_\leq$, then clearly every finite element of $V$ is a finite element of $V'$, so $g(V) \subseteq g(V')$.\medskip

\noindent{\sc $f\circ S^{-1}_< = S^{-1}_\leq \circ f $:}
Let $U\in \mathcal  T_<$ and $(a,n) \in 2^{< N} \times \mathbb N$.
If $(a,n) \in f\circ S^{-1}_<[U]$, there is finite $b\sqsubseteq a$ such that $(b,n)\in S^{-1}_<[U]$, so that $S_<(b,n) =(b,n+1) \in U$, witnessing that $S_\leq(a,n) = (a, n+1) \in f(U)$, hence $(a,n)\in S^{-1}_\leq \circ f(U)$.
Conversely, if $(a,n) \in S^{-1}_\leq \circ f (U)$, then $(a, n+1) \in f(U)$, and so there is finite $c\sqsubseteq a$ with 
$(c, n+1) \in U$, so that $(c,n) \in S^{-1}_\leq [U]$ and $(a,n) \in f\circ S^{-1}_\leq [U]$.\medskip

\noindent{\sc $g\circ S^{-1}_\leq = S^{-1}_< \circ g $:} Take $V\in \mathcal T_\leq$ and $ a  \in 2^{\leq N} \times \mathbb N$.
If $ a \in g\circ S^{-1}_\leq [V]$, then $a$ is finite and $S_\leq (a ) \in V$; but $S_\leq (a )$ is also finite so $S_< (a ) = S_\leq (a ) \in g(V)$, hence $a\in S^{-1}_< \circ g(V)$.
Conversely, if $a\in S^{-1}_< \circ g(V)$, then $S_<(a) \in g(V)$, so that $S_<(a) \in V$ and $a\in S^{-1}_<[V]$, which since $a$ is finite implies $a\in g\circ S^{-1}_<[V]$.
\medskip

\noindent{\sc $f$ preserves invariant sets:}
If $U \in \mathcal T_<$ is $S_<$-invariant and $ (a,n)  \in f(U)$ then there is finite $b\sqsubseteq a$ so that $(b,n)\in U$, hence $(b,n+1) \in U$, witnessing that $S_\leq (a,n) = (a,n+1) \in f(U)$, and $f(U)$ is $S_\leq$-invariant.
\medskip

\noindent{\sc $g$ preserves invariant sets:} If $V \in\mathcal T_\leq$ is $S_\leq$-invariant and $a \in g(U)$, then $a$ is finite and $S_<(a) = S_\leq (a) \in V$, which since $a$ is finite implies that $S_\leq (a)$ is finite so that $S_\leq (a) \in g(V)$.
\medskip

If $\val\cdot_<$ is a valuation on $\timesn{2^{<\mathbb N}}$, then the above considerations show that $f\circ\val\cdot_< $ is a valuation on $\timesn{2^{\leq \mathbb N}}$, and conversely, if $\val\cdot_\leq $ is a valuation on $\timesn{2^{\leq \mathbb N}}$, then $g\circ\val\cdot_\leq  $ is a valuation on $\timesn{2^{<\mathbb N}}$.
For example, we show that $f\circ\val\cdot_< $ satisfies the clause for $\ubox$.
We have that $\val{\ubox\varphi}_< $ is $S_<$-invariant, so that $ f ( \val{\ubox\varphi}_< ) $ is $S_\leq$-invariant; that $  \val{\ubox\varphi}_< \subseteq \val{ \varphi}_<$, so that $ f ( \val{\ubox\varphi}_< ) \subseteq f ( \val{ \varphi}_< ) $; and finally, if $U\in \mathcal T_\leq$ is $S_\leq$-invariant and $U\subseteq  f ( \val{ \varphi}_<)$, then $g(U)$ is $S_<$-invariant and $g(U) \subseteq \val{ \varphi}_<$, so that $g(U) \subseteq  \val{ \ubox \varphi}_<$, hence $U = f\circ g(U) \subseteq f ( \val{ \ubox \varphi}_<)$.
We conclude that $f ( \val{\ubox\varphi}_<) $ is the greatest $S_\leq$-invariant open set contained in $f ( \val\varphi_<)$, as needed.

Finally, we observe that if $ \val\varphi_< \neq 2^{<\mathbb N }\times \mathbb N$, then also $ f ( \val\varphi_< ) \neq 2^{\leq\mathbb N }\times \mathbb N$, since $f$ is injective and $f(2^{<\mathbb N }\times \mathbb N) = 2^{\leq \mathbb N }\times \mathbb N$, so any formula falsifiable on $\timesn{ 2^{<\mathbb N }}$ is falsifiable on $\timesn{ 2^{<\mathbb N }}$; and, similarly, if $ \val\varphi_\leq  \neq 2^{\leq \mathbb N }\times \mathbb N$ then also $ g ( \val\varphi_\leq ) \neq 2^{ < \mathbb N }\times \mathbb N$.
We conclude that $\varphi $ is valid on $\timesn{ 2^{<\mathbb N }}$ if and only if it is valid on $\timesn{ 2^{\leq\mathbb N }}$.
\end{proof}

Putting our results together, we obtain the following.

\begin{theorem}\label{theoRcomplete}
Every formula of $\mathcal L_{\diam\ubox}$ valid over $\mathbb R$ equipped with an interior map is valid over the class of persistent posets; that is, ${\sf ITL}^{{\sf o}\cap \mathbb R}_{\diam\ubox} \subseteq \itlp$.
\end{theorem}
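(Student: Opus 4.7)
The plan is to prove the contrapositive: if $\varphi \in \mathcal L_{\diam\ubox}$ is falsifiable on some persistent poset, then $\varphi$ is falsifiable on $\mathbb R$ equipped with some interior map. Every ingredient has been assembled in Section~\ref{secPers}, so the argument amounts to chaining them together via a sequence of dynamic morphisms and pullbacks of valuations.

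First I would start with a persistent poset model $\mathcal M_0$ falsifying $\varphi$, and apply Lemma~\ref{lemLowen} to obtain a countable persistent poset model $\mathcal M_1 = (W,\peq,S_W,\val\cdot_W)$ together with a point $w^\ast \in W$ such that $w^\ast \notin \val\varphi_W$. Next, by Corollary~\ref{corStringstoW}, there is a dynamic morphism $\pimor_1 \colon \timesn{2^{<\mathbb N}} \to W$ with $\pimor_1(\epsilon) = w^\ast$. Applying Proposition~\ref{propPMorph} to $\pimor_1$ (with $U = \timesn{2^{<\mathbb N}}$, trivially open and invariant) pulls back the valuation so that $\val\varphi_< = \pimor_1^{-1}\val\varphi_W$; hence $(\epsilon,0) \notin \val\varphi_<$, and $\varphi$ is falsifiable on $\timesn{2^{<\mathbb N}}$. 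Then I would invoke Lemma~\ref{lemmPreserve} to transfer this falsification to $\timesn{2^{\leq \mathbb N}}$.

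For the last step, I would apply Lemma~\ref{lemmRtoStrings} to obtain a map $S\colon \mathbb R \to \mathbb R$ (namely $S(x)=x+1$, which is a homeomorphism and hence an interior map), an open, $S$-invariant subset $U \subseteq \mathbb R$, and a surjective dynamic morphism $\pimor_2\colon U \to \timesn{2^{\leq \mathbb N}}$. A final application of Proposition~\ref{propPMorph} gives a valuation $\val\cdot_\mathbb R$ on $(\mathbb R,S)$ with $\val\varphi_\mathbb R \cap U = \pimor_2^{-1}\val\varphi_\leq$; since $\varphi$ is falsified somewhere on $\timesn{2^{\leq\mathbb N}}$ and $\pimor_2$ is surjective, some preimage in $U \subseteq \mathbb R$ falsifies $\varphi$, so $\varphi \notin {\sf ITL}^{{\sf o}\cap \mathbb R}_{\diam\ubox}$.

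The conceptual obstacle is not in assembling this chain but in why $\timesn{2^{\leq \mathbb N}}$ must appear as an intermediate step: there is no surjective interior map from an open subset of $\mathbb R$ onto $\timesn{2^{<\mathbb N}}$, which is exactly what Kremer's topological argument behind Lemma~\ref{lemmRtoStrings} (via Theorem~\ref{theoPInt}) circumvents by passing to the completion $2^{\leq \mathbb N}$. Lemma~\ref{lemmPreserve} is what makes this detour harmless, since its algebraic Heyting-algebra isomorphism preserves both semantic clauses for $\ubox$ and for $\diam$. One minor technical point to keep in mind is that $\pimor_2$ is defined only on $U$, not all of $\mathbb R$, which is precisely the level of generality in which Proposition~\ref{propPMorph} was stated.
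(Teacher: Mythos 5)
Your proof is correct and follows essentially the same route as the paper: falsify on a persistent poset, pass to $\timesn{2^{<\mathbb N}}$ via Corollary~\ref{corStringstoW}, transfer to $\timesn{2^{\leq\mathbb N}}$ via Lemma~\ref{lemmPreserve}, and land on $\mathbb R$ via Lemma~\ref{lemmRtoStrings}, pulling back valuations with Proposition~\ref{propPMorph} at each stage. You even make explicit the reduction to a countable model via Lemma~\ref{lemLowen}, which the paper's terse proof leaves implicit but which is needed before invoking Corollary~\ref{corStringstoW}.
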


\begin{proof}
If $\varphi$ is falsifiable on some persitent poset, then by Corollary \ref{corStringstoW}, it is falsifiable on $2^{<\mathbb N}$, hence by Lemma \ref{lemmPreserve}, on $2^{\leq \mathbb N}$.
From this and Lemma \ref{lemmRtoStrings}, it follows that $\varphi$ is falsifiable on $\mathbb R$ with a homeomorphism.
\end{proof}

\begin{remark}
Compare the situation to that of $\sf DTL$, where \cite{FernandezMetric} showed that there is a formula that is sound for the class of complete metric spaces with an open map, but not sound for the class of persistent posets.
The latter is due to the Baire category theorem; in view of Theorem \ref{theoRcomplete}, the Baire category theorem does not seem to affect the intuitionistic temporal logic of $\mathbb R$.
\end{remark}

Figure \ref{fig:sem} summarizes the inclusions between the semantically-defined logics we have defined. As we will see in the remainder of this paper, these are the only inclusions that hold between said logics.

\begin{figure}[H]\centering
	\begin{tikzpicture}[thick,->,auto,font=\small,node distance=1.5cm]
	\node[] (logbasic) {${\sf ITL}^{\sf c}_{\diam\ubox\khence}$};
	\node[] (logrnax) 	[above left of=logbasic] {${\sf ITL}^{ \mathbb R^2}_{\diam\ubox\khence}$};	
	\node[] (logrealax) [above right of=logrnax] {${\sf ITL}^\mathbb R_{\diam\ubox\khence}$};	
	\node[] (logexp) [above left of=logrnax] {${\sf ITL}^{\sf e}_{\diam\ubox\khence}$};
	\node[] (logrnfs)  [above right of=logrealax] {${\sf ITL}^{{\sf o}\cap\mathbb R}_{\diam\ubox\khence}$};		
	\node[] (loghomeo)  [below right of=logrnfs] {${\sf ITL}^{\sf o}_{\diam\ubox\khence}$};	
	\node[] (logpers) [above left of=logrnfs] {${\sf ITL}^{\sf p}_{\diam\ubox\khence}$};
	
	\path[] 
	(logbasic) edge[] node[pos=0.5,right,font=\scriptsize]{}(loghomeo)
	(logbasic) edge[] node[pos=0.5,left,font=\scriptsize]{}(logrnax)	
	(loghomeo) edge[] node[pos=0.5,right,font=\scriptsize]{}(logrnfs)		
	(logexp) edge[] node[pos=0.5,right,font=\scriptsize]{}(logpers)	
	(logrealax) edge[] node[pos=0.5,right,font=\scriptsize]{}(logrnfs)		
	(logrnax) edge[] node[right,right,pos=0.5,font=\scriptsize]{}(logexp)
	(logrnax) edge[] node[right,pos=0.5,font=\scriptsize]{}(logrealax)		
	(logrnfs) edge[] node[right,pos=0.5,font=\scriptsize]{}(logpers)	
	;								
	\end{tikzpicture}
	\caption{Inclusions between the semantically defined  logics; arrows point from the smaller logic to the larger one.
	See Table \ref{tableClasses} for the definitions of the relevant classes of dynamical systems.}
	\label{fig:sem}
\end{figure}
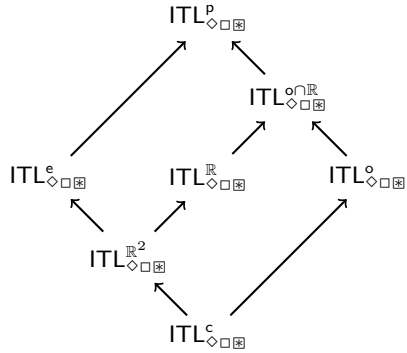

\section{Independence}\label{SecInd}

In this section we will use our soundness results to show that many of the logics we have considered are pairwise distinct.
We begin by showing that the weak logics (based on $\khence$) are in fact weaker than their strong counterparts.
Indeed, certain key $\sf LTL$ principles are not valid for logics with $\khence$.
Recall that the semantics for $\khence$ and $\ubox$ are given in Definition \ref{DefSem}.

\begin{proposition}\label{propWeakSem}
The following are not valid over $\mathbb R$.
\begin{multicols}3
\begin{enumerate}

\item $ {\khence} p \to \tnext {\khence} p$,\\

\item $ {\khence} \tnext p \to \tnext {\khence} p$, and\\

\item ${\khence} p \to {\khence}{\khence} p$.

\end{enumerate}
\end{multicols}
\end{proposition}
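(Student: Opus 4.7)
The plan is to refute all three formulas simultaneously with a single dynamical model on $\mathbb{R}$. The key structural idea is that by Proposition \ref{propBoxHomeo}, $\khence$ agrees with $\ubox$ on open dynamical systems, and the three formulas in question are derivable when $\khence$ is replaced by $\ubox$, so any counterexample must use an $S$ that is continuous but not open.

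I would take $S\colon\mathbb{R}\to\mathbb{R}$ defined by $S(x)=0$ if $x\leq 0$ and $S(x)=2x$ if $x\geq 0$. The two branches agree at $0$, so $S$ is continuous; however $S$ collapses $(-\infty,0]$ to a single point, so $S$ is not open at $0$. Set $\val{p} = (-3,1)$.

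The concrete computations are then all routine. First, for $y\leq 0$ the forward orbit is $\{y,0,0,\ldots\}$, which sits inside $(-3,1)$ exactly when $y\in(-3,0]$; for $y>0$ the orbit $2^n y$ eventually leaves $(-3,1)$. Thus $\bigcap_{n\geq 0}S^{-n}\val{p}=(-3,0]$, and taking the interior gives $\val{\khence p}=(-3,0)$. Next, $\val{\tnext\khence p}=S^{-1}(-3,0)$: for $y\leq 0$, $S(y)=0\notin(-3,0)$; for $y>0$, $S(y)=2y>0$ likewise misses $(-3,0)$. Hence $\val{\tnext\khence p}=\varnothing$, and already $x=-1$ refutes~(1). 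For~(2), $\val{\tnext p}=S^{-1}(-3,1)=(-\infty,1/2)$, and the same argument as for $\val{\khence p}$ yields $\val{\khence\tnext p}=(-\infty,0)$; since $-1$ lies in $(-\infty,0)$ but not in $\val{\tnext\khence p}=\varnothing$, this refutes~(2). For~(3), note that $\bigcap_{n\geq 0}S^{-n}\val{\khence p}\subseteq S^{-1}\val{\khence p}=\val{\tnext\khence p}=\varnothing$, so $\val{\khence\khence p}=\varnothing$, and once again $-1\in\val{\khence p}\setminus\val{\khence\khence p}$.

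There is no real obstacle here beyond guessing the right model; the conceptual content is that $-1$ and a neighborhood of it escape $\val{p}$ at time $0$ only through the open interval $(-3,0)\subseteq\val{p}$, and are then crushed to the single point $0$ forever, so the orbit-neighborhood condition for $\khence p$ is met at $-1$. At $S(-1)=0$, however, every neighborhood of $0$ picks up positive reals whose orbits $2^n y$ grow without bound and hence leave any bounded open set, so the condition fails. The non-openness of $S$ at $0$ is precisely what permits this asymmetry between $x=-1$ and $S(x)=0$, matching the role of Proposition \ref{propBoxHomeo}.
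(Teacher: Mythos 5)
Your proof is correct and follows essentially the same route as the paper: the paper uses the identical map $S$ (defined by $S(x)=0$ for $x\leq 0$ and $S(x)=2x$ for $x>0$) with $\val p = (-\infty,1)$ in place of your $(-3,1)$, computes $\val{\khence p}=\val{\khence\tnext p}=(-\infty,0)$ and $\val{\tnext\khence p}=\val{\khence\khence p}=\varnothing$, and refutes all three formulas in that single model. The choice of a bounded versus unbounded interval for $\val p$ is an inessential variation.
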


\begin{proof}
Let $\mathcal M = (\mathbb R,S,\val\cdot)$, where $S$ is defined as follows:
\item \begin{eqnarray*}
S(x) & = & \begin{cases} 0\textrm{, if }x \leq 0\\
2x\textrm{ if }x > 0 \end{cases}
\end{eqnarray*}
and $\val{p} = (-\infty, 1)$. Note that $S^{-n}(\val{p}) = (-\infty, 1/2^n)$, for $n \geq 0$. Thus,
\[\textstyle\bigcap\limits_{n\geq 0}S^{-n} \val{p} = \textstyle\bigcap\limits_{n\geq 1}S^{-n} \val{p}  = (-\infty, 0].\]
So, $\val{ {\khence}  p}  = \val{{\khence} \tnext p}  =  (-\infty, 0)$.
Hence, $\val{\tnext {\khence}  p} = \val{{\khence} {\khence}  p}  =  \varnothing $, and
\[\val{{\khence}  p \to \tnext {\khence} p} = \val{{\khence} \tnext p \to \tnext{\khence} p} = \val{{\khence} p \to {\khence} {\khence}  p} = (0, \infty) \neq \mathbb R.\]
\end{proof}

Next, we note that the formulas $\rm CD$ and $\rm BI$ separate Kripke semantics from the general topological semantics.

\begin{proposition}\label{PropInd}
The formulas $\AxCons pq$ and $\AxBInd pq$ are not valid over the class of invertible dynamical systems based on~$\mathbb R$, hence $\logrnfs \not \vdash \AxCons pq$ and $\logrnfs \not \vdash \AxBInd pq$.
\end{proposition}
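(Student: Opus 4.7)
The plan is to exhibit a single counterexample model on the real line that refutes both $\AxCons pq$ and $\AxBInd pq$ at the same point. Take $S\colon\mathbb R\to\mathbb R$ defined by $S(x)=2x$; this is a homeomorphism, so $(\mathbb R,S)$ is an invertible dynamical system. Define a valuation by $\val p=(-\infty,2)$ and $\val q=(0,\infty)$, both of which are open subsets of $\mathbb R$. The point $x=0$, which is fixed by $S$, will witness failure of both schemas.

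The heart of the argument is a characterization of the $S$-invariant open sets, analogous to the one used in Example~\ref{ExBoxOnR}. First, any $S$-invariant open set $U\subseteq\mathbb R$ containing $0$ must equal $\mathbb R$: $U$ contains some interval $(-\varepsilon,\varepsilon)$, and $S$-invariance forces $(-2^n\varepsilon,2^n\varepsilon)\subseteq U$ for every $n\in\mathbb N$, exhausting $\mathbb R$. Second, any $S$-invariant open $U\subseteq(-\infty,2)$ cannot contain a positive point $x>0$, because then $2^n x\to\infty$ would eventually leave $(-\infty,2)$. Since $(-\infty,0)$ is itself open and $S$-invariant, it follows that $\val{\ubox p}=(-\infty,0)$. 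Meanwhile $\val q=(0,\infty)$ is already $S$-invariant, so $\val{\diam q}=\bigcup_{n\geq 0}S^{-n}(0,\infty)=(0,\infty)$ and $\val{\tnext q\to q}=\mathbb R$ since $S^{-1}\val q=\val q$.

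With these computations in hand, both failures follow immediately. The set $\val{p\vee q}=(-\infty,2)\cup(0,\infty)=\mathbb R$ is $S$-invariant and open, so $0\in\val{\ubox(p\vee q)}$, and likewise $0\in\val{\ubox(\tnext q\to q)}$. However,
\[\val{\ubox p\vee\diam q}=\val{\ubox p\vee q}=(-\infty,0)\cup(0,\infty)=\mathbb R\setminus\{0\},\]
which omits $0$. So both $\AxCons pq$ and $\AxBInd pq$ fail at $0$ on this model. Since $\mathbb R$ is locally connected and $S$ is a homeomorphism, the model satisfies every axiom of $\logrnfs$ by the soundness results of Sections~\ref{SecSound} and~\ref{secEuclid}, so $\logrnfs\not\vdash\AxCons pq$ and $\logrnfs\not\vdash\AxBInd pq$.

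The main obstacle is the characterization of the $S$-invariant open sets near the fixed point $0$, which is what makes this counterexample possible and simultaneously explains why the Kripke argument of Theorem~\ref{ThmSoundCD} (which relies on Proposition~\ref{propBoxKripke}) breaks down in the topological setting: the pointwise condition $S^n(0)\in\val p$ for all $n$ is trivially satisfied, but no $S$-invariant open neighbourhood of $0$ lies in $\val p$. Once that characterization is in place, the remaining verifications are direct computations with the open-set semantics and Lemma~\ref{LemImpCrit}.
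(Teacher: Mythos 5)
Your proposal is correct and follows essentially the same route as the paper: the same map $S(x)=2x$ on $\mathbb R$, essentially the same valuation (the paper uses $\val p=(-\infty,1)$ rather than $(-\infty,2)$, which is immaterial), and the same witness point $0$, with the key fact that no $S$-invariant open neighbourhood of $0$ is properly contained in $\val p$ (which the paper delegates to Example~\ref{ExBoxOnR} and you re-derive inline). The only cosmetic difference is that you compute $\val{\ubox p}=(-\infty,0)$ exactly, which is slightly more than the paper needs.
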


\begin{proof}
Define a model $\mathcal M = (\mathbb R,S,\val\cdot)$ on $\mathbb R$, with $S(x) =  2x$, $\lb p \rb =  (-\infty,1)$ and $\lb q \rb = (0,\infty)$.
Clearly $\val{p\vee q} = \mathbb R$, so that $\val{\ubox(p \vee q)} = \mathbb R$ as well.

Let us see that $ \mathcal M, 0 \not \models \AxCons pq $.
Since $\mathcal M, 0 \models {\ubox(p \vee q)}$, it suffices to show that $\mathcal M, 0 \not \models {\ubox p \vee \diam q}$. It is clear that $\mathcal M, 0 \not \models {\diam q}$ simply because $S^n(0) = 0 \not \in \val q$ for all $n$. Meanwhile, by Example \ref{ExBoxOnR}, $\mathcal M, 0 \models {\ubox p}$ if and only if $\val p = \mathbb R$, which is not the case. We conclude that $
\mathcal M, 0 \not \models \AxCons pq.
$

To see that $
\mathcal M, 0 \not \models \AxBInd pq
$
we proceed similarly, where the only new ingredient is the observation that $\mathcal M, 0  \models \ubox (\tnext q\rightarrow q)$. But this follows easily from the fact that if $\mathcal M,x \models \tnext q$, then $x > 0$ so that $\mathcal M,x \models q$, hence $\val{\tnext q \to q} = \mathbb R$.
\end{proof}

\begin{remark}
Proposition \ref{PropInd} also holds for $t_{\khence} ( \AxCons pq) $ and $t_{\khence} ( \AxBInd pq )$.
However, by Proposition \ref{propBoxHomeo}, these are equivalent to their counterparts with $\ubox$ over the class of invertible systems, so we do not need to mention them separately. A similar comment holds when working over the class of dynamic posets.
\end{remark}

The Fischer Servi axioms are also not valid in general, as already shown in \cite{BalbianiToCL}.
From this and the soundness of $\loghomeo$ (Theorem \ref{ThmSoundFS}), we immediately obtain that they are not derivable in $\logbasic$.

\begin{figure}[H]

\centering

\begin{tikzpicture}[->,auto,minimum size=1mm,node distance=1.5cm]

\node[state,minimum size=4mm,text width=1pt,line width = 1] (n1) {};
\node[state,minimum size=4mm,text width=1pt] (n2) [right of=n1] {};
\node[state,minimum size=4mm,text width=1pt] (n3) [above of=n2] {$p$};
\path[-> ] (n1) edge node{$S$}(n2)
		(n2) edge[loop right] node{$S$} (n2)
		(n3) edge[loop right] node{$S$} (n3)
	  (n2) edge node[right] {$\peq$} (n3);

\end{tikzpicture}

%
%
%
%
%
%
%
%
%
%
%
%
%
%
%
%
%
%
%
%
%
%
%
%
%

\caption{An expanding poset model falsifying both Fischer Servi axioms. Propositional variables that are true on a point are displayed; only one point satisfies $p$ and no point satisfies $q$. It can readily be checked that ${\rm FS}_\mtnext(p,q)$ and ${\rm FS}_\diam(p,q)$ fail on the highlighted point on the left. Note that $S$ is continuous but not open, as can easily be seen by comparing to Figure \ref{FigCO}.}\label{FigIMLA}
\end{figure}
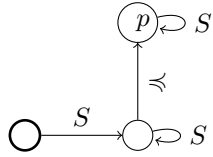

\begin{proposition}\label{prop:nvalid:iltl}\

\begin{enumerate}

\item ${\rm FS}_\diam (p,q)$ is not valid over the class of expanding posets, hence $\logexp \not \vdash {\rm FS}_\diam (p,q)$ and $\logexp \not \vdash {\rm FS}_\mtnext (p,q)$.

\item ${\rm FS}_\diam (p,q)$ and $t_{\khence}( {\rm FS}_\diam (p,q) )$ are not valid over $\mathbb R$, hence $\logrealax \not \vdash {\rm FS}_\diam (p,q)$ and $\logrealax \not \vdash {\rm FS}_\mtnext (p,q)$.

\end{enumerate}
\end{proposition}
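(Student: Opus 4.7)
By Proposition \ref{propFStoFS}, $\logbasic$ derives ${\rm FS}_\mtnext(\diam p, \ubox q) \to {\rm FS}_\diam(p,q)$, and both $\logexp$ and $\logrealax$ extend $\logbasic$. Hence, by substitution, in either logic ${\rm FS}_\mtnext$ (as a schema) implies ${\rm FS}_\diam(p,q)$, so to rule out both schemas it suffices to rule out ${\rm FS}_\diam(p,q)$. Combined with the relevant soundness theorem, the whole proposition reduces to producing, for each part, a single model falsifying ${\rm FS}_\diam(p,q)$ (and for part 2, also its $t_{\khence}$-translate).

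\textbf{Part 1.} Here I would simply verify what Figure \ref{FigIMLA} already asserts. Name the three worlds $w_1, w_2, w_3$ with $S(w_1) = w_2$, $S(w_2) = w_2$, $S(w_3) = w_3$ and $w_2 \peq w_3$, and set $\val p = \{w_3\}$, $\val q = \varnothing$. A direct calculation in the up-set topology gives $\val{\diam p \to \ubox q} = \{w_1\}$ (since ${\uparrow} w_2 = \{w_2, w_3\} \not\subseteq \{w_1, w_2\}$), while $\val{p \to q} = \{w_1\}$ fails to be $S$-invariant because $S(w_1) = w_2$, so $\val{\ubox(p \to q)} = \varnothing$. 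Thus ${\rm FS}_\diam(p,q)$ is falsified at $w_1$ by Lemma \ref{LemImpCrit}, and Theorem \ref{ThmSoundCD} (soundness of $\logexp$ over expanding posets) closes this part.

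\textbf{Part 2.} For $\mathbb R$ I would pick a continuous but non-open $S$ that collapses a half-line to a boundary point; $S(x) := \max(x,0)$ will do. Let $\val p = (0,\infty)$ and $\val q = \varnothing$. Then $\val{\diam p} = (0,\infty)$, and hence $\val{\diam p \to \ubox q} = \val{p \to q} = (-\infty, 0)$. The essential point is that $\val{\ubox(p \to q)} = \varnothing$: any non-empty open $U \subseteq (-\infty, 0)$ contains some negative $x$, so $S[U]$ contains $0 \notin U$, so no non-empty open $S$-invariant subset exists. Hence ${\rm FS}_\diam(p,q)$ is falsified at every negative real. The same model handles $t_{\khence}({\rm FS}_\diam(p,q))$, because $S^{-1}((-\infty,0)) = \varnothing$ forces $\val{{\khence}(p \to q)} = \varnothing$ while $\val{\diam p \to {\khence} q} = (-\infty, 0)$. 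Soundness of $\logrealax$ for $\mathbb R$ (established in Section \ref{secEuclid}) together with the overall strategy then yields both non-derivability claims.

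\textbf{Main obstacle.} The only delicate step is engineering the gap $\val{\diam p \to \ubox q} \not\subseteq \val{\ubox(p \to q)}$. This requires $S$ to be non-open in a pointed way: it must push some non-empty open piece of $\val{p \to q}$ out of $\val{p \to q}$, so that the greatest-fixed-point definition of $\ubox$ collapses to $\varnothing$. The two models above are minimal instances of this phenomenon, one relational (at the edge $w_2 \peq w_3$) and one topological (at $x = 0$).
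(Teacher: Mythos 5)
Your proof is correct and takes essentially the same approach as the paper: Part 1 uses exactly the three-point model of Figure \ref{FigIMLA}, and Part 2 uses a map collapsing a half-line to $0$ (the paper takes the constant map $S(x)=0$ rather than $\max(x,0)$, but the computation is the same). The only genuine variation is that you dispatch the ${\rm FS}_\mtnext$ claims via Proposition \ref{propFStoFS} and closure under substitution, where the paper instead relies on the countermodels falsifying ${\rm FS}_\mtnext(p,q)$ directly; both routes are sound, and yours has the small advantage of making the reduction explicit.
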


\begin{proof}
For the first claim, let us consider the model $\cl M =\left(W,\peq,S,V\right)$ defined by
\begin{enumerate*}[label=\arabic*)]
	\item $W = \lbrace w,v,u\rbrace$;
	\item $S (w) = v$, $S(v) = v$ and $S (u) = u$;
	\item $v \peq u$;
	\item $\val p = \lbrace u \rbrace$, and
	\item $\val q = \varnothing$
\end{enumerate*}	
\noindent (see Figure \ref{FigIMLA}). Clearly, $\cl M, u \not \models p \to q$, so $\cl M, v \not \models p \to q$. By definition, $\cl M, w \not \models \ubox \left(p \to q\right)$; however, $\cl M, w  \models \diam p \to \ubox q$, since the negation of each antecedent holds, so $\cl M, w  \not \models \left( \diam p \rightarrow \ubox q \right) \rightarrow \ubox \left(p \rightarrow q\right)$.

For the second we let $(\mathbb R,S,\val\cdot)$ be a model based on $\mathbb R$ with $S\colon \mathbb R \to \mathbb R$ given by $S(x) = 0$ for all $x$, $\val p = (0,\infty)$, and $\val q = \varnothing$.
Then we have that $\val {\diam p} = (0,\infty)$, so that $- 1 \in \val{ \neg \diam p}$ and hence $ -1 \in \val{  \diam p \to \ubox q}$.
However, if $U$ is an $S$-invariant neighbourhood of $-1$ then $0 = S(-1) \in U$, but $0\not \in \val { p\to q } = (-\infty,0)$, hence $-1\not \in \val{\ubox(p\to q)}$.
It follows that $-1\not \in \val{{\rm FS}_\diam (p,q)}$.
Similar reasoning shows that $-1\not \in \val{ t_{\khence} (  {\rm FS}_\diam (p,q)) }$.
\end{proof}

\begin{remark}
As mentioned previously, \cite{Yuse2006} present a Hilbert-calculus which yields a sub-logic of $\logbasicb$. They also present a Gentzen-style calculus and conjecture that their two calculi prove the same set of formulas.
However, \cite{KojimaNext} show that the formula ${\rm FS}_\mtnext(p,q)$ is derivable in this Gentzen calculus, while Proposition \ref{prop:nvalid:iltl} shows that it is not derivable in $\logbasicb$.
Hence, the two calculi are not equivalent.
\end{remark}

Now we show that our axioms for Euclidean spaces are not valid in general.
In particular, $\rm CEM$ is valid for $\mathbb R$, but it is not valid for higher-dimensional spaces.
In view of Theorem \ref{theoR2vsITLe}, it suffices to show that it is not valid over the class of expanding posets.

\begin{lemma}\label{lemmCemNotRn}
The formula ${\rm CEM} $ is not valid on the class of expanding posets, hence $\logexp \not \vdash {\rm CEM}$.
\end{lemma}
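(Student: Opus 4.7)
The plan is to invoke soundness of $\logexp$ for expanding posets (Theorem~\ref{ThmSoundCD}) and exhibit a finite dynamic poset model refuting an instance ${\rm CEM}(p,q)$ at some point $w$. To see what such a model must look like, unpack what has to fail at $w$. The antecedent $\neg \tnext p \wedge \tnext \neg\neg p$ should hold: the first conjunct forces $S(u) \notin \val p$ for every $u \seq w$, while the second, unfolding the relational semantics of $\tnext$ and $\neg\neg$, forces every $x' \seq S(w)$ to have some $\peq$-successor in $\val p$. By continuity, $\{S(u) : u \seq w\} \subseteq \mathord\uparrow S(w)$, so the needed $p$-witnesses above $S(w)$ must lie strictly outside the image of $\mathord\uparrow w$ under $S$. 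Meanwhile $\tnext q \vee \neg \tnext q$ should fail, which requires $S(w) \notin \val q$ together with some $u \seq w$ for which $S(u) \in \val q$.

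I would realise these constraints with a five-point poset $W = \{w, v, x, y, z\}$, whose only non-trivial $\peq$-relations are $w\peq v$, $x\peq y$ and $y\peq z$ (hence also $x\peq z$). Define $S$ by $S(w)=x$, $S(v)=y$, and $S(a)=a$ for $a\in\{x,y,z\}$; monotonicity of $S$ reduces to checking that $w\peq v$ implies $S(w) = x \peq y = S(v)$, which holds. Set $\val p = \{z\}$ and $\val q = \{y,z\}$, both of which are upward closed in $\peq$.

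The verifications are then routine. Since $\mathord\uparrow w = \{w,v\}$ and $S(w)=x \notin \val p$, $S(v)=y \notin \val p$, we have $w \models \neg \tnext p$. For $w \models \tnext \neg\neg p$, note that $\mathord\uparrow x = \{x,y,z\}$ and $z \in \val p$ lies above each of these three points, so every element of $\mathord\uparrow S(w)$ has a $p$-successor, i.e.~$S(w) \models \neg\neg p$. Finally $S(w) = x \notin \val q$ witnesses $w \not\models \tnext q$, while $S(v) = y \in \val q$ witnesses $v \models \tnext q$ and hence $w \not\models \neg \tnext q$. Thus $w \not\models {\rm CEM}(p,q)$, ${\rm CEM}$ fails on an expanding poset, and by Theorem~\ref{ThmSoundCD}, $\logexp \not\vdash {\rm CEM}$.

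The only genuine obstacle is the apparent tension between the two conjuncts of the antecedent: continuity forces $S[\mathord\uparrow w] \subseteq \mathord\uparrow S(w)$, yet $\neg \tnext p$ says no point of $S[\mathord\uparrow w]$ satisfies $p$ while $\tnext \neg\neg p$ demands $p$-witnesses densely above $S(w)$. This is precisely why $z$ is placed strictly above the image $\{x,y\} = S[\mathord\uparrow w]$, supplying the needed $p$-witness without conflicting with $\neg \tnext p$; everything else is bookkeeping.
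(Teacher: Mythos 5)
Your proof is correct and is essentially the paper's own argument: the five-point model you construct is exactly the paper's counterexample (your $w,v,x,y,z$ are its $w_0,w_1,v_0,v_1,v_2$, with the same order, map, and valuation), and the conclusion via soundness for expanding posets is the same. You simply spell out the verification that the paper leaves as "not hard to check."
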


\proof
Consider the model $\mathcal M =(W,{\peq},S,\val\cdot)$, where $W=\{w_0,w_1,v_0,v_1,v_2\}$, $\val p =\{v_2\}$ and $\val q = \{v_1,v_2\}$, and for $x_i,y_j\in W$, $x_i\peq y_j$ if and only if $x=y$ and $i\leq j$, and $S(x_i)=v_i$ (see Figure \ref{figCEM}).
Then, it is not hard to check that $ {\rm CEM}  (p,q) = ( \neg \tnext p \wedge \tnext \neg \neg p ) \rightarrow ( \tnext q \vee \neg \tnext q )$ fails at $w_0$.
\endproof

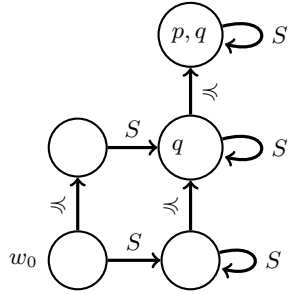
\begin{figure}[H]\centering
		\begin{tikzpicture}[thick,->,auto,font=\small,node distance=1.5cm]
		\node[state,minimum size=3.5mm,text width=5mm, label=west:${w_0}$] (x) {	};
		\node[state,minimum size=3.5mm,text width=5mm] (y)[right of =x] {};
		\node[state,minimum size=3.5mm,text width=5mm] (z)[above of = x] {};
		\node[state,minimum size=3.5mm,text width=5mm] (t)[above of =y] {$q$};
		\node[state,minimum size=3.5mm,text width=5mm] (s)[above of =t] {$p,q$};

		\path[->,very thick] (x) edge[] node{$\peq$}(z)
							 (y) edge[] node{$\peq$}(t)	
							 (x) edge node{$S$}(y)
							 (z) edge node{$S$}(t)
					    	 (t) edge[] node[right] {$\peq$} (s)
					    	 (t) edge[loop right] node[] {$S$} (t)
					    	 (s) edge[loop right] node[] {$S$} (s)
					    	 (y) edge[loop right] node[] {$S$} (y)
					    	 ;	
					    	 	
		\end{tikzpicture}		
		\caption{Model falsifying ${\rm CEM}  (p,q)$ at ${ w_0}$}\label{figCEM}
\end{figure}

Similarly, $\AxConsm p $, which is valid on all Euclidean spaces, is not valid on all dynamical systems, even those based on $\mathbb Q$.

\begin{lemma}
The formula $\AxConsm p $ is not valid on the class of invertible systems based on $\mathbb Q$, hence $\loghomeo\not\vdash \AxConsm p$.
\end{lemma}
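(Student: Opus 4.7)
The approach will be to construct an explicit invertible dynamical system on $\mathbb Q$ that falsifies $\AxConsm p$, and then invoke the soundness of $\loghomeo$ on the class of open dynamical systems (Theorem~\ref{ThmSoundFS}) to conclude underivability. I take $S\colon \mathbb Q\to\mathbb Q$ to be the shift $S(x)=x+1$, which is a homeomorphism of $\mathbb Q$ and hence continuous and open. The key idea is to exploit the total disconnectedness of $\mathbb Q$ to produce a clopen valuation of $p$ that (i) misses every point of the forward orbit of $0$, yet (ii) clusters so tightly against this orbit that no $S$-invariant open tube around the orbit can avoid it.

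Concretely, for each $n\geq 0$ I would pick irrationals $0<\alpha_n<\beta_n<1$ with $\alpha_n\to 0$ as $n\to\infty$ (for example $\alpha_n=\sqrt2/(2(n+2))$ and $\beta_n=\sqrt2/(n+2)$), and set
\[
\val p=\bigcup_{n\geq 0}\{x\in\mathbb Q:n+\alpha_n<x<n+\beta_n\}.
\]
Because every endpoint is irrational, $\val p$ is clopen in $\mathbb Q$; in particular $\val{\neg p}$ is open and $\val{\neg p\vee p}=\mathbb Q$, whence $\val{\ubox(\neg p\vee p)}=\mathbb Q$.

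The verification at $x=0$ then splits into three routine steps. First, $S^n(0)=n\notin\val p$ since $\alpha_n>0$, so $0\notin\val{\diam p}$. Second, any open $S$-invariant neighborhood $U$ of $0$ contains a basic clopen set $(-\epsilon,\epsilon)\cap\mathbb Q$; applying $S$ repeatedly and using $S[U]\subseteq U$ gives $(n-\epsilon,n+\epsilon)\cap\mathbb Q\subseteq U$ for every $n\geq 0$, and as soon as $\alpha_n<\epsilon$ this translated set meets the $n$-th component of $\val p$, so $U\not\subseteq\val{\neg p}$. Hence $0\notin\val{\ubox\neg p}$. Combining these, $0\in\val{\ubox(\neg p\vee p)}\setminus\val{\ubox\neg p\vee\diam p}$, so $\AxConsm p$ is falsified on this invertible $\mathbb Q$-system, and Theorem~\ref{ThmSoundFS} delivers $\loghomeo\not\vdash\AxConsm p$.

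The only subtlety — not really an obstacle — is calibrating the intervals $I_n$: they must have irrational endpoints (to be clopen in $\mathbb Q$), must miss the orbit points $n$ themselves (so $\diam p$ fails at $0$), and must approach the orbit arbitrarily closely (so that every $S$-invariant tube of fixed width $\epsilon$ eventually catches one of them). This directly mirrors, and defeats, the local-connectedness argument that established $\AxConsm p$ on Euclidean spaces: on $\mathbb Q$ the would-be invariant tube breaks up into disjoint clopen pieces, leaving room for $\val p$ to slip in between.
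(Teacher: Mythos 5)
Your construction is correct and is essentially the paper's own argument in mirror image: the paper takes $S(x)=x+1$ on $\mathbb Q$ and sets $\val{\neg p}$ to be a union of clopen tubes with irrational endpoints shrinking around the orbit points $n$, whereas you place $\val p$ in shrinking clopen intervals with irrational endpoints accumulating at those points; in both cases total disconnectedness gives $\val{p\vee\neg p}=\mathbb Q$, the orbit of $0$ misses $\val p$, and every $S$-invariant open neighbourhood of $0$ eventually meets $\val p$. The appeal to Theorem~\ref{ThmSoundFS} to conclude $\loghomeo\not\vdash\AxConsm p$ matches the paper as well.
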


\proof
Recall that
$\AxConsm \varphi\psi  = \ubox( p \vee \neg p) \to \diam p \vee \ubox \neg p$.
Let $S$ be given by $S(x)=x+1$.
Define a set
\[D=\mathbb Q\cap \bigcup_{n\in \mathbb N}\left (n-\textstyle\frac 1{n+\pi},n+\textstyle\frac 1{n+\pi}\right)\]
and let $\val p=\mathbb Q\setminus D$. It is readily verified that $\frac 1{n+\pi}\not\in \mathbb Q$ for any $n\in \mathbb N$, and hence
\[\mathbb Q\cap \left (n-\textstyle\frac 1{n+\pi},n+\textstyle\frac 1{n+\pi}\right)=\mathbb Q\cap \left [n-\textstyle\frac 1{n+\pi},n+\textstyle\frac 1{n+\pi}\right],\]
so that $D$ is both closed and open in $\mathbb Q$. It follows that $\val {\neg p}=D$, and hence $\val{p\vee \neg p}=\mathbb Q$; but $\mathbb Q$ is open and $S$-invariant, so $\val{\ubox(p\vee \neg p)}=\mathbb Q$ as well. In particular, $0\in \val{\ubox(p\vee \neg p)}$.

Moreover, we claim that
\begin{enumerate}[label=(\alph*)]
\item\label{ItAExam} $0\not\in \val{\diam p}$, but
\item\label{ItBExam} if $x \in (0,\nicefrac 12)$ and $n>\nicefrac 1x$, then $S^n(x)\in \val{p}$.
\end{enumerate}
Indeed, for \ref{ItAExam} we see that any $n\in\mathbb N$, $S^n(0) = n \in D$, while for \ref{ItBExam}, if $x \in (0,\nicefrac 12)$ and $n>\nicefrac 1x$, then
\[S^n(x) = n+ x \in \left ( n+\frac 1{n+\pi},   n+\frac 12 \right ) 
\subseteq \left ( n+\frac 1{n+\pi},  (n+1) - \frac 1{n+\pi} \right ) 
,\]
so $S^n(x) \not \in D$.
If $U$ is an $S$-closed neighbourhood of $0$, $U$ contains some $x \in (0,\nicefrac 12)$. From \ref{ItBExam} it follows that $S^n(x) \not \in \val{\neg p}$, hence $U \not \subseteq \val {   \neg p}$; since $U$ was arbitrary, $0\not \in \val{\ubox \neg \varphi}$.
\endproof

The above independence results are sufficient to see that the only non-trivial inclusions between our axiomatic systems are given by Proposition \ref{propInclusions}.

\begin{theorem}\label{TheoDistinct}
For each of the following families of axiomatically defined logics (see Table \ref{tableLogics}) or semantically defined logics (see Table \ref{tableClasses}) has pairwise distinct elements, and all subset relations are as indicated in Figures \ref{fig:fig} or \ref{fig:sem}.

\begin{enumerate}

\item\label{claimOne} $\logbasic$, $\loghomeo$, $\logexp$, $\logrealax$, $\logrnax$, $\logrnfs$, and $\logpers$;

\item $\logbasica$, $\loghomeoa$, $\logexpa$, $\logrealaxa$, $\logrnaxa$, $\logrnfsa$, and $\logpersa$; and

\item\label{claimThree} ${\sf ITL}^{\sf c}_{\diam\ubox\khence}$, ${\sf ITL}^{\sf o}_{\diam\ubox\khence}$, ${\sf ITL}^{\sf e}_{\diam\ubox\khence}$, ${\sf ITL}^{\mathbb R}_{\diam\ubox\khence}$, ${\sf ITL}^{\mathbb R^2}_{\diam\ubox\khence}$, ${\sf ITL}_{\diam\ubox\khence}^{{\sf o} \cap \mathbb R}$, and ${\sf ITL}^{\sf p}_{\diam\ubox\khence}$.

\end{enumerate}
The logics in the last item may be replaced by their fragments with only one of $\khence$ or $\ubox$.

\end{theorem}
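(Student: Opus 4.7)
The strategy will be to reduce every non-inclusion to a semantic separation. By Theorems \ref{ThmSoundZero}, \ref{ThmSoundCD} and \ref{ThmSoundFS}, by Corollary \ref{CorSoundOne}, and by the Euclidean soundness theorem of Section \ref{secEuclid}, each axiomatic logic in Table \ref{tableLogics} is contained in the validity set of its corresponding semantic class in Figure \ref{fig:sem}; therefore, to show $\Lambda_1 \not\subseteq \Lambda_2$ it suffices to exhibit $\varphi \in \Lambda_1$ together with a structure in the class associated with $\Lambda_2$ that falsifies $\varphi$. This reduces the three claims of the theorem to a finite bookkeeping exercise over the missing arrows of the two Hasse diagrams.

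The separations will be carried out using the four formulas and four countermodels already established earlier in the paper: (i) ${\rm CD}^-$, valid on every locally connected space and hence derivable in $\logrnax$, is falsified by the invertible $\mathbb{Q}$-based system of the lemma immediately preceding Theorem \ref{TheoDistinct}, which handles the separation of $\logbasic$ and $\loghomeo$ from every logic above $\logrnax$; (ii) $\rm CD$, derivable in $\logexp$, is falsified by the invertible $\mathbb{R}$-system $S(x)=2x$ of Proposition \ref{PropInd}, separating $\logexp$ and $\logpers$ from $\logrealax$, $\loghomeo$ and $\logrnfs$; (iii) ${\rm CEM}$, derivable in $\logrealax$, is falsified on the expanding poset of Lemma \ref{lemmCemNotRn}, separating $\logrealax$ and the logics above it from $\logexp$ and those below; and (iv) ${\rm FS}_\mtnext$, derivable in $\loghomeo$, is falsified both on the expanding poset of Proposition \ref{prop:nvalid:iltl}(1) and on the $\mathbb{R}$-system of its part (2), separating $\loghomeo$, $\logrnfs$ and $\logpers$ from every logic lacking Fischer--Servi, in particular $\logrealax$. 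Running through the transitive closure of Figure \ref{fig:fig} and checking each missing arrow against this list discharges claim (1).

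For claim (2) the same witnesses apply after the translation $t_\khence$: each of the four separating structures is either a poset or an open dynamical system, so Propositions \ref{propBoxKripke} and \ref{propBoxHomeo} guarantee that the interpretations of $\ubox$ and $\khence$ agree on them, and the failures of the $\ubox$-versions transfer directly to their $\khence$-translates. The one exception, the non-open $\mathbb{R}$-system used in Proposition \ref{prop:nvalid:iltl}(2), is explicitly treated in the statement of that proposition, which already records the failure of $t_\khence({\rm FS}_\diam)$. For claim (3), no appeal to soundness is needed at all: each of the four separating structures lives in the class on the "smaller" side of a non-inclusion between semantic logics and falsifies a formula valid on the "larger" side, while Theorems \ref{theoR2R} and \ref{theoR2vsITLe} supply the positive inclusions locating ${\sf ITL}^{\mathbb R^2}$ in the diagram. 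The closing sentence of the theorem is then immediate because none of the four separators mentions both $\ubox$ and $\khence$ simultaneously.

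The main obstacle is purely organizational: one must enumerate the roughly twenty missing arrows in the two Hasse diagrams and verify that each is discharged by one of the four witnesses above, being careful in the $\logrealax$ versus $\loghomeo$ case (where ${\rm CEM}$ does \emph{not} separate, since $\loghomeo \vdash {\rm CEM}$, and one must fall back on ${\rm CD}^-$). No new technical input beyond Sections \ref{SecSound}--\ref{secPers} is required.
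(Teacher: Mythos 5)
Your proposal is correct and follows essentially the same route as the paper: both reduce every non-inclusion to one of the previously established separating formulas (${\rm CD}^-$, ${\rm CD}$, ${\rm CEM}$, ${\rm FS}_\mtnext$) falsified on a structure for which the target logic is sound, and both invoke Proposition \ref{propInclusions} and Theorems \ref{theoR2R}, \ref{theoR2vsITLe} and \ref{theoRcomplete} for the positive inclusions (the paper simply compresses the bookkeeping into the labelled arrows of Figure \ref{figLogics}). Your explicit remarks on transferring the witnesses to the $\khence$-translates via Propositions \ref{propBoxKripke} and \ref{propBoxHomeo}, and on the $\logrealax$ versus $\loghomeo$ case, match the paper's own remarks following Proposition \ref{PropInd} and the lemma deriving ${\rm CEM}$ in $\loghomeo$.
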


\begin{proof}
For the first item, each arrow from $\Lambda_1$ to $\Lambda_2$ in Figure \ref{figLogics} is labelled by a formula which we have previously shown to belong to $\Lambda_2\setminus \Lambda_1$.
The same formulas may be used to separate the respective logics in the other two items.
The non-trivial subset relations between the logics have been established in Propositions \ref{propInclusions} and Theorems \ref{theoR2R}, \ref{theoR2vsITLe}, and \ref{theoRcomplete}.
\end{proof}

We may also classify $\diam$-free logics.

\begin{theorem}\label{TheoDistinctBox}
For each of the following families of logics, their elements are pairwise distinct, and all subset relations are as indicated in Figures \ref{fig:box}.

\begin{enumerate}

\item\label{claimOne} ${\sf ITL}_\ubox$, ${\sf ITL}^+_\ubox$, ${\sf CDTL}_\ubox$, ${\sf RTL}_\ubox$, and ${\sf CDTL}^+_\ubox$;

\item ${\sf ITL}_\khence$, ${\sf ITL}^+_\khence$, ${\sf CDTL}_\khence$, ${\sf RTL}_\khence$, and ${\sf CDTL}^+_\khence$, or

\item\label{claimThree} ${\sf ITL}^{\sf c}_{\ubox\khence}$, ${\sf ITL}^{\sf o}_{\ubox\khence}$, ${\sf ITL}^{\sf e}_{\ubox\khence}$, ${\sf ITL}^{\mathbb R}_{\ubox\khence}$, and ${\sf ITL}^{\sf p}_{\ubox\khence}$.

\end{enumerate}
The logics in the last item may be replaced by their fragments with only one of $\khence$ or $\ubox$.

\end{theorem}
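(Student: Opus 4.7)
The proof proceeds along the same lines as Theorem \ref{TheoDistinct}, and the essential observation is that every separating formula used there lies in the fragment $\mathcal L_\ubox$, and hence, after applying $t_\khence$, also in $\mathcal L_\khence$. The non-trivial inclusions in Figure \ref{fig:box}, in particular the $\ubox$-only counterpart of Proposition \ref{propInclusions}, follow by restricting the existing derivations to the appropriate sublanguage, noting that both the derivation of $\rm CEM$ inside $\loghomeo$ and the derivation of $\rm BI^-$ from $\rm BI$ (Proposition \ref{PropConstoBI}) are $\diam$-free. Soundness of each semantic logic over its associated class of structures, established in Sections \ref{SecSound}--\ref{secPers}, then supplies the corresponding inclusions for the semantic family in item 3.

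For pairwise distinctness I would revisit the key separating formulas from the proof of Theorem \ref{TheoDistinct}. First, ${\rm FS}_\mtnext(p,q)$ only involves $\tnext$, and by Proposition \ref{prop:nvalid:iltl} fails on the expanding poset of Figure \ref{FigIMLA}, so it separates every logic in each family without ${\rm FS}$ from the corresponding logic with ${\rm FS}$. Second, $\AxBInd pq$ is $\diam$-free and by Proposition \ref{PropInd} fails on an invertible real-line model, which separates the non-${\rm CD}$ logics from the ${\rm CD}$ logics. Third, ${\rm CEM}(p,q)$ also only involves $\tnext$ and by Lemma \ref{lemmCemNotRn} fails on a finite expanding poset, separating the logics without $\rm CEM$ from ${\rm RTL}_\ubox$ and its counterparts. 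For the $\khence$ family the same formulas are used under $t_\khence$; all of them lie simultaneously in $\mathcal L_\ubox$ and in $\mathcal L_\khence$, so the same witnesses also justify the closing remark that allows restriction to a fragment with only one of the two henceforth operators.

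The main obstacle is purely combinatorial: one must verify that dropping $\diam$ does not introduce any inclusions absent from Figures \ref{fig:fig} or \ref{fig:sem}. This reduces to checking that for every strict arrow in Figure \ref{fig:box} the relevant witness from Theorem \ref{TheoDistinct} is itself $\diam$-free, or admits an obvious $\diam$-free replacement. Since $\AxBInd pq$, ${\rm FS}_\mtnext(p,q)$ and ${\rm CEM}(p,q)$ exhaust the separators needed, the verification is routine, and the semantic versions go through with the same countermodels restricted to the appropriate sublanguage.
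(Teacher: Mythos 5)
Your proposal matches the paper's own proof, which likewise disposes of the theorem by reusing the separating formulas of Theorem \ref{TheoDistinct} (as displayed in Figure \ref{figLogics}) with instances of ${\rm CD}$ replaced by the $\diam$-free ${\rm BI}$. The only imprecision is your opening claim that \emph{every} separator used there already lies in $\mathcal L_\ubox$ --- ${\rm CD}$ and ${\rm CD}^-$ do not, which is precisely why the substitution by ${\rm BI}$ is needed (and note that ${\rm BI}^-$ is simply an instance of ${\rm BI}$, so Proposition \ref{PropConstoBI} plays no role there); your second and third paragraphs already make the correct adjustment.
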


\begin{proof}
Similar to Theorem \ref{TheoDistinct}, Figure \ref{figLogics} displays formulas separating these logics, except that instances of ${\rm CD}$ should be replaced by ${\rm BI}$.
\end{proof}

\begin{remark}
Note that logics characterized by ${\rm CD}^-$ are not included in the statement of Theorem \ref{TheoDistinctBox}.
In particular, the formula $\AxBInd{\neg p}p$ is already valid over the class of all dyamical systems.
We do not know if the $\diam$-free logic of dynamical systems based on $\mathbb R^2$ is different from that of all dynamical systems.
\end{remark}

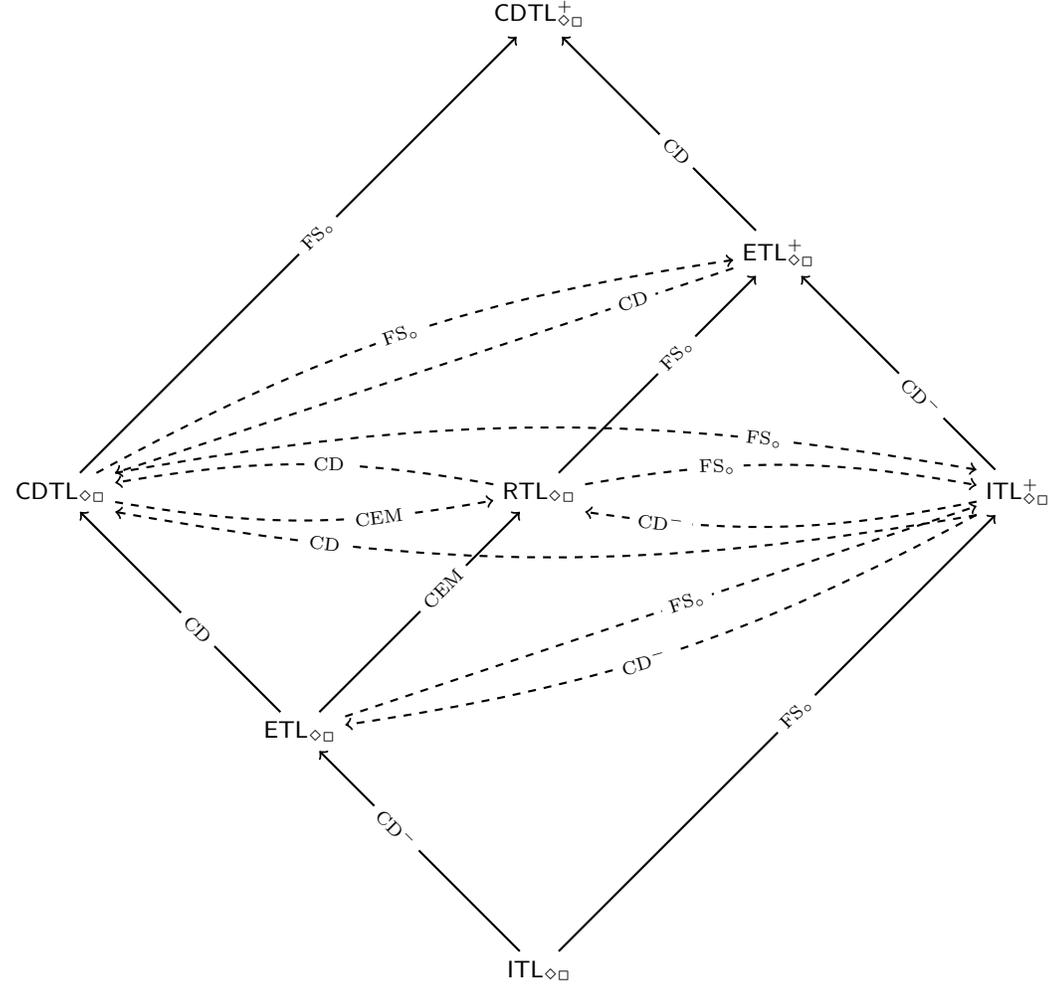
\begin{figure}[h!]\centering
	\begin{tikzpicture}[thick,->,auto,font=\small,node distance=4.5cm]
	\node[] (logbasic) {$\logbasic$};
	\node[] (logrnax) 	[above left of=logbasic] {$\logrnax$};	
	
	\node[] (logrealax) [above right of=logrnax] {$\logrealax$};	
	\node[] (logexp) [above left of=logrnax] {$\logexp$};
	\node[] (logrnfs)  [above right of=logrealax] {$\logrnfs$};		

	\node[] (loghomeo)  [below right of=logrnfs] {$\loghomeo$};

	\node[] (logpers) [above left of=logrnfs] {$\logpers$};
	
	\path[] 
	(logbasic) edge[] node[fill=white,sloped,pos=0.5,right,font=\scriptsize]{${\rm FS}_\mtnext$}(loghomeo)
	(logbasic) edge[] node[fill=white,sloped,pos=0.5,left,font=\scriptsize]{${\rm CD}^-$}(logrnax)	
	(logexp.north east) edge[dashed,bend left=10] node[fill=white,sloped,pos=0.8,left,font=\scriptsize]{${\rm FS}_\mtnext$}(loghomeo.north west)	
	(logrealax) edge[dashed, bend left=10] node[fill=white,sloped,pos=0.4,left,font=\scriptsize]{${\rm FS}_\mtnext $}(loghomeo)								
	(loghomeo) edge[] node[fill=white,sloped,pos=0.5,right,font=\scriptsize]{${\rm CD}^-$}(logrnfs)		
	(loghomeo.south west) edge[dashed,bend left=10] node[fill=white,sloped,pos=0.8,right,font=\scriptsize]{${\rm CD}$}(logexp.south east)
	(logrealax) edge[dashed, bend right=10] node[fill=white,sloped,pos=0.5,right,font=\scriptsize]{${\rm CD}$}(logexp)	
	(logrnfs) edge[dashed] node[fill=white,sloped,pos=0.2,right,font=\scriptsize]{${\rm CD}$}(logexp.north east)
	(logexp) edge[] node[fill=white,sloped,pos=0.5,right,font=\scriptsize]{${\rm FS}_\mtnext$}(logpers)	
	(loghomeo) edge[bend left=10,dashed] node[fill=white,sloped,pos=0.9,right,font=\scriptsize]{${\rm CD}^-$}(logrealax.south east)
	(logexp) edge[bend right=10,dashed] node[fill=white,sloped,left,pos=0.8,font=\scriptsize]{${\rm CEM}$}(logrealax)	
	(logrealax) edge[] node[fill=white,sloped,pos=0.5,right,font=\scriptsize]{${\rm FS}_\mtnext $}(logrnfs)		
	(loghomeo) edge[dashed,bend left=10] node[fill=white,right,sloped,right,pos=0.6,font=\scriptsize]{${\rm CD}^-$}(logrnax)
	(logrnax) edge[dashed] node[fill=white,right,sloped,right,pos=0.5,font=\scriptsize]{${\rm FS}_\mtnext$}(loghomeo)
	(logrnax) edge[] node[fill=white,right,sloped,right,pos=0.5,font=\scriptsize]{${\rm CD}$}(logexp)
	(logrnax) edge[] node[fill=white,sloped,right,pos=0.5,font=\scriptsize]{${\rm CEM}$}(logrealax)		
	(logexp) edge[dashed,bend left=10] node[fill=white,sloped,right,pos=0.45,font=\scriptsize]{${\rm FS}_\mtnext$}(logrnfs)
	(logrnfs) edge[] node[fill=white,sloped,right,pos=0.5,font=\scriptsize]{${\rm CD}$}(logpers)	
	;								
	\end{tikzpicture}
	\caption{Graph displaying the dependences among the different logics studied in this paper. Nodes corresponds to different logic while edges mean two different kinds of relation. Edges of the form  $\Lambda_1 \textrm{ -- }\varphi\rightarrow \Lambda_2$ mean that $\Lambda_1 \subseteq \Lambda_2$ and, moreover, $\varphi \in \Lambda_2 \setminus \Lambda_1$.  Edges of the form $\Lambda_1 \textrm{ - - }\varphi\dashrightarrow \Lambda_2$ mean that $\Lambda_2 \not \subseteq \Lambda_1$ and $\varphi \in \Lambda_2 \setminus \Lambda_1$.}
	\label{figLogics}
\end{figure}

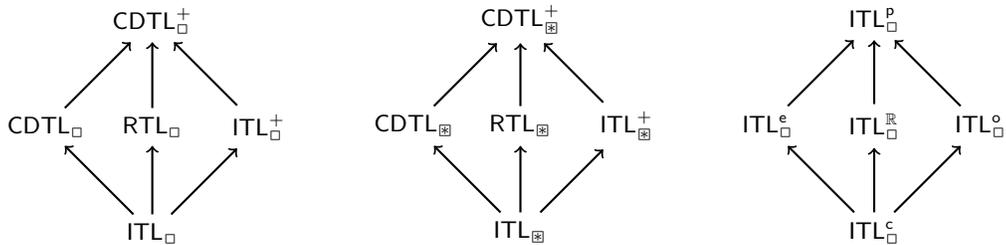
\begin{figure}[h!]\centering
	\begin{tikzpicture}[thick,->,auto,font=\small,node distance=1cm]
	\node[] (logbasic) {${\sf ITL}_\ubox$};
	\node[] (logrnax) 	[above left of=logbasic] {};	
	\node[] (logrealax) [above right of=logrnax] {${\sf RTL}_\ubox$};	
	\node[] (logexp) [above left of=logrnax] {${\sf CDTL}_\ubox$};
	\node[] (logrnfs)  [above right of=logrealax] {};		
	\node[] (loghomeo)  [below right of=logrnfs] {${\sf ITL}^ +_\ubox$};	
	\node[] (logpers) [above left of=logrnfs] {${\sf CDTL}^ +_\ubox$};
	
	\path[] 
	(logbasic) edge[] node[pos=0.5,right,font=\scriptsize]{}(loghomeo)
	(logbasic) edge[] node[pos=0.5,left,font=\scriptsize]{}(logexp)	
		(logbasic) edge[] node[pos=0.5,left,font=\scriptsize]{}(logrealax)	
				(logrealax) edge[] node[pos=0.5,left,font=\scriptsize]{}(logpers)	
	(loghomeo) edge[] node[pos=0.5,right,font=\scriptsize]{}(logpers)		
	(logexp) edge[] node[pos=0.5,right,font=\scriptsize]{}(logpers)	
	;								
	\end{tikzpicture}
	\hfill
		\begin{tikzpicture}[thick,->,auto,font=\small,node distance=1cm]
	\node[] (logbasic) {${\sf ITL}_\khence$};
	\node[] (logrnax) 	[above left of=logbasic] {};	
	\node[] (logrealax) [above right of=logrnax] {${\sf RTL}_\khence$};	
	\node[] (logexp) [above left of=logrnax] {${\sf CDTL}_\khence$};
	\node[] (logrnfs)  [above right of=logrealax] {};		
	\node[] (loghomeo)  [below right of=logrnfs] {${\sf ITL}^ +_\khence$};	
	\node[] (logpers) [above left of=logrnfs] {${\sf CDTL}^ +_\khence$};
	
	\path[] 
	(logbasic) edge[] node[pos=0.5,right,font=\scriptsize]{}(loghomeo)
	(logbasic) edge[] node[pos=0.5,left,font=\scriptsize]{}(logexp)	
		(logbasic) edge[] node[pos=0.5,left,font=\scriptsize]{}(logrealax)	
				(logrealax) edge[] node[pos=0.5,left,font=\scriptsize]{}(logpers)	
	(loghomeo) edge[] node[pos=0.5,right,font=\scriptsize]{}(logpers)		
	(logexp) edge[] node[pos=0.5,right,font=\scriptsize]{}(logpers)	
	;								
	\end{tikzpicture}
	\hfill
		\begin{tikzpicture}[thick,->,auto,font=\small,node distance=1cm]
	\node[] (logbasic) {${\sf ITL}^{\sf c}_\ubox$};
	\node[] (logrnax) 	[above left of=logbasic] {};	
	\node[] (logrealax) [above right of=logrnax] {${\sf ITL}^\mathbb R_\ubox$};	
	\node[] (logexp) [above left of=logrnax] {${\sf ITL}^{\sf e}_\ubox$};
	\node[] (logrnfs)  [above right of=logrealax] {};		
	\node[] (loghomeo)  [below right of=logrnfs] {${\sf ITL}^{\sf o}_\ubox$};	
	\node[] (logpers) [above left of=logrnfs] {${\sf ITL}^{\sf p}_\ubox$};
	
	\path[] 
	(logbasic) edge[] node[pos=0.5,right,font=\scriptsize]{}(loghomeo)
	(logbasic) edge[] node[pos=0.5,left,font=\scriptsize]{}(logexp)	
		(logbasic) edge[] node[pos=0.5,left,font=\scriptsize]{}(logrealax)	
				(logrealax) edge[] node[pos=0.5,left,font=\scriptsize]{}(logpers)	
	(loghomeo) edge[] node[pos=0.5,right,font=\scriptsize]{}(logpers)		
	(logexp) edge[] node[pos=0.5,right,font=\scriptsize]{}(logpers)	
	;								
	\end{tikzpicture}
	\caption{Inclusions between $\diam$-free logics.
}
	\label{fig:box}
\end{figure}

\section{Concluding Remarks and Future Perspectives}\label{SecConc}

We have proposed a natural `basic' intuitionistic temporal logic, $\logbasic$, along with possible extensions including Fischer Servi or constant domain axioms, and weakened versions obtained by modifying the fixed-point axioms for `henceforth'. 
We have seen that relational semantics validate the constant domain axiom, leading us to consider a wider class of models based on topological spaces, with two possible interpretations for `henceforth': the {\em weak henceforth,} $\khence$, and the {\em strong henceforth,} $\ubox$.
With this, we have shown that the logics $\logbasic$, $\logexp$, $\loghomeo$ and $\logpers$ are sound for the class of all dynamical systems, of all dynamical posets, of all open dynamical systems, and of all persistent dynamical posets, respectively, which we have used in order to prove that the logics are pairwise distinct.
We have also shown that the logics $\logrealax$, $\logrnax$, and $\logrnfs$, based on Euclidean spaces, are distinct from any of the above-mentioned logics.
We have performed a similar analysis for logics using $\khence$ instead of $\ubox$.

Of course this immediately raises the question of completeness, which we have not addressed. Specifically, the following are left open.

\begin{question}\label{questOne}
Are the logics:
\begin{enumerate}[label=(\alph*)]

\item $\logbasic$, $\logbasica$, and $\logbasicb$ complete for the class of dynamical systems?

\item $\logexp$ and $\logexpb$ complete for the class of expanding posets?

\item $\loghomeo$, ${\sf ITL}^{\sf o}_\diam$ and $\loghomeob$ complete for the class of open dynamical systems?

\item ${\sf ITL}^\mathbb R_\mtnext$ complete for the class of systems based on $\mathbb R$?

\item $\logrnax$, $\logrnaxa$ complete for the class of systems based on Euclidean spaces?

\item $\logrnfs$ complete for the class of systems based on Euclidean spaces with a homeomorphism?

\item $\logpers$, ${\sf ITL}^{\sf p}_\diam$ and $\logpersb$ complete for the class of persistent posets?

\end{enumerate}

\end{question}

We already know that ${\sf ITL} _\diam$ is sound and complete for the class of expanding posets and for Euclidean spaces \cite{DieguezCompleteness}. However, the completeness of ${\sf ITL}^+_\diam$ and ${\sf ITL}^{\sf p}_\diam$ is likely to be a more difficult problem than that of ${\sf ITL} _\diam$, as in these cases it is not even known if the set of valid formulas is computably enumerable, let alone decidable.

\begin{question}
Are any of the logics $\Lambda$, $\Lambda_\diam$, or $\Lambda_\ubox$ with $\Lambda \in \{\itlp,\itlo\}$ decidable and/or computably enumerable?
\end{question}

A negative answer is possible for any of these logics, since that is the case for their classical counterparts \cite{wolter} and these logics do not have the finite model property \cite{BoudouCSL}.
Nevertheless, the proofs of non-axiomatizability in the classical case do not carry over to the intuitionistic setting in an obvious way, and these remain challenging open problems.

Note that the semantic counterpart for $\logrnax$ used in Theorem \ref{TheoDistinct} is ${\sf ITL}_{\diam\ubox}^{{\sf o} \cap \mathbb R} $.
We could have used ${\sf ITL}^{{\sf o} \cap \{ \mathbb R^n : n\geq 1\}}_{\diam\ubox}$ instead, as $\logrnax$ is also sound for this class.
This raises the following.

\begin{question}
Is every formula falsifiable on some $\mathbb R^n$ with a homeomorphism also falsifiable on $\mathbb R$?
\end{question}

Note that we have not considered weak logics with $\rm CD$ or $\rm FS$.
However, this is only due to the fact that the topological semantics we have considered do not yield semantically-defined logics which satisfy the latter axioms without also satisfying $ \logbasic$.
It may yet be that semantics for such logics may be defined using other classes of dynamical systems.
In particular, our techniques do not show whether the weak and standard logics coincide in these cases.

\begin{question}
Is the logic $\logpers$ distinct from $ \wlogbasic + {\rm CD} + {\rm FS} $?
\end{question}

We conjecture that an affirmative answer could be given using more general algebraic semantics, but we leave this for future work.
Finally, we remark that while we have not considered logics over the full language, it is possible to study logics which combine $\ubox$ and ${\khence}$.
Over dynamic posets or over open dynamical systems such an extension would be uninteresting since both operators are equivalent, but over the class of all dynamical systems, Lemma \ref{lemmStrongtoWeak} suggests defining
\[{\sf ITL}_{\diam\ubox{\khence}} \eqdef \logbasic + \logbasica + \ubox p \to {\khence} p.\]
This leaves us with one final question.

\begin{question}
Is the logic ${\sf ITL}^{\sf c}_{\diam\ubox{\khence}}$ decidable, and does it enjoy a natural axiomatization?
\end{question}

\bibliography{TopoLTL}

\begin{thebibliography}{}

\bibitem[\protect\citeauthoryear{Aleksandroff}{Aleksandroff}{1937}]{alek}
{\sc Aleksandroff, P.} 1937.
\newblock Diskrete {R}\"aume.
\newblock {\em Matematicheskii Sbornik\/}~{\em 2,\/}~44, 501--518.

\bibitem[\protect\citeauthoryear{{Balbiani}, {Boudou}, {Di{\'e}guez}, and
  {Fern{\'a}ndez-Duque}}{{Balbiani} et~al\mbox{.}}{2019}]{BalbianiToCL}
{\sc {Balbiani}, P.}, {\sc {Boudou}, J.}, {\sc {Di{\'e}guez}, M.}, {\sc and}
  {\sc {Fern{\'a}ndez-Duque}, D.} 2019.
\newblock Intuitionistic linear temporal logics.
\newblock {\em Transactions on Computational Logic\/}.

\bibitem[\protect\citeauthoryear{Balbiani and Di\'eguez}{Balbiani and
  Di\'eguez}{2016}]{BalbianiDieguezJelia}
{\sc Balbiani, P.} {\sc and} {\sc Di\'eguez, M.} 2016.
\newblock Temporal here and there.
\newblock In {\em Logics in Artificial Intelligence}, {M.~Loizos} {and}
  {A.~Kakas}, Eds. Springer, 81--96.

\bibitem[\protect\citeauthoryear{Boudou, Di{\'{e}}guez, and
  Fern{\'{a}}ndez{-}Duque}{Boudou et~al\mbox{.}}{2017}]{BoudouCSL}
{\sc Boudou, J.}, {\sc Di{\'{e}}guez, M.}, {\sc and} {\sc
  Fern{\'{a}}ndez{-}Duque, D.} 2017.
\newblock A decidable intuitionistic temporal logic.
\newblock In {\em 26th {EACSL} Annual Conference on Computer Science Logic
  ({CSL})}. Vol.~82. 14:1--14:17.

\bibitem[\protect\citeauthoryear{Boudou, Di{\'{e}}guez,
  Fern{\'{a}}ndez{-}Duque, and Romero}{Boudou
  et~al\mbox{.}}{2019}]{BoudouJelia}
{\sc Boudou, J.}, {\sc Di{\'{e}}guez, M.}, {\sc Fern{\'{a}}ndez{-}Duque, D.},
  {\sc and} {\sc Romero, F.} 2019.
\newblock Axiomatic systems and topological semantics for intuitionistic
  temporal logic.
\newblock In {\em Logics in Artificial Intelligence - 16th European Conference,
  {JELIA} 2019, Rende, Italy, May 7-11, 2019, Proceedings}. 763--777.

\bibitem[\protect\citeauthoryear{Brewka, Eiter, and Truszczy\'{n}ski}{Brewka
  et~al\mbox{.}}{2011}]{Brewka11}
{\sc Brewka, G.}, {\sc Eiter, T.}, {\sc and} {\sc Truszczy\'{n}ski, M.} 2011.
\newblock Answer set programming at a glance.
\newblock {\em Communications of the ACM\/}~{\em 54,\/}~12, 92--103.

\bibitem[\protect\citeauthoryear{Cabalar and P{\'e}rez~Vega}{Cabalar and
  P{\'e}rez~Vega}{2007}]{CP07}
{\sc Cabalar, P.} {\sc and} {\sc P{\'e}rez~Vega, G.} 2007.
\newblock Temporal equilibrium logic: A first approach.
\newblock In {\em Computer Aided Systems Theory -- EUROCAST'07}. Springer
  Berlin Heidelberg, 241--248.

\bibitem[\protect\citeauthoryear{Davies}{Davies}{1996}]{Davies96}
{\sc Davies, R.} 1996.
\newblock A temporal-logic approach to binding-time analysis.
\newblock In {\em Proceedings, 11th Annual {IEEE} Symposium on Logic in
  Computer Science, New Brunswick, New Jersey, USA, July 27-30, 1996}.
  184--195.

\bibitem[\protect\citeauthoryear{Davies}{Davies}{2017}]{Davies-2017}
{\sc Davies, R.} 2017.
\newblock A temporal logic approach to binding-time analysis.
\newblock {\em Journal of the ACM\/}~{\em 64}.

\bibitem[\protect\citeauthoryear{Davies and Pfenning}{Davies and
  Pfenning}{2001}]{Davies01}
{\sc Davies, R.} {\sc and} {\sc Pfenning, F.} 2001.
\newblock A modal analysis of staged computation.
\newblock {\em J. ACM\/}~{\em 48,\/}~3, 555--604.

\bibitem[\protect\citeauthoryear{Davoren}{Davoren}{2009}]{Davoren2009}
{\sc Davoren, J.~M.} 2009.
\newblock {On intuitionistic modal and tense logics and their classical
  companion logics: Topological semantics and bisimulations}.
\newblock {\em Annals of Pure and Applied Logic\/}~{\em 161,\/}~3, 349--367.

\bibitem[\protect\citeauthoryear{Davoren, Coulthard, Moor, Goré, and
  Nerode}{Davoren et~al\mbox{.}}{2002}]{DavorenIntuitionistic}
{\sc Davoren, J.~M.}, {\sc Coulthard, V.}, {\sc Moor, T.}, {\sc Goré, R.},
  {\sc and} {\sc Nerode, A.} 2002.
\newblock Topological semantics for intuitionistic modal logics, and spatial
  discretisation by {A}/{D} maps.
\newblock In {\em Workshop on Intuitionistic Modal Logic and Applications
  (IMLA)}.

\bibitem[\protect\citeauthoryear{Di\'eguez and Fern\'andez-Duque}{Di\'eguez and
  Fern\'andez-Duque}{2018}]{DieguezCompleteness}
{\sc Di\'eguez, M.} {\sc and} {\sc Fern\'andez-Duque, D.} 2018.
\newblock An intuitionistic axiomatization of `eventually'.
\newblock In {\em Advances in Modal Logic}. Vol.~12. 199--218.

\bibitem[\protect\citeauthoryear{Dugundji}{Dugundji}{1975}]{Dugundji}
{\sc Dugundji, J.} 1975.
\newblock {\em Topology}.
\newblock Allyn and Bacon series in advanced mathematics. Prentice Hall of
  India, New Delhi.

\bibitem[\protect\citeauthoryear{Ershov}{Ershov}{1977}]{Ershov77}
{\sc Ershov, A.~P.} 1977.
\newblock On the partial computation principle.
\newblock {\em Inf. Process. Lett.\/}~{\em 6,\/}~2, 38--41.

\bibitem[\protect\citeauthoryear{Fern{\'a}ndez-Duque}{Fern{\'a}ndez-Duque}{2007}]{FernandezR2}
{\sc Fern{\'a}ndez-Duque, D.} 2007.
\newblock Dynamic topological completeness for $\mathbb{R}^2$.
\newblock {\em Logic Journal of the IGPL\/}~{\em 15,\/}~1, 77--107.

\bibitem[\protect\citeauthoryear{Fern{\'a}ndez-Duque}{Fern{\'a}ndez-Duque}{2011}]{FernandezMetric}
{\sc Fern{\'a}ndez-Duque, D.} 2011.
\newblock Dynamic topological logic interpreted over metric spaces.
\newblock {\em Journal of Symbolic Logic\/}.

\bibitem[\protect\citeauthoryear{Fern{\'{a}}ndez{-}Duque}{Fern{\'{a}}ndez{-}Duque}{2018}]{FernandezITLc}
{\sc Fern{\'{a}}ndez{-}Duque, D.} 2018.
\newblock The intuitionistic temporal logic of dynamical systems.
\newblock {\em Logical Methods in Computer Science\/}~{\em 14,\/}~3.

\bibitem[\protect\citeauthoryear{Fischer~Servi}{Fischer~Servi}{1984}]{FS84}
{\sc Fischer~Servi, G.} 1984.
\newblock {Axiomatisations for some intuitionistic modal logics}.
\newblock In {\em {Rendiconti del Seminario Matematico}}. Vol.~42. Universitie
  Politecnico Torino, 179--194.

\bibitem[\protect\citeauthoryear{Gabelaia, Kurucz, Wolter, and
  Zakharyaschev}{Gabelaia et~al\mbox{.}}{2006}]{GabelaiaExpanding}
{\sc Gabelaia, D.}, {\sc Kurucz, A.}, {\sc Wolter, F.}, {\sc and} {\sc
  Zakharyaschev, M.} 2006.
\newblock Non-primitive recursive decidability of products of modal logics with
  expanding domains.
\newblock {\em Annals of Pure and Applied Logic\/}~{\em 142,\/}~1-3, 245--268.

\bibitem[\protect\citeauthoryear{Goldblatt}{Goldblatt}{1980}]{GoldblattDiodorean}
{\sc Goldblatt, R.} 1980.
\newblock Diodorean modality in minkowski spacetime.
\newblock {\em Studia Logica\/}~{\em 39}, 219--236.

\bibitem[\protect\citeauthoryear{Goldblatt}{Goldblatt}{1992}]{Goldblatt92}
{\sc Goldblatt, R.} 1992.
\newblock {\em Logics of Time and Computation\/}, 2 ed.
\newblock Number~7 in CSLI Lecture Notes. Center for the Study of Language and
  Information, Stanford, CA.

\bibitem[\protect\citeauthoryear{Heyting}{Heyting}{1930}]{Heyting1930}
{\sc Heyting, A.} 1930.
\newblock Die formalen regeln der intuitionistischen logik.
\newblock In {\em Sitzungsberichte der preußischen Akademie der
  Wissenschaften. Physikalisch-mathematische Klasse}. 42--56.

\bibitem[\protect\citeauthoryear{Howard}{Howard}{1980}]{Howard80}
{\sc Howard, W.~A.} 1980.
\newblock {The formulas-as-types notion of construction}.
\newblock In {\em To H. B. Curry: Essays on Combinatory Logic, Lambda Calculus,
  and Formalism}, {J.~P. Seldin} {and} {J.~R. Hindley}, Eds. Academic Press,
  479--490.

\bibitem[\protect\citeauthoryear{Kamide and Wansing}{Kamide and
  Wansing}{2010}]{KamideBounded}
{\sc Kamide, N.} {\sc and} {\sc Wansing, H.} 2010.
\newblock Combining linear-time temporal logic with constructiveness and
  paraconsistency.
\newblock {\em Journal of Applied Logic\/}~{\em 8,\/}~1, 33--61.

\bibitem[\protect\citeauthoryear{Kojima and Igarashi}{Kojima and
  Igarashi}{2011}]{KojimaNext}
{\sc Kojima, K.} {\sc and} {\sc Igarashi, A.} 2011.
\newblock Constructive linear-time temporal logic: Proof systems and {K}ripke
  semantics.
\newblock {\em Information and Computation\/}~{\em 209,\/}~12, 1491 --1503.

\bibitem[\protect\citeauthoryear{Konev, Kontchakov, Wolter, and
  Zakharyaschev}{Konev et~al\mbox{.}}{2006}]{wolter}
{\sc Konev, B.}, {\sc Kontchakov, R.}, {\sc Wolter, F.}, {\sc and} {\sc
  Zakharyaschev, M.} 2006.
\newblock On dynamic topological and metric logics.
\newblock {\em Studia Logica\/}~{\em 84}, 129--160.

\bibitem[\protect\citeauthoryear{Kremer}{Kremer}{2004}]{KremerIntuitionistic}
{\sc Kremer, P.} 2004.
\newblock A small counterexample in intuitionistic dynamic topological logic.
\newblock
  \url{http://individual.utoronto.ca/philipkremer/onlinepapers/counterex.pdf}.

\bibitem[\protect\citeauthoryear{Kremer}{Kremer}{2013}]{KremerStrong}
{\sc Kremer, P.} 2013.
\newblock Strong completeness of {S4} for any dense-in-itself metric space.
\newblock {\em Rev. Symb. Log.\/}~{\em 6,\/}~3, 545--570.

\bibitem[\protect\citeauthoryear{Kremer and Mints}{Kremer and
  Mints}{2005}]{KremerMints}
{\sc Kremer, P.} {\sc and} {\sc Mints, G.} 2005.
\newblock Dynamic topological logic.
\newblock {\em Annals of Pure and Applied Logic\/}~{\em 131}, 133--158.

\bibitem[\protect\citeauthoryear{Kurucz, Wolter, Zakharyaschev, and
  Gabbay}{Kurucz et~al\mbox{.}}{2003}]{KuruczMany}
{\sc Kurucz, A.}, {\sc Wolter, F.}, {\sc Zakharyaschev, M.}, {\sc and} {\sc
  Gabbay, D.~M.} 2003.
\newblock {\em {Many-Dimensional Modal Logics: Theory and Applications, Volume
  148 (Studies in Logic and the Foundations of Mathematics)}\/}, 1 ed.
\newblock North Holland.

\bibitem[\protect\citeauthoryear{Lichtenstein and Pnueli}{Lichtenstein and
  Pnueli}{2000}]{temporal}
{\sc Lichtenstein, O.} {\sc and} {\sc Pnueli, A.} 2000.
\newblock Propositional temporal logics: Decidability and completeness.
\newblock {\em Logic Jounal of the IGPL\/}~{\em 8,\/}~1, 55--85.

\bibitem[\protect\citeauthoryear{Maier}{Maier}{2004}]{Maier2004Heyting}
{\sc Maier, P.} 2004.
\newblock Intuitionistic {LTL} and a new characterization of safety and
  liveness.
\newblock In {\em 18th {EACSL} Annual Conference on Computer Science Logic
  ({CSL})}, {J.~Marcinkowski} {and} {A.~Tarlecki}, Eds. Springer Berlin
  Heidelberg, Berlin, Heidelberg, 295--309.

\bibitem[\protect\citeauthoryear{Mints}{Mints}{2000}]{MintsInt}
{\sc Mints, G.} 2000.
\newblock {\em A Short Introduction to Intuitionistic Logic}.
\newblock University Series in Mathematics. Springer.

\bibitem[\protect\citeauthoryear{Mostowski}{Mostowski}{1948}]{Mostowski}
{\sc Mostowski, A.} 1948.
\newblock Proofs of non-deducibility in intuitionistic functional calculus.
\newblock {\em Journal of Symbolic Logic\/}~{\em 13,\/}~4, 204--207.

\bibitem[\protect\citeauthoryear{Nogin and Nogin}{Nogin and
  Nogin}{2008}]{Nogin}
{\sc Nogin, M.} {\sc and} {\sc Nogin, A.} 2008.
\newblock On dynamic topological logic of the real line.
\newblock {\em Journal of Logic and Computation\/}~{\em 18,\/}~6, 1029--1045.
\newblock doi:10.1093/logcom/exn034.

\bibitem[\protect\citeauthoryear{Pacuit}{Pacuit}{2017}]{PacuitNeighborhood}
{\sc Pacuit, E.} 2017.
\newblock {\em Neighborhood Semantics for Modal Logic}.
\newblock Springer.

\bibitem[\protect\citeauthoryear{Pnueli}{Pnueli}{1977}]{Pnu77}
{\sc Pnueli, A.} 1977.
\newblock The temporal logic of programs.
\newblock In {\em 18th Annual Symposium on Foundations of Computer Science
  (sfcs 1977)}. 46--57.

\bibitem[\protect\citeauthoryear{Rasiowa and Sikorski}{Rasiowa and
  Sikorski}{1963}]{RasiowaMeta}
{\sc Rasiowa, H.} {\sc and} {\sc Sikorski, R.} 1963.
\newblock {\em The Mathematics of Metamathematics}.
\newblock Pa\'nstowowe Wydawnictwo Naukowe, Warsaw.

\bibitem[\protect\citeauthoryear{Simpson}{Simpson}{1994}]{Simpson94}
{\sc Simpson, A.~K.} 1994.
\newblock The proof theory and semantics of intuitionistic modal logic.
\newblock Ph.D. thesis, University of Edinburgh, {UK}.

\bibitem[\protect\citeauthoryear{Slavnov}{Slavnov}{2003}]{SlavnovCounterexamples}
{\sc Slavnov, S.} 2003.
\newblock {T}{R}-2003015: Two counterexamples in the logic of dynamic
  topological systems.
\newblock {\em CUNY Academic Works\/}.

\bibitem[\protect\citeauthoryear{Tarski}{Tarski}{1938}]{tarski}
{\sc Tarski, A.} 1938.
\newblock Der {A}ussagenkalk\"ul und die {T}opologie.
\newblock {\em Fundamenta Mathematica\/}~{\em 31}, 103--134.

\bibitem[\protect\citeauthoryear{Yuse and Igarashi}{Yuse and
  Igarashi}{2006}]{Yuse2006}
{\sc Yuse, Y.} {\sc and} {\sc Igarashi, A.} 2006.
\newblock A modal type system for multi-level generating extensions with
  persistent code.
\newblock In {\em Proceedings of the 8th ACM SIGPLAN International Conference
  on Principles and Practice of Declarative Programming}. PPDP '06. 201--212.

\end{thebibliography}

\end{document}